
\documentclass[journal]{IEEEtran}
\usepackage[cmex10]{amsmath}
\usepackage{amsmath}
\usepackage{multicol}
\usepackage{amsfonts}
\usepackage{amssymb}
\usepackage{graphicx}
\usepackage{cite}
\usepackage{euscript}
\newtheorem{example}{Example}[section]
\newtheorem{definition}{Definition}
\newtheorem{corollary}{Corollary}
\newtheorem{fact}{Fact}
\newtheorem{lem}{Lemma}
\newtheorem{theorem}{Theorem}

\begin{document}

\title{Constrained Multi-user Multi-server\\ Max-Min Fair Queuing$^*$\thanks{* Work in progress, version 1.}}
\author{Jalal~Khamse-Ashari, Ioannis~Lambadaris, Yiqiang Zhao\thanks{Khamse-Ashari and Lambadaris are with the SCE department, Zhao is with the School of Mathematics and Statistics, Carleton University, Ottawa, Canada.}}
\maketitle

\begin{abstract}
In this paper, a multi-user multi-server queuing system is studied in which each user is constrained to get service from a subset of servers. In the studied system, rate allocation in the sense of max-min fairness results in multi-level fair rates. To achieve such fair rates, we propose $CM^4FQ$ algorithm. In this algorithm users are chosen for service on a packet by packet basis. The priority of each user $i$ to be chosen at time $t$ is determined based on a parameter known as service tag (representing the amount of work counted for user $i$ till time $t$). Hence, a free server will choose to serve an eligible user with the minimum service tag. Based on such simple selection criterion, $CM^4FQ$ aims at guaranteed fair throughput for each demanding user without explicit knowledge of each server service rate. We argue that $CM^4FQ$ can be applied in a variety of practical queuing systems specially in mobile cloud computing architecture.
\end{abstract}

\begin{IEEEkeywords}
Fair Queuing, multi-level Max-Min Fairness, Packet scheduling, Mobile Cloud Computing
\end{IEEEkeywords}

\section{Introduction}
\PARstart{M}{ax-min} Fairness has been known as the most prevalent notion of fairness in the concept of resource-allocation. In allocating one type of resource to some demanding users, an allocation is said to be max-min fair, if it is feasible and an increase in the amount of allocated resource to any arbitrary user, results in decreasing the amount of allocated resource to some other user(s) with smaller or equal allocation \cite{DN}.

If the resource can be divided into any arbitrary parts and each user is eligible to use any part of the resource, max-min fair allocation results in equal shares for different users. When the resource comprises of only one indivisible part (one server), time sharing or ``scheduling" is the natural way for multiplexing. Considering a set of greedy users, max-min fairness could be defined in this case based on the users' long-term/average usage of the resource \cite{DN}.

When the resource is arbitrarily divisible, max-min fair allocation could be defined in terms of instantaneous usage. Specifically, consider resources such as bandwidth or CPU. When such resources are arbitrarily divisible, max-min fair allocation could be defined in terms of \emph{instantaneous rates}. This approach has been known as Weighted Fair Sharing (WFS). By definition, in weighted fair sharing equal weighted rates are attributed to demanding users at any time $t$ \cite{FS}.

The most important property of WFS is being memoryless, i.e., each user's current rate of service is independent of its rate of service in the past. This is specially important when users have intermittent demands. According to memoryless property, if user $i$ has not any demand of the resource during interval $(t_0,t_1)$, its share of service is not reserved \cite{FS}, \cite{GPS}. Instead, it could be divided among other demanding users. The memoryless property is central in guaranteeing that users will not be starved in a work-conserving system.

When the resource comprises of only one indivisible part (i.e., one server), WFS could be realized if users' demands are infinitesimally divisible, i.e. represented by \emph{fluid flow} \cite{FS, GPS}. When users' demand or ``input traffic" is packetized, WFS could be achieved only approximately through packet by packet scheduling schemes\cite{SCFQ}. Many Fair Queuing (FQ) algorithms have been proposed to approximate WFS on a packet by packet scheduling basis. Self-Clocked Fair Queuing (SCFQ) is the first practically implementable algorithm achieving a good approximation of WFS \cite{SCFQ}.

%When the resource comprises of multiple indivisible parts (servers) and any user is eligible to get service on each server, the situation will be the same as the case of single server. Specifically, WFS could be realized if users' demands are fluid flow. Also, most of single server FQ algorithms could be applied with slight modification for this case.

In this paper, a system model is considered in which one type of resource, like CPU or bandwidth, is available on multiple servers. In addition, it is assumed that each user is eligible to get service only from a subset of servers. Considering
such a \emph{constrained} multi-user multi-server queuing system, max-min fair rate allocation results in \emph{multi-level} fair rates.

As an example, consider the queuing system shown in Fig.\ref{fig2}. Assume traffic stream of users to be fluid flow. According to max-min fairness definition (mentioned in the beginning of the introduction), the max-min fair rate of user $c$ equals to $r_c(t)=1$ Mbps, while the max-min fair rate of users $a$ and $b$ equal to $r_a(t)=r_b(t)=1.25$ Mbps. In such allocation, the service rate of server $1$ is equally divided between users $a$ and $b$, while the total service rate of server $2$ is allocated to user $c$.

\begin{figure}
\centering
\includegraphics[width=2.5in]{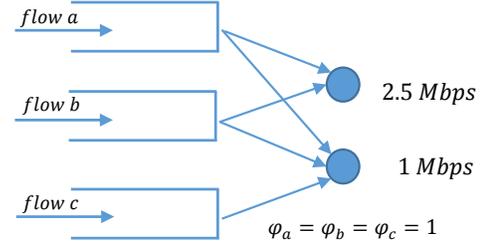}
\caption{A sample queuing system; Users $a$ and $b$ are eligible to get service from both servers, while user $c$ is only eligible to get service from server 2.}
\label{fig2}
\end{figure}

While there exists a rich literature on FQ, e.g., \cite{GPS}, \cite{SCFQ}, \cite{SFQ}, \cite{msFQ},\cite{Chandra00,Chandra01}, none of algorithms developed in the context of FQ is applicable for the case of constrained multi-user multi-server queuing system. Indeed, all existing FQ algorithms are designed to achieve equal fair rates among demanding users. For instance, while \cite{msFQ} has considered a multi-server queuing system, it has considered the case that any user is eligible to get service from \emph{any} server. As a consequent, in such fully connected system all users attain the same fair service rate. Hence, the proposed FQ algorithm in \cite{msFQ} is essentially the same as a single server FQ algorithms.

Regarding multi-level max-min FQ, the literature is very sparse. To the best of our knowledge, there is not any prior work which addresses multi-level max-min FQ problem in general. We have seen prior work in \cite{midrr} which has studied the same constrained multi-user multi-server queuing system. After presenting our proposed FQ algorithm and showing its generality, we will compare it with the results of \cite{midrr} in Section VII.

Considering the constrained multi-user multi-server queuing model, in this paper we propose a FQ algorithm which we call $CM^4FQ$\footnote{$CM^4FQ$ stands for Constrained Multi-user Multi-server Max-Min Fair Queuing.}. This algorithm is designed to achieve multi-level max-min fair rates. Like in single server FQ algorithms, in $CM^4FQ$ packets scheduling decisions are made simply based on the users' service tag. Service tag of each user $i$ represents the amount of work counted for user $i$ till time $t$. Given that server $k$ becomes free at time $t$, it will choose to serve the user with the minimum service tag among users eligible to get service on this server. Based on such simple selection criterion, $CM^4FQ$ aims at guaranteed fair throughput for each demanding user without explicit knowledge of each server service rate.

While $CM^4FQ$ addresses a general system model, it is very simple. Hence, it can be applicable in different scenarios. For example, in mobile cloud computing networks \cite{MCC}, multiple servers/machines are available for giving service to different applications/queues. Each server possibly has different amounts of resources, including CPU, RAM, etc. In correspondence with our queuing model, CPU could be considered as the scarce resource which is to be scheduled among users. On the other hand, each application possibly has different requirements of RAM, storage, etc., which may be available only on a subset of servers. These requirements could be reflected as certain constraints for each application to get service from different servers.

Another application conforming to the described system model, is related to smart mobile phones which have several interfaces for data access \cite{midrr}. Specifically, mobile phones may have data access via WiFi, 3G or 4G at the same time. On the other hand, there are certain restrictions for mobile applications to use different access techniques. For example, it may be preferred to stream video only over WiFi because it is cheaper, but to use the more expensive available data rate of 3G for VoIP. Reflecting such preferences as constraints on choosing different interfaces by each application, our proposed algorithm could be used for fair packet scheduling.

The rest of this paper is organized as follows. Basic definitions and system model are stated in section II. Our proposed FQ algorithm is described in section III. Section IV and V are devoted to characterizing properties and performance of the proposed FQ algorithm. Our analytical results are evaluated through numerical studies in section VI. After a brief review of related works in section VII, our conclusions are drawn in section VIII.

\section{Basic Definitions and System Model}
\subsection{System Model}
Consider a multi-user multi-server queuing system as shown in Fig. \ref{fig1} in which each user is constrained to get service from a subset of servers. Assume that $\cal{N}$$=\{1,2,...,N\}$ is the set of users and $\cal{K}$$=\{1,2,...,K\}$ is the set of servers. The service rate function of each server $k$ is $\rho^k(t)$ bits/sec. Let a binary matrix denoted by $\Pi=[\pi_{i,k}]$, determines the constraints of users to get service from different servers. Therefor, user $i$ is $\emph{eligible}$ to use server $k$ when $\pi_{i,k}=1$, and is not eligible otherwise. There exists at least a single $``1"$ element in each row and each column of the matrix $\Pi$, i.e., any user (server) is connected to at least one server (user).

To present the notion of max-min fairness in this scenario, first consider users' input traffic streams as fluid flow. Assuming an idealized fluid flow system, some basic definitions are presented in the next subsection.

\begin{figure}
\centering
\includegraphics[width=2in]{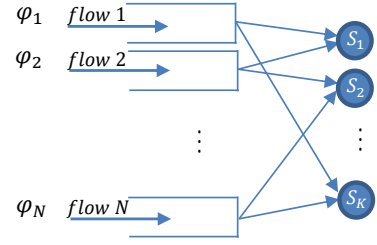}
\caption{System model (any user eligible to get service from a server is connected to that by an arrow)}
\label{fig1}
\end{figure}

\subsection{Basic Definitions}
%Before presenting basic definitions, lets consider some notation rules. By convention, parameters related to a user are discriminated by a subscript showing the index of the user. Parameters related to a sever are

\begin{definition}\label{def_1}
Any arbitrary user $i$ is said to be \emph{backlogged} at time $t$, if its corresponding queue is not empty at time $t^-$. The set of backlogged users at time $t$ is denoted by $B(t)$.
\end{definition}
\begin{definition}
In an idealized fluid flow system conforming to the system model of Fig. \ref{fig1}, let $r_{i,k}(t)$ be the service rate allocated to user $i$ by server $k$. The matrix $R(t)=[r_{i,k}(t)]_{N\times K}$ satisfying the following conditions is defined as a \emph{rate allocation matrix}.
\begin{itemize}
  \item[a)] If user $i$ is not backlogged at time $t$, then $r_{i,k}(t)=0$, $\forall k\in\cal{K}$.
  \item[b)] For any user $i\in \cal{N}$ and server $k\in\cal{K}$, if $\pi_{i,k}=0$ then $r_{i,k}(t)=0$.
  \item[c)] If at least one of users eligible to get service from server $k$ is backlogged at time $t$, then $\sum_{i\in\cal{N}}r_{i,k}(t)=\rho^k(t)$.
\end{itemize}
\end{definition}

The above conditions guarantee the queuing system to be work conserving. According to the rate allocation matrix $R(t)=[r_{i,k}(t)]$, the total service rate allocated to an arbitrary user $i$ at time $t$ is equal to $r_i(t)=\sum_{k\in\cal{K}}r_{i,k}(t)$.

Suppose that a positive wight $\varphi_i>1$ is associated to each user $i$. Using the weight associated to user $i$, the normalized service rate of user $i$ is defined as $r_i(t)/\varphi_i$.

\begin{definition}
A rate allocation mechanism is said to be \emph{memoryless} if the service rate of each user at time $t$ is independent of its service rate for all times prior to $t$, \cite{GPS}.
\end{definition}

\begin{definition}\emph{($CM^4$ fairness)}
A rate allocation matrix $R(t)$ is said to satisfy \emph{Constrained Multi-user Multi-server Max-Min ($CM^4$) fairness}, when increasing the normalized service rate of any user results in decreasing the normalized service rate of some other user(s) with smaller or equal normalized service rate.
\end{definition}

It should be mentioned that $CM^4$ fairness is just the general definition of max-min fairness which has been applied to the rate allocation in the constrained multi-user multi-server queuing system \cite{DN, midrr}. Since $CM^4$ fair rate allocation is defined in terms of instantaneous rates, it is obviously memoryless.

\begin{theorem}\label{th1}
Consider a fluid flow system conforming to the system model of Fig. \ref{fig1}. A rate allocation matrix $R(t)$ satisfies $CM^4$ fairness, if and only if statement (\ref{eq1}) holds for any backlogged user $j$:
\begin{equation}\label{eq1}
  \text{if }r_{i,k}(t)>0\text{ and }\pi_{j,k}=1 \Rightarrow \frac{r_j(t)}{\phi_j}\geq \frac{r_i(t)}{\phi_i}
\end{equation}
\end{theorem}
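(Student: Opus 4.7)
The plan is to prove both directions of the equivalence. For $(\Rightarrow)$, I would argue by contradiction: assume $R(t)$ is $CM^4$ fair but (\ref{eq1}) fails, so there exist a backlogged user $j$, a user $i$, and a server $k$ with $r_{i,k}(t)>0$, $\pi_{j,k}=1$, and $r_j/\phi_j<r_i/\phi_i$. Since $r_{i,k}>0$ forces $i\in B(t)$ and $\pi_{i,k}=1$, I would construct $R'(t)$ by shifting a small amount $\epsilon\in(0,r_{i,k}(t))$ of service from $i$ to $j$ on server $k$; a direct check shows conditions (a)--(c) are preserved. In $R'$ the normalized rate of $j$ increased, while the only user whose normalized rate decreased is $i$, whose pre-perturbation value $r_i/\phi_i$ strictly exceeds $r_j/\phi_j$---contradicting the $CM^4$ fairness definition.

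For $(\Leftarrow)$, suppose (\ref{eq1}) holds and let $R'(t)$ be any valid rate allocation with $r'_j(t)/\phi_j>r_j(t)/\phi_j$ for some user $j$ (necessarily $j\in B(t)$). I need to produce a user $i$ with $r_i/\phi_i\leq r_j/\phi_j$ and $r'_i<r_i$. Define
\[ A=\{i\in B(t): r_i(t)/\phi_i\leq r_j(t)/\phi_j\}, \qquad K_A=\{k:\sum_{i\in A}r_{i,k}(t)>0\}. \]
The heart of the argument is a pair of structural facts derived from (\ref{eq1}). First, on each $k\in K_A$, pick some $i^*\in A$ with $r_{i^*,k}>0$ and apply (\ref{eq1}) to any $i'$ with $r_{i',k}>0$ (using $\pi_{i^*,k}=1$): this gives $r_{i'}/\phi_{i'}\leq r_{i^*}/\phi_{i^*}\leq r_j/\phi_j$, so $i'\in A$; hence $\sum_{i\in A}r_{i,k}(t)=\rho^k(t)$ on every $k\in K_A$. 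Second, no user in $A$ is eligible on any server $k\notin K_A$ with $\rho^k(t)>0$: if $i\in A$ had $\pi_{i,k}=1$, condition (c) would force some $i''$ with $r_{i'',k}>0$, and (\ref{eq1}) applied to $i$ would place $i''$ in $A$, putting $k\in K_A$---a contradiction.

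Combining these facts, $\sum_{i\in A}r_i(t)=\sum_{k\in K_A}\rho^k(t)$, and by condition (b) plus the second fact, users in $A$ receive service only on servers in $K_A$ in $R'(t)$ as well; hence $\sum_{i\in A}r'_i(t)\leq\sum_{k\in K_A}\rho^k(t)=\sum_{i\in A}r_i(t)$. If no $i\in A$ satisfied $r'_i<r_i$, then $r'_i\geq r_i$ for every $i\in A$ with strict inequality at $j\in A$, forcing $\sum_{i\in A}r'_i>\sum_{i\in A}r_i$---a contradiction. Therefore some $i\in A$ has $r'_i<r_i$, equivalently $r'_i/\phi_i<r_i/\phi_i$, which is the user required. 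The main obstacle is the pair of structural claims in $(\Leftarrow)$---especially the second, which crucially uses work-conservation (condition (c)) together with (\ref{eq1}); the forward direction is a short local perturbation.
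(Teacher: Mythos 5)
Your proof is correct. Note that the paper itself gives no proof of Theorem~\ref{th1} --- it explicitly defers to Theorem~2 of \cite{midrr} --- so there is nothing in-text to compare against; judged on its own, your argument is the standard bottleneck-set argument for max-min fairness and it is complete. The forward direction's local $\epsilon$-shift is exactly the right perturbation, and the backward direction correctly identifies the two structural facts that make the counting work: that (\ref{eq1}) closes the set $A=\{i\in B(t): r_i/\phi_i\le r_j/\phi_j\}$ under ``receives service on a server that serves some member of $A$,'' and that work conservation (condition (c)) together with (\ref{eq1}) forbids members of $A$ from being eligible on active servers outside $K_A$. Two small points worth making explicit if you write this up: (i) the step $\sum_{i\in A}r'_i(t)\le\sum_{k\in K_A}\rho^k(t)$ uses the capacity constraint $\sum_i r'_{i,k}(t)\le\rho^k(t)$, which is implicit in the notion of a feasible rate allocation but is not literally one of conditions (a)--(c) in Definition~2, so you should state that you are assuming it; and (ii) in applying (\ref{eq1}) you repeatedly use that $r_{i,k}(t)>0$ forces both $i\in B(t)$ and $\pi_{i,k}=1$ via conditions (a) and (b) --- you do invoke this in the forward direction, and it is also what licenses instantiating the quantifier ``for any backlogged user $j$'' at $i^*$ in Fact~1. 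Neither point is a gap, only a matter of making implicit hypotheses visible.
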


Theorem \ref{th1}, that is a concise statement of Theorem 2 in \cite{midrr}, offers a necessary and sufficient condition on $R(t)$ to be fair in conjunction with $CM^4$ fairness definition. Since the proof follows the same line of arguments as what provided for Theorem 2 in \cite{midrr}, it is omitted here. To get a better intuition on the concept of $CM^4$ fairness, we consider an example.

\begin{example}
Consider the queuing system shown in Fig. \ref{fig}. Assume traffic stream of users to be fluid flow. According to $CM^4$ fairness definition, the fair service rate of user $d$ equals to 1 Mbps, while the fair service rate of users $a$, $b$ and $c$ equals to 1.2 Mbps. To achieve such fair service rates, the total service rate of server 3 should be allocated to user $d$. However, there exist several ways for allocating the service rate of server 1 and server 2 to users $a$, $b$ and $c$,  while each of them gets a service rate of 1.2 Mbps.

The reader can verify that in this example there are more than one rate allocation matrices satisfying $CM^4$ fairness.
However, all of such matrices share some common properties. First, in all cases users $a$, $b$ and $c$ get the same service rate. Second, in all of them users with higher service rate are not given service on the server allocated to the user with lower service rate. Based on these observations, the set of users and servers could be classified into two clusters (Fig. \ref{fig}), satisfying the following conditions. I) Users in each cluster have the same fair service rates. II) All users in a cluster are only given service by servers in that cluster.
In the following, the concept of clustering is defined in general for any arbitrary configuration of users and servers.
\begin{figure}
\centering
\includegraphics[width=3.2in]{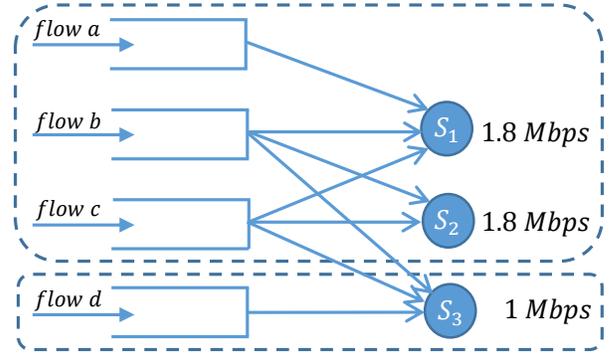}
\caption{A sample queuing system, the weight of all users are assumed to be equal. Here there are two clusters, $C_1=(\{a,b,c\},\{s_1,s_2\})$, $C_2=(\{d\},\{s_3\})$. The normalized service rate of users in cluster 1 is $r^{C_1}=1.2$ Mbps, and the normalized service rate of user $d$ in cluster 2 is $r^{C_2}=1$ Mbps.}
\label{fig}
\end{figure}
\end{example}

%the notion of Fairness Oriented Clustering (FOC) is presented in the following.
A cluster $C(t)$ is defined as an ordered pair $C(t)=(I(t),S(t))$, where $I(t)$ and $S(t)$ are non-empty subset of users and non-empty subset of servers, respectively. Suppose that all users are partitioned into $M$ disjoint subsets $I_m(t), 1\le m\le M$ and all servers are partitioned into the same number $M$ disjoint subsets $S_m(t), 1\le m\le M$. A clustering consisting of $M$ clusters is defined as $\{C_m(t)\}_{m=1}^M$, where $C_m(t)=(I_m(t),S_m(t))$.

\begin{definition}\label{def_FOC}
Suppose the rate allocation matrix $R(t)$ satisfies $CM^4$ fairness when $B(t)=B$ is the set of backlogged users at time $t$. Clustering $\{C_m(t)\}_{m=1}^M$ is said to be a \emph{Fairness Oriented Clustering} (FOC) in conjunction with $B$ if:

\begin{itemize}
  \item[a)] For any cluster $C_m(t)$, any user $i\in I_m(t)$ has the same normalized service rate, $r_i(t)/\varphi_i=r^{C_m}(t)$. We define $r^{C_m}(t)$ as the cluster service rate.
  \item[b)] For any two distinct clusters $C_m(t)$ and $C_l(t)$, $m\ne l$, $r^{C_m}(t)\neq r^{C_l}(t)$.
  \item[c)] Server $k$ belongs to cluster $C_m(t)$ with $r^{C_m}(t)>0$, if and only if $r_{i,k}(t)>0$ for some user $i\in I_m(t)$.
\end{itemize}
\end{definition}

\begin{lem}\label{lem1}
For given set of backlogged users and service rates of servers, there exists a unique FOC.
\end{lem}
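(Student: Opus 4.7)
The plan is to fix an arbitrary $CM^4$-fair rate allocation matrix $R(t)$ (one exists for the given data $(B,\{\rho^k(t)\})$), build an FOC from it, and then verify that the resulting partition is forced upon us by $(B,\{\rho^k(t)\})$ alone and so independent of the particular $R(t)$ chosen. Existence thereby follows from the construction and uniqueness from the invariance.

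For the construction, let $v_1 < v_2 < \cdots < v_M$ be the distinct values attained by $r_i(t)/\varphi_i$ over $i \in \mathcal{N}$, and set $I_m = \{i \in \mathcal{N} : r_i(t)/\varphi_i = v_m\}$. The main observation for the servers is that whenever $r_{i_1,k}(t) > 0$ and $r_{i_2,k}(t) > 0$ for two backlogged users, applying Theorem~\ref{th1} twice (once with $j = i_1$ and once with $j = i_2$) forces $r_{i_1}/\varphi_{i_1} = r_{i_2}/\varphi_{i_2}$; hence each server either serves no one or serves only users of a single $I_m$, and we place it in the corresponding $S_m$ (or in the zero-rate cluster, respectively). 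A direct check of conditions (a)--(c) of Definition~\ref{def_FOC} with $r^{C_m}(t) = v_m$ then closes the existence half.

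For uniqueness I will argue first that the vector of normalized rates is invariant across all $CM^4$-fair allocations, and second that the server assignments are then also forced. The first is the classical coordinatewise-uniqueness property of max-min fairness on the convex polytope of feasible rate vectors induced by the rate-allocation-matrix conditions, and it pins down the user partition $\{I_m\}$ and the cluster rates $v_m$. For the second, suppose two fair allocations $R$ and $R'$ place some server $k$ in $S_m$ and $S_{m'}$ respectively with $m \neq m'$; pick $i \in I_m$ with $r_{i,k}(t) > 0$ and $i' \in I_{m'}$ with $r'_{i',k}(t) > 0$. Then $i$ and $i'$ are both eligible on $k$ and both backlogged under both allocations, so applying Theorem~\ref{th1} to $R$ with $j = i'$ yields $v_{m'} \geq v_m$, while applying it to $R'$ with $j = i$ yields $v_m \geq v_{m'}$, contradicting $v_m \neq v_{m'}$.

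The main obstacle is establishing the coordinatewise uniqueness of the normalized rate vector in this constrained multi-server setting. The cleanest route is a peeling/water-filling induction: identify the set of backlogged users achieving the smallest normalized rate together with the servers entirely committed to them, use Theorem~\ref{th1} together with the work-conserving condition on the rate allocation matrix to show that this rate and these two sets depend only on $(B,\{\rho^k(t)\})$ and not on the particular $CM^4$-fair allocation witnessing them, then remove them and recurse on the residual constrained subsystem.
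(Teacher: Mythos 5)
Your proposal takes essentially the same route as the paper's proof: both construct the clustering from an arbitrary $CM^4$-fair allocation by grouping users according to their (coordinatewise-unique, per the classical max-min property cited from \cite{DN}) normalized rates and assigning each server to the single rate level it serves, with Theorem~\ref{th1} ruling out a server serving two levels. The paper likewise does not re-derive the coordinatewise uniqueness you flag as the main obstacle, so your treatment of that step is no less complete than the original.
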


For a given set of backlogged users at time $t$, $B(t)=B$, it may be possible that more than one rate allocation matrices satisfying $CM^4$ fairness exist. However, Lemma \ref{lem1} implies that all of them result in the same unique FOC. The proof of Lemma \ref{lem1} can be found in the appendix. A similar clustering method referred to as \emph{rate clustering} is presented in \cite{midrr}. Indeed, the FOC satisfies the \emph{rate clustering property}. However, the rate clustering does not result in a unique clustering for a given set of backlogged users.

\begin{fact}\label{cor1}
For given service rate of servers, FOC at time $t_0$ could be viewed as a function of the set of backlogged users $B_0=B(t_0)$, i.e., $\{C_m(B_0)\}_{m=1}^M$.
\end{fact}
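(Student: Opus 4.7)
The plan is to derive Fact \ref{cor1} as an immediate corollary of Lemma \ref{lem1}. Since Lemma \ref{lem1} asserts the existence of a \emph{unique} FOC for any given pair (set of backlogged users, vector of server service rates), once we hold the service rates fixed the mapping $B_0 \mapsto \{C_m(t_0)\}_{m=1}^M$ is automatically well-defined, and this is exactly what the fact is claiming.

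Concretely, first I would fix the service rates $\rho^1(t_0),\ldots,\rho^K(t_0)$. Given any candidate backlogged set $B_0 \subseteq \mathcal{N}$, Definition \ref{def_FOC} together with Lemma \ref{lem1} associate to $B_0$ exactly one clustering $\{C_m(t_0)\}_{m=1}^M$. A rule that assigns to each input a unique output is, by definition, a function, so it is legitimate to write $\{C_m(B_0)\}_{m=1}^M$ in order to emphasize this functional dependence on the backlogged set.

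The only subtlety worth highlighting — and hence the main thing to check — is that no other aspect of the system state at $t_0$ (residual packet sizes, arrival history, or the particular rate allocation matrix $R(t_0)$ chosen in case more than one $CM^4$-fair allocation exists) can alter the clustering. This is precisely the content of Lemma \ref{lem1}: the FOC is determined by the pair (backlogged set, service rates) alone, and therefore after fixing the service rates it depends solely on $B_0$. Since the substantive work was already done in proving Lemma \ref{lem1}, I do not anticipate any further obstacle; the statement is essentially a reformulation emphasizing the functional viewpoint that will be convenient in later sections.
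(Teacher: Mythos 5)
Your proposal is correct and matches the paper's (implicit) reasoning exactly: the Fact is stated as an immediate consequence of Lemma \ref{lem1}, whose uniqueness guarantee makes the map $B_0 \mapsto \{C_m(B_0)\}_{m=1}^M$ well-defined once the server rates are fixed. No further argument is given or needed in the paper.
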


It is straightforward to show that for any two clusters $m,l$ such that $r^{C_m}>r^{C_l}$, any user $i$ in $C_l$ can not be eligible for getting service from servers $k$ in cluster $C_m$ (i.e., $\pi_{i,k}=0$). This is formally stated in the following corollary.

\begin{corollary}\label{cor1.1}
Let $\{C_m(t)\}_{m=1}^M$ be the FOC which corresponds to the set of backlogged users $B=B(t)$. If some user $j\in I_m(t)$ with positive service rate is eligible to get service from server $k$ in another cluster, $\pi_{j,k}=1$ and $k\in S_l(t)$, then it follows that $r^{C_m}(t)>r^{C_l}(t)$.
\end{corollary}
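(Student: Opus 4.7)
The plan is to deduce the strict inequality directly from Theorem \ref{th1} together with the three defining properties of a FOC. I would begin by recording the trivial case first: if $r^{C_l}(t)=0$, then since user $j$ has positive service rate and lies in $I_m(t)$, the cluster rate satisfies $r^{C_m}(t)=r_j(t)/\varphi_j>0$, and we are immediately done. The substantive case is $r^{C_l}(t)>0$.

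For that case, I would first use part (c) of Definition \ref{def_FOC}: since $k\in S_l(t)$ and $r^{C_l}(t)>0$, there exists some user $i\in I_l(t)$ with $r_{i,k}(t)>0$. Next I would invoke Theorem \ref{th1}. User $j$ is backlogged (it has positive service rate), $r_{i,k}(t)>0$ holds by the previous sentence, and by hypothesis $\pi_{j,k}=1$. Therefore the necessary condition from Theorem \ref{th1} yields
\begin{equation*}
  \frac{r_j(t)}{\varphi_j}\;\ge\;\frac{r_i(t)}{\varphi_i}.
\end{equation*}
By part (a) of Definition \ref{def_FOC}, the left-hand side equals $r^{C_m}(t)$ and the right-hand side equals $r^{C_l}(t)$, so $r^{C_m}(t)\ge r^{C_l}(t)$.

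To upgrade the inequality to a strict one, I would appeal to part (b) of Definition \ref{def_FOC}, which guarantees that distinct clusters have distinct rates; combined with $m\ne l$ this forces $r^{C_m}(t)\ne r^{C_l}(t)$, and together with the weak inequality above we obtain $r^{C_m}(t)>r^{C_l}(t)$, as claimed.

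I do not anticipate any real obstacle: the corollary is essentially a packaging of Theorem \ref{th1} in the language of the FOC. The only subtlety worth flagging explicitly is the separate handling of the zero-rate cluster case, since part (c) of Definition \ref{def_FOC} only characterizes server membership for clusters with strictly positive rate, so one cannot immediately extract a witness $i\in I_l(t)$ with $r_{i,k}(t)>0$ without first ruling out $r^{C_l}(t)=0$.
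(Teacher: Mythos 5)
Your proof is correct, and it is exactly the argument the paper has in mind: the paper omits the proof (labeling the claim ``straightforward''), and the intended route is precisely to extract a witness $i\in I_l(t)$ with $r_{i,k}(t)>0$ via part (c) of Definition \ref{def_FOC}, apply Theorem \ref{th1} to get the weak inequality, and sharpen it with part (b). Your explicit handling of the $r^{C_l}(t)=0$ case is a worthwhile addition, since part (c) indeed only supplies the witness for clusters of positive rate.
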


%Corollary \ref{cor1.1} is concluded directly from theorem \ref{th1} when considering the FOC at time $t$. It implicates that when servers eligible to give service to user $i$ lie in different clusters, user $i$ belongs to the cluster with the maximum service rate.

\section{FQ Algorithm}
An FQ algorithm that achieves $CM^4$ fairness is proposed in this section. Although in the introductory material presented thus far we mainly focused on (idealized) fluid flow system, the algorithm that we will develop operates on packet by packet basis. In particular, when a server becomes free, the algorithm decides on which packet to be served by that server, while respecting users service constraints (determined by matrix $\Pi$). Any packet chosen to get service from a server, gets all of its service from that server without any interruption.

Since the algorithm works on a packet-by-packet basis, each user could achieve its fair service rate in an ``approximate" fashion. More specifically, consider an interval $(t_1,t_2)$ during which the set of backlogged users does not change. According to $CM^4$ fairness definition, for the same set of backlogged users in an idealized fluid flow system, each backlogged user has a specific fair service rate, and therefor each backlogged user has a specific fair service (work) share for that interval. In Section V, it is shown that the difference between the amount of work offered in $[t_1,t_2)$ to any backlogged user by $CM^4FQ$ algorithm in the actual system vs its fair share within $[t_1,t_2)$ in an idealized fluid flow system with the same set of backlogged users remains bounded.

%One major advantage of the proposed FQ algorithm is its simple criterion to select users for service. Any time one server becomes free, among users eligible to get service from that server, it will choose to serve the user with the minimum service tag. Since the algorithm works without explicit knowledge of each server service rate, it could be used even when servers have time varying service rates. However, for the sake of simplifying the presentation, it is assumed that service rate of each server $k$ is time invariant, i.e., $\rho^k(t)=\rho^k$ for all $t$.
Before presenting the algorithm, we introduce some notation along with assumptions.

\subsection{Notation and Assumptions}
\begin{itemize}
  \item It is assumed that each server can hold only one packet under service. When user $i$ is chosen for service by server $k$, its head of queue packet is moved to the server. The server keeps the packet till its service is completed.
  \item It is assumed that service rate of each server $k$ is time invariant\footnote{Since the proposed algorithm works without explicit knowledge of each server service rate, it could be used even when servers have time varying service rates. However, for the sake of simplifying the presentation we make such assumption.}, i.e., $\rho^k(t)=\rho^k$ for all $t$.
  \item According to Definition \ref{def_1}, a user is said to be backlogged at time $t$, if its corresponding queue is not empty at time $t^-$. Hence, a user with empty queue and some packets under service is \emph{not} considered to be backlogged.
  \item The set of backlogged users at time $t$ which are eligible to get service from server $k$ is denoted by $B^k(t)$:
  \begin{equation}\label{eq2}
    B^k(t)\triangleq\{i\mid i\in B(t)\text{ and }\pi_{i,k}=1\}.
  \end{equation}
  Server $k$ is said to be backlogged at time $t$, if $B^k(t)\ne\emptyset$.
  \item All functions of time in this paper are assumed to be left continuous, i.e.,: $f(t^-)=f(t)$.
  \item For any scalar function $f(t)$, $f(t_1,t_2)$ is defined as $f(t_1,t_2)=f(t_2)-f(t_1)$.
  \item It is assumed that packets of each user have a maximum length of $L^{max}$ bits. The length of each packet in bits is considered as the required work of that packet.
\end{itemize}

\subsection{The $CM^4FQ$ Algorithm}\label{alg}
$CM^4FQ$ algorithm belongs to the class of FQ algorithms. In such algorithms, the priority of each user $i$ to be chosen at time $t$ is determined based on a parameter, $F_i(t)$, which is computed through the algorithm and will be referred as \emph{service tag}. $CM^4FQ$ assigns an individual service tag to each user\footnote{In correspondence with FQ algorithms like SFQ or SCFQ \cite{SFQ,SCFQ}, $F_i(t)$ represents ``Finish Service Tag" of the last packet of user $i$ which has been chosen to get service before time $t$ From another perspective, for user $i$ being backlogged at time $t$, $F_i(t)$ represents ``Start Service Tag" of its head of queue packet.}; $F_i(t)$ represents the amount of work counted for user $i$ through the execution of the algorithm till time $t$. $F_i(0)$ is initially set as zero for all users. At time $t$ a free server will choose to serve an eligible user with the minimum service tag.

For any server $k$ a function $V^k(t)$ is defined to represent the \emph{work level} of server $k$ at time $t$. For server $k$ with nonempty set of backlogged users ($B^k(t)\ne\emptyset$), $V^k(t)$ will be set to the minimum service tag of users $i\in B^k(t)$. For the case where the set $B^k(t)$ is empty, $V^k(t)$ will be set to $\infty$. Therefor, $V^k(t)$ is defined by:

\begin{equation}\label{eq_V}
 V^k(t)\triangleq\left \{ \begin{array}{l}
\min\{F_j(t)\mid j\in B^k(t)\}\quad \text{if }B^k(t)\ne\emptyset\\
\infty \qquad\qquad\qquad\qquad\qquad\text{otherwise}
 \end{array}\right.
\end{equation}

Intuitively, $V^k(t)$ gives a measure of work progress in the system till time $t$ from the perspective of server $k$. Work level of servers at time $t$, will be used in updating the service tag of an idle user which becomes backlogged at time $t$.

We will now present the algorithm which is summarized in table \ref{table1}. In order to get intuition and understand its de-queuing mechanism, we assume that server $k$ becomes free at time $t$. Among backlogged users eligible to get service from this server, the head of queue packet of user $i^*$ with the minimum of $F_i(t)$ is chosen for service (as stated in \eqref{eq5}). In case of a tie, the user with the minimum index is chosen. Given that packet $p$ of user $i^*$ is chosen for service, $F_{i^*}(t^+)$ is incremented by $L_p/\varphi_{i^*}$ (as stated in \eqref{eq6}), where $L_p$ is the length of packet $p$ in bits, and $\varphi_{i^*}$ as stated previously represents the weight for user $i^*$ \footnote{For the sake of simplicity, the time index for all parameters will be omitted in the pseudo-code representation of the algorithm.}.

\begin{table}
\footnotesize
\caption{The pseudo code of $CM^4FQ$ Algorithm}
\label{table1}
%\centering
\begin{tabular}{p{8.48cm}}
\hline\noalign{\smallskip}
\bfseries   $CM^4FQ$ Algorithm\\
\hline\noalign{\smallskip}
\begin{itemize}
  \item \textbf{Initialization}
  \begin{eqnarray}
    \nonumber
    &&\text{For }(i=1;\text{ }i\le N;\text{ }i++)\qquad\qquad\qquad\qquad\qquad\qquad\\ \nonumber
    &&\qquad F_i=D_i=0;\\ \nonumber
    &&\text{For }(k=1;\text{ }k\le K;\text{ }k++)\\ \nonumber
    &&\qquad V^k=D^k=0;
  \end{eqnarray}
  \item \textbf{En-queuing Module}: On arrival of packet $p$ of user $i$
  \begin{eqnarray}
      \nonumber
      &&\text{If }(Empty(Queue_i))\\\nonumber
      &&\qquad {Activate\text{-}Servers}(i)\\
      &&\qquad F_i=\max\{F_i,\max\{V^k\mid\pi_{i,k}=1\}\};\label{eq4}\\\nonumber
      &&Enqueue(i,p);\\\nonumber
      &&\text{//Enqueue packet $p$ to queue of user $i$ and update $B$}
  \end{eqnarray}
  \item \textbf{De-queuing Module(server k)}
  \begin{eqnarray}
  % \nonumber to remove numbering (before each equation)
  \nonumber &&\text{While}(B^k\neq\emptyset)\\
    &&\qquad i^*\leftarrow\text{argmin}_i\{F_i\mid i\in B^k\};\label{eq5}\\\nonumber
    &&\qquad p=Dequeue(Head(Queue_{i^*}));\\\nonumber
    &&\qquad\text{//Dequeue head of queue of user $i$ and update $B$}\\\nonumber
    &&\qquad L_p=Size(p);\\
    &&\qquad F_{i^*}=F_{i^*}+L_p/\varphi_{i^*};\label{eq6}\\\nonumber
    &&\qquad {UpdateV}(i^*,k);\\\nonumber
    &&\qquad Service(k,p); \text{  //Give service to packet $p$ on server $k$}
  \end{eqnarray}
\end{itemize}\\
%\hline\noalign{\smallskip}
%\bfseries   $UpdateV(i,k)$ Function\\
%\hline\noalign{\smallskip}
%\\
\noalign{\smallskip}\hline
\end{tabular}
\end{table}

\begin{table}
\footnotesize
\caption{The subroutines of $CM^4FQ$ Algorithm}
\label{table2}
\begin{tabular}{p{8.48cm}}
    \hline\noalign{\smallskip}
    \begin{itemize}
\item \textbf{ActivateServers$(i)$}
    \begin{eqnarray}
        \nonumber
        &&\qquad V^0=0;\\\nonumber
        &&\qquad\text{If }(\{k \mid V^k<\infty\}\ne\emptyset)\\
        &&\qquad\qquad V^0=\max\{V^k\mid V^k<\infty\}+\delta;\\\label{eq3_0}\nonumber
        &&\qquad\text{For }(l=1;\text{ }l\le K\text{ \& } \pi_{i,l}=1\text{ \& }V^l=\infty;\text{ }l++) \\
        &&\qquad\qquad V^l=V^0\label{eq3_1};
    \end{eqnarray}
\item \textbf{UpdateV$(i,k)$}
    \begin{eqnarray}
% \nonumber to remove numbering (before each equation)
  \nonumber
  && \text{//updating work level of servers}\\\nonumber
  && \hat{V}^{k}=V^{k};\\ \nonumber
  && \text{For }(l=1;\text{ }l\le K\text{ \& }\pi_{i,l}=1;\text{ }l++) \\
  && \qquad V^l=\infty;\label{eq8_1}\\\nonumber
  && \qquad \text{If }(B^l\ne\emptyset)\\
  && \qquad\qquad V^l=\min\{F_j\mid j\in B^l\}\label{eq8_2}\\\nonumber
  &&\text{//regulating the gap among work level of servers}\\\nonumber
  && d^k=0;\\\nonumber
  &&\text{If }(\{l\mid V^l<\hat{V}^k\}\ne\emptyset)\\\nonumber
  &&\qquad d^k=\min\{V^l\mid \hat{V}^k\leq V^l\}\\
  &&\qquad\qquad\!   -\max\{V^l\mid V^l<\hat{V}^k\}-\delta;\,\label{eq9_0}\\\nonumber
  %&& \qquad\quad\sup\{V^l\mid V^l<\hat{V}^k\}-\delta;\\\nonumber
  && \text{If }(d^k>0)\\ \nonumber
  && \qquad\text{For }(j=1;\text{ }j\le N\text{ \& }F_j\geq\hat{V}^k;\text{ }j++)\\
  && \qquad\qquad F_j = F_j-d^k \label{eq9_1};\\
  && \qquad\qquad D_j = D_j+d^k \label{eq9_2};\\ \nonumber
  && \qquad\text{For }(l=1;\text{ }l\le K\text{ \& }V^l\geq\hat{V}^k;\text{ }l++)\\
  && \qquad\qquad V^l = V^l-d^k \label{eq9_3};\\
  && \qquad\qquad D^l = D^l+d^k\label{eq9_4};
\end{eqnarray}
\end{itemize}\\
\noalign{\smallskip}\hline
\end{tabular}
\end{table}

If user $i^*$ is chosen for service by server $k$, its head of queue packet is removed and goes to the server. This is the task of $Dequeue(\cdot)$ function shown in table \ref{table1}. It also checks the queue of user $i^*$ to determine whether it is backlogged. If user $i^*$ is not backlogged anymore, it updates the set of backlogged users.

Upon selection of packet $p$ of user $i^*$ for service, its service tag is incremented by $L_p/\varphi_{i^*}$. Taking this into account, the work level of servers eligible to give service to user $i^*$ should be updated (This is performed by (\ref{eq8_1}) and (\ref{eq8_2}) in $UpdateV(i^*,k)$ subroutine). Therefor, at any time during the execution of the algorithm, $V^k$ represents work progress from server's $k$ perspective.

%For $CM^4FQ$ algorithm to achieve $CM^4$ fairness, it needs to be memoryless. According to memoryless property, when a user is not backlogged for a period of time, its share of service of that period is not reserved. Indeed, it is divided among backlogged users. On this basis, any user which becomes backlogged after a period of not being backlogged should not claim for its share of service of that period. In this way, any user will not be penalized for having had much service rate in the past when fewer users were backlogged.

To understand the en-queuing mechanism (En-queuing module in the algorithm of Table \ref{table1}), assume that user $i$ becomes backlogged after a period of idleness. In order to achieve the memoryless behaviour, user $i$ should not claim its share of service of the idle period from none of the servers. Therefor, $F_i$ is updated to $\max\{F_i,\max\{V^k\mid\pi_{i,k}=1\}\}$, which is greater than or equal to $V^k$, for any server $k$ that is eligible to give service to user $i$, (\eqref{eq4} in Table \ref{table1}).

%Setting $F_i$ in this way is also in complete coincidence with corollary \ref{cor1.1}. According to corollary \ref{cor1.1} if servers eligible to give service to user $i$ lie in different clusters (in an idealized fluid flow system), user $i$ should only get service from servers belonging to the cluster with the maximum service rate. Proper initialization of $F_i$, could prevent user $i$ getting service from other servers.

To provide a better understanding of the algorithm, we consider some illustrative examples. In these examples, we assume that $UpdateV(i^*,k)$ subroutine is restricted to updates in (\ref{eq8_1}) and (\ref{eq8_2}) and other updates in it are not implemented. The resultant algorithm will be referred to as \emph{reduced $CM^4FQ$} algorithm.

The first example describes the role of $ActivateServers(\cdot)$ subroutine. The second example examines the role of \eqref{eq4} by which $F_i$ is updated for a returning customer $i$. The third example, reveals the shortage of the reduced $CM^4fQ$ algorithms to achieve fairness and shows the necessity for additional updates (i.e. \eqref{eq9_0}-\eqref{eq9_4}) in $UpdateV(i,k)$ subroutine of Table \ref{table2}.

\begin{example}\label{exmp3}
Consider again the queuing system shown in Fig. \ref{fig2}. Assume that input traffic stream of users comprises of packets of fixed length $L=1000$ bits. User $c$ is assumed to be continuously backlogged since $t=0$.
It is assumed that 13 packets already exist in the queue of each user $a$ and $b$ at time $t=0$. It is assumed that no packet arrives at these queues during $[0,t_1)$, till $t_1=20$ msec. Users $a$ and $b$ are assumed to become backlogged at $t_1^+$ and remains backlogged afterwards.

Applying the reduced $CM^4FQ$ algorithm, it can be observed that all existing packets in the queue of users $a$ and $b$ at $t=0$ are given service during interval $[0,t_0)$, where $t_0=10$ msec. In fact, the last existing packet in the queue of user $b$ is chosen by server 1 at $9.6$ msec. Hence, $F_a$ and $F_b$ remain constant during $(9.6,20)$ msec. On the other hand, $F_c$ increases with an average rate of 1 Mbps. Hence, $F_c$ is greater than $F_a$ and $F_b$ at time $t_1$ (as shown in Fig. \ref{fig_exmp3}).

At time 9.6 msec that the set of backlogged users connected to server 1, $B^1$ becomes empty, $V^1$ is set to $\infty$ (as stated in \ref{eq8_1}). The role of $ActivateServers(\cdot)$ subroutine is to reset the work level of this server to a proper finite value as soon as it becomes backlogged. Given that user $a$ and user $b$ become backlogged at $t_1^+$, $V^1$ will be updated to $V^2+\delta$ (\eqref{eq3_0} and (\ref{eq3_1}) in $ActivateServers(\cdot)$ subroutine), where $\delta$ is assumed to be $\delta=2L$.

It can be observed that by updating $V^1$ in this way and by choosing $\delta=2L$, no packet of user $a$ or user $b$ is serviced on server 2 from time $t_1$ onwards (Fig. \ref{fig_exmp3}). In contrast, if such an update were not performed or if $\delta$ were chosen equal to $L^{max}$ (or some smaller amount), it would be possible for users $a$ and $b$ to have some packets serviced on server 2.

\begin{figure}
\centering
\includegraphics[width=3.3in]{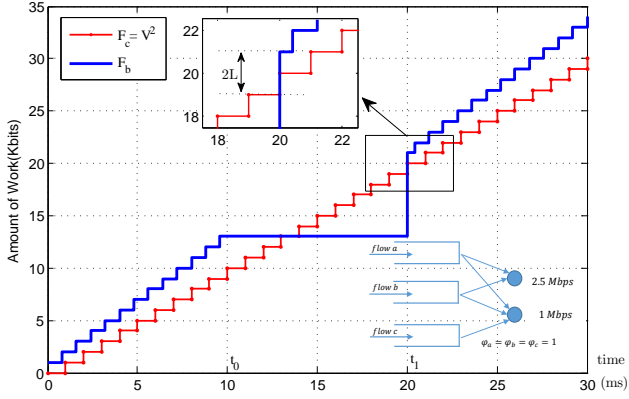}
\caption{Service tag of users $b$ and $c$, $F_b(t)$ and $F_c(t)$ in the network of Fig. \ref{fig2}. Except the interval $(9.6,20]$ msec, users $a$ and $b$ are backlogged elsewhere. Users $c$, is continuously backlogged since $t=0$ (Example \ref{exmp3}).}
\label{fig_exmp3}
\end{figure}
\end{example}

\begin{example}\label{exmp3.1}
Consider again the queuing system shown in Fig. \ref{fig2}. For simplicity assume $L^{max}$, the maximum length of packets, to be infinitesimal (i.e., we consider fluid flow arrivals). Let users $b$ and $c$ be continuously backlogged since time $t=0$. User $a$ becomes backlogged at $t_0 = 100$ msec, and remains backlogged afterwards.
Applying the reduced $CM^4FQ$ algorithm, during interval $(0,t_0)$ user $b$ gets a service rate of 2.5 Mbps from server 1 and user $c$ gets a service rate of 1 Mbps from server 2. Work level of servers are shown in Fig. \ref{fig_exmp1} as function of time.

When user $a$ becomes backlogged at $t_0$, $F_a(t_0^+)$ will be updated from 0 to $\max\{V^k(t_0)\mid\pi_{a,k}=1\}$, (as stated in \eqref{eq4} and shown in Fig. \ref{fig_exmp1}). This way of updating has some consequences. First, $F_a(t_0^+)$ becomes greater than or equal to $V^k(t_0)$, for any server $k$ that is eligible to give service to user $a$. Hence, user $a$ does not claim its share of service of the idle period from non of the servers (memoryless property).

Secondly, user $a$ joins to the cluster with the maximum service rate available to it, i.e., it only gets service from servers 1. Sharing the total service rate of servers 1 between users $a$, $b$, each of them gets a service rate of 1.25 Mbps. On the other hand, user $c$ gets 1 Mbps from server 2. It could be observed that by applying the reduced $CM^4FQ$ algorithm in this case, $CM^4$ fair service rates have been provided for all users.
\end{example}

\begin{figure}
\centering
\includegraphics[width=3.3in]{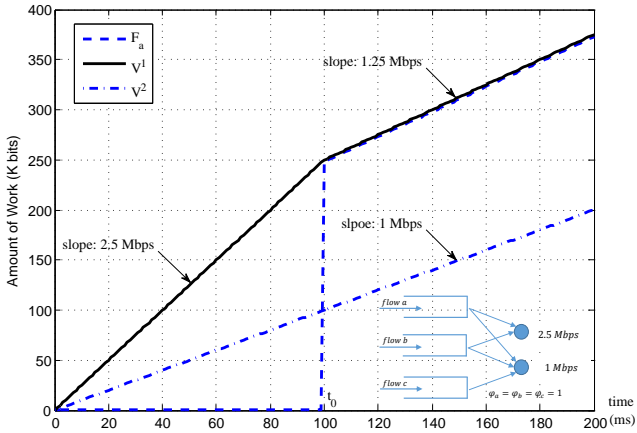}
\caption{Work level of servers, $V^k(t)$ $(k=1,2)$ and service tag of user $a$, $F_a(t)$ in the network of Fig. \ref{fig2} (Example \ref{exmp3.1}).}
\label{fig_exmp1}
\end{figure}

In this example, when user $a$ becomes backlogged, it joins to the cluster with the maximum service rate available to it. As a result, the service rate of this cluster is reduced from 2.5 to 1.25 Mbps (Fig. \ref{fig_exmp1}). However, the configuration of clusters (in terms of servers) is not changed. In general, when some users change their states from backlogged to non-backlogged (or vice versa), the configuration of clusters could be changed entirely. The next example considers such a case.

\begin{example}\label{exmp2}
%Assume that input traffic stream of users comprise of packets of fixed length $L$. Let users $a$, $b$ and $c$ be continuously backlogged since time $t=0$. Suppose that reduced $CM^4FQ$ algorithm is being used for scheduling packets on servers. A typical order of scheduling packets on servers is shown in Fig. \ref{fig3} under such circumstances.\par
%\begin{figure}
%\centering
%\includegraphics[width=3.5in]{fig3.pdf}
%\caption{A typical order of scheduling packets on servers 1 and 2 in example\ref{exmp2}.}
%\label{fig3}
%\end{figure}
Consider the queuing system shown in Fig. \ref{fig3-1}. Let users $a$, $b$ and $c$ be continuously backlogged since $t=0$. User $d$ is assumed to become backlogged at $t_0$ and remains backlogged afterwards. For the sake of illustration, suppose that the maximum length of packets is infinitesimally small ($L^{max}\rightarrow0$).

During interval $(0,t_0)$ that user $d$ is not backlogged, the corresponding FOC consists of three clusters: $C_1=(\{a,b\},\{S_1\})$, $C_2=(\{c\},\{S_2\})$ and $C_3=(\{d\},\{\})$, where $r^{C_1}=1.25$ Mbps, $r^{C_2}=1$ Mbps and $r^{C_3}=0$. After $t_0$ that user $d$ becomes backlogged, the corresponding FOC consists of only one cluster which contains all users and all servers, $C=(\{a,b,c,d\},\{S_1,S_2\})$, where $r^C=0.875$ Mbps. Hence, the fact that user $d$ becomes backlogged results in an entirely new FOC.

Furthermore, by applying the reduced $CM^4FQ$ algorithm, it can be observed that during interval $(0,t_0)$ each user achieves its fair service rate. Specifically, $F_a$ and $F_b$ increase with a rate of $1.25$ Mbps, while user $c$ gets a service rate of $1$ Mbps. Consequently, the rate of increase in $V^1$ is $1.25$ times the rate of increase in $V^2$ (Fig. \ref{fig_exmp2}). Hence, as time passes by, a significant difference may form between $V^1$ and $V^2$. Such a gap may potentially disturb fair performance of the reduced $CM^4FQ$ algorithm when the set of backlogged users changes.

When user $d$ becomes backlogged at time $t_0=100$ msec, the fair service rate of each user will be $0.875$ Mbps, i.e., the total capacity should be equally divided among the four users. Nevertheless, as long as $V^1$ is greater than $V^2$ ($F_a$ and $F_b$ are greater than $F_c$), users $a$ or $b$ are not chosen by server $2$ according to selection criterion in \eqref{eq5}. Hence, users $a$ and $b$ get service only from server 1 during $(t_0,t_1)$, till $t_1=250\text{ }$msec. In this interval, users $a$, $b$ and $d$ share server 1 getting $0.833$ Mbps each, while user $c$ gets a service rate of $1$ Mbps. From time $t_1$ when $V^1(t_1)=V^2(t_1)$ (i.e., $V^1(t)$ and $V^2(t)$ cross), each user will get its fair rate of $0.875$ Mbps, (Fig. \ref{fig_exmp2}).

\begin{figure}
\centering
\includegraphics[width=2.5in]{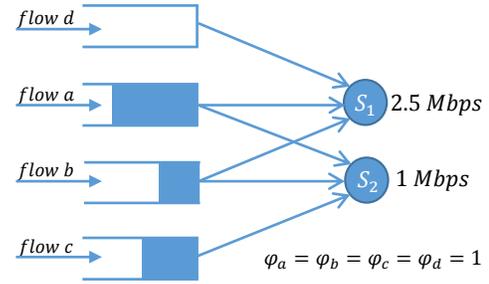}
\caption{The queuing system of Example \ref{exmp2}. Users $a$, $b$ and $c$ are continuously backlogged since $t=0$. User $d$ becomes backlogged at time $t_0=100 msec$, and remains backlogged afterwards.}
\label{fig3-1}
\end{figure}

\begin{figure}
\centering
\includegraphics[width=3.3in]{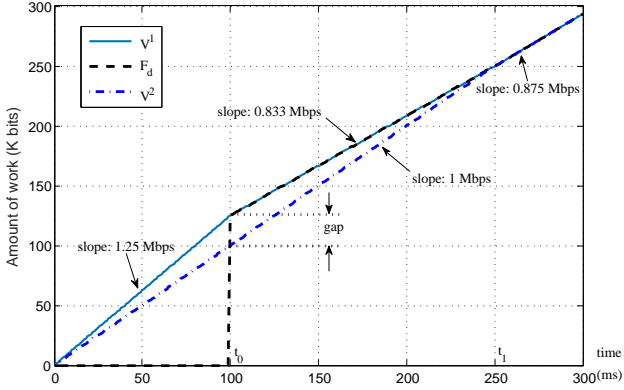}
\caption{Work level of servers, $V^1(t)$ and $V^2(t)$, and service tag of user $d$, $F_d(t)$ in the network of Fig. \ref{fig3-1}, when applying the reduced $CM^4FQ$ algorithm (Example \ref{exmp2}). %Users $a$, $b$ and $c$ are continuously backlogged since time $t=0$. User $d$ becomes backlogged at $t_0=100$ msec and remains backlogged afterwards.
}
\label{fig_exmp2}
\end{figure}
\end{example}
Example \ref{exmp2} shows that when the set of backlogged users changes, the gaps among work level of servers (as computed by the reduced $CM^4FQ$ algorithm) could disturb fair performance of the reduced $CM^4FQ$ algorithm during a considerable amount of time. In order to avoid this situation, for any server $k$ the difference of $V^{k}$ and $\max\{V^l\mid V^l<V^{k}\}$ is limited in $CM^4FQ$ algorithm to the predetermined parameter $\delta$. Specifically, when this difference is exceeding $\delta$ (or equivalently $d^k$ that is defined in (\ref{eq9_0}) becomes positive), the service tag of a subset of users, which have had higher service rate than the other users, is decreased according to (\ref{eq9_1}). Also, the work level of each server is correspondingly matched to the updated service tags (as stated in \eqref{eq9_3}). To get more intuition, consider the following example.

\begin{example}\label{exmp2.1}
Consider again the queuing system described in Example \ref{exmp2}. Assume that input traffic stream of each user comprises of packets of fixed length $L=1000$ bits. Users $a$, $b$ and $c$ are continuously backlogged since $t=0$. User $d$ is not backlogged during interval $[0,40]$ msec.

Applying $CM^4FQ$ algorithm, the work level of servers are shown in Fig \ref{fig_exmp2.1}. In this example, whenever $V^1$ is going to exceed $V^2+\delta$, $F_a$, $F_b$ and accordingly $V_1$ are decreased (as stated in \eqref{eq9_1} and \eqref{eq9_3}), so that $|V^1-V^2|$ is always upper bounded by $\delta$, where $\delta$ is assumed to be $\delta=2L$.
%Hence, if user $d$ becomes backlogged, it takes much shorter time for $V^2(t)$ to cross $V^1(t)$. Since then, each user will get its fair rate of $0.875$ Mbps.
\end{example}
\begin{figure}
\centering
\includegraphics[width=3.3in]{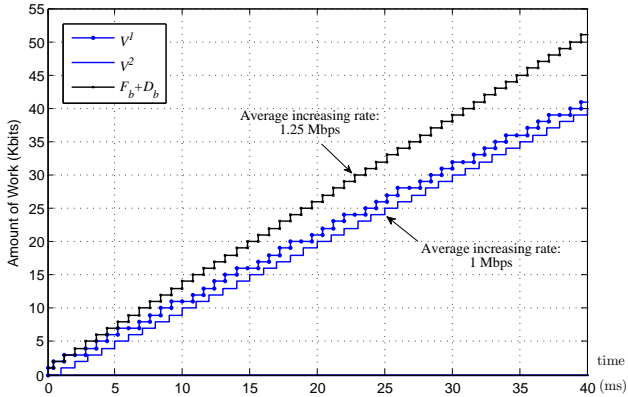}
\caption{Work level of servers, $V^1(t)$ and $V^2(t)$, and the given work to user $b$, $F_b(t)+D_b(t)$ in the network of Fig. \ref{fig3-1}, when applying $CM^4FQ$ algorithm (Example \ref{exmp2.1}).}
\label{fig_exmp2.1}
\end{figure}

As Example \ref{exmp2} implies, it is desired to choose $\delta$ as small as possible. However, in order to achieve fairness, $\delta$ should be sufficiently large. For instance, if we choose $\delta=L$ in Example \ref{exmp2.1}, it will be possible for $V^1(t)$ to be equal to $V^2(t)$ at some instants. In this case, it will be possible for users $a$ and $b$ to be chosen by server 2 during interval $(0,t_0)$. However, by choosing $\delta=2L$, no packet of user $a$ or user $b$ is chosen by server 2 during $(0,t_0)$. Choosing a proper amount for $\delta$ will be discussed in details in the next section.

It can be concluded from Example \ref{exmp2.1} that $F_i(t)$ does not show the actual work which is allocated to user $i$. To calculate the actual work allocated to user $i$, we calculate $D_i(t)$ which represents the amount of work that is allocated to user $i$ (during $[0,t)$) but is not considered in $F_i(t)$. Specifically, whenever $F_i$ is decreased due to updates in \eqref{eq9_1}, $D_i$ is increased by the same amount (as stated in \eqref{eq9_2}). $D_i(t)$ is referred as the \emph{bonus} given to user $i$ during $[0,t)$. As an illustrative instance, we have calculated $D_b(t)$ in Example \ref{exmp2} and we have shown the actual work allocated to user $b$ during $[0,t)$, $F_b(t)+D_b(t)$, in Fig. \ref{fig_exmp2.1}.

In parallel with $D_i(t)$, a similar parameter, $D^l(t)$ is calculated per server. Calculating $D_i$ and $D^l$ parameters in $UpdateV(\cdot,\cdot)$ subroutine is not an integral part of the algorithm. However, they will be useful for analyzing the actual work that is given to users.

%The intuitive reasoning for this is the following. Consider the situation where user $d$ becomes backlogged. In this case, all users should get the same service rate. Hence, it should be avoided to decrease \emph{some} service tags and give bonus to {some} users. Hence, $\delta$ should be chosen sufficiently large to avoid giving bonus in such case.

%So for any server $k$, $V^k(t)<\infty$ if and only if $B^k(t)\ne\emptyset$. Hence, the infinity of $V^k$ could be indicated by one-bit flag indicating whether server $k$ is backlogged or non-backlogged.
Now that we have a whole picture of $CM^4FQ$ algorithm, we compare it with single server Start-time Fair Queuing (SFQ) algorithm \cite{SFQ}. If we restrict number of servers in $CM^4FQ$ algorithm to one server, it will be almost the same as single server SFQ algorithm. The only difference is in definition of work level function. Specifically, in SFQ when the server chooses user $i^*$ for service, work level of server is simply updated to ${F}_{i^*}$, where ${F}_{i^*}$ is the service tag of user $i^*$ at the time that it is chosen for service.

The intuition behind the definition which we proposed for work level function in $CM^4FQ$ algorithm reveals in the next example. An analysis of $CM^4FQ$ algorithm under single server assumption is expressed in the appendix.

%In multi-server case, it seems that $CM^4FQ$ algorithm executes an instance of SFQ algorithm on each server. The major differences are in subtle definition of work level function of servers and their updating mechanism.

\begin{example}\label{exmp4}
%Consider a fully connected multi-user multi server queuing system, in which each user is eligible to get service from all servers. Assume weight of each users equals to 1. In this case, max-min fair rate allocation results in an equal fair service rate for each backlogged users. Therefor, in the corresponding FOC all backlogged users and all servers lie in the same cluster. Applying $CM^4FQ$ algorithm in this case and defining work level of servers according to (\ref{eq8_2}), all servers will have the same work level at any time $t$. This is in complete coincidence with our expectation of work level of servers in a cluster to be equal or in a limit range of each other.
Consider the queuing system shown in Fig. \ref{fig_exmp40}. Assume that the traffic stream of each user comprises of packets of fixed length $L=1000$ bits. Both users are assumed to be continuously backlogged since $t=0$. It can be easily verified that each user has a fair service rate of 5.5 Mbps. In order to achieve fairness, assume $\delta$ to be sufficiently large that the condition ($d^k>0$) in UpdateV$(i,k)$ subroutine never holds.

Applying $CM^4FQ$ algorithm for scheduling packets on servers, the work level of each server is shown in Fig. \ref{fig_exmp4} as function of time. It is observed that the maximum difference between $V^1$ and $V^2$ in $CM^4FQ$ algorithm is limited to $2L$ at any time. %Also, it can be observed that each user attains an average service rate of 5.5 Mbps.
\begin{figure}
\centering
\includegraphics[width=2.5in]{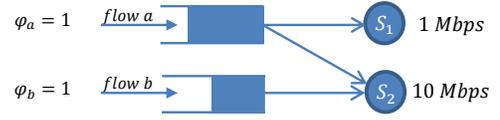}
\caption{The queuing system of Example \ref{exmp4}. Users are assumed to be continuously backlogged since $t=0$.}
\label{fig_exmp40}
\end{figure}

\begin{figure}
\centering
\includegraphics[width=3.3in]{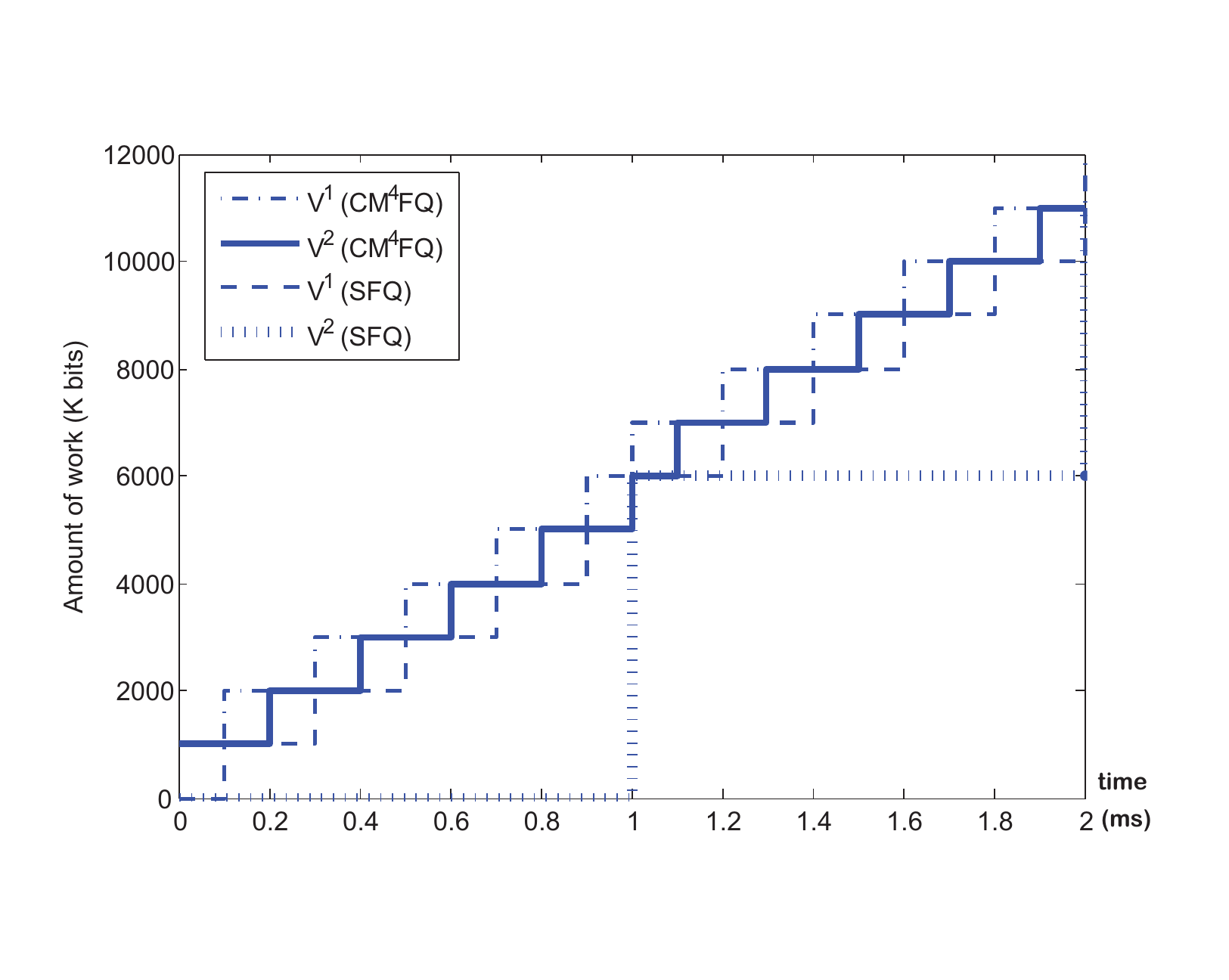}
\caption{Comparison between work level of servers in $CM^4FQ$ algorithm and the SFQ-based algorithm (Example \ref{exmp4}).}
\label{fig_exmp4}
\end{figure}

Now consider a FQ algorithm which is the same as $CM^4FQ$ algorithm except that it sets the work level of each server as in SFQ algorithm. This algorithm will be referred to as SFQ-based algorithm. Applying SFQ-based algorithm, work level of servers, $V^1$(SFQ) and $V^2$(SFQ) are shown in Fig \ref{fig_exmp4}.

In SFQ-based algorithm the work level of each server is not changed during service time of a packet. In this example, during service time of a packet on server $1$, ten packets are given service by server $2$. Hence, $V^1-V^2$ in SFQ-based algorithm experiences a maximum difference of $5L$ (Fig. \ref{fig_exmp4}).$\IEEEQED$
\end{example}

The above example considered a situation where all users have the same fair service rate. As this example implies, depending on the ratios of the service rate of servers, the maximum difference between work level of servers in the SFQ-based algorithm could be so much large. In the next section, we will show that the maximum difference between work level of servers in $CM^4FQ$ algorithm is upper bounded by $(K+1)L$ (K is the total number of servers), provided that all users have the same fair service rate.

\section{Characterizing Properties and Performance of $CM^4FQ$ Algorithm}
$CM^4FQ$ algorithm will be discussed in this section more analytically. First, some analytic properties of $CM^4FQ$ algorithm are described. Then the performance of the algorithm is studied in a special case. Finally, we discuss how to choose the control parameter $\delta$.

\subsection{Properties of $CM^4FQ$ Algorithm}
Basic properties of $CM^4FQ$ algorithm are discussed in this subsection. These properties turn out to be useful in characterizing performance of the algorithm. Wherever necessary, we develop new definitions and notation.

\begin{definition}
For any user $i$, the \emph{respective work level} function, $V_i(t)$ is defined as:
\begin{equation}\label{eq3_2}
  V_i(t)=\max\{V^k(t)\mid\pi_{i,k}=1\}
\end{equation}
\end{definition}

\begin{lem}\label{lem2_1}
For user $i$ that is backlogged at time $t$, it follows that $F_i(t)\ge V_i(t)$.
\end{lem}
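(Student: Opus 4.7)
\medskip\noindent\emph{Proof plan.}
I would prove the claim by induction on the sequence of discrete events that can change some $F_i$ or some $V^k$, namely packet arrivals and de-queue/selection events. The inductive invariant is the statement itself, $F_i\ge V_i$ for every $i\in B(t)$. At $t=0$ all $F_i$ and $V^k$ equal zero, so it holds trivially; between events nothing changes, so by left-continuity it suffices to verify that each event preserves the invariant.

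An arrival at an already backlogged user touches neither $F$ nor $V$. For an arrival at an idle user $i$, ActivateServers$(i)$ only modifies $V^l$ whose previous value was $\infty$ (equivalently $B^l=\emptyset$), which cannot affect any currently-backlogged $j\ne i$ eligible to $l$ (since $j\in B^l$ would have forced $V^l<\infty$); meanwhile the enqueuing update \eqref{eq4} directly enforces $F_i\ge\max\{V^k\mid\pi_{i,k}=1\}=V_i$. For a de-queue on server $k$ of user $i^*$, I would treat the basic work-level update \eqref{eq8_1}--\eqref{eq8_2} and the gap-regulation step \eqref{eq9_0}--\eqref{eq9_4} separately. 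After the basic update, every $V^l$ with $\pi_{i^*,l}=1$ equals $\min\{F_j\mid j\in B^l\}$, so any backlogged $i$ eligible to $l$ belongs to $B^l$ and hence $V^l\le F_i$; for $l$ with $\pi_{i^*,l}=0$, neither $V^l$ nor $F_i$ for a backlogged $i$ eligible to $l$ changes (only $F_{i^*}$ moves), so the inductive hypothesis carries over.

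The delicate step is the regulation. For any backlogged $i$ and eligible $l$, I would split on the four combinations of $F_i\gtrless\hat V^k$ and $V^l\gtrless\hat V^k$: if both are $\ge\hat V^k$ they shift together by $d^k$; if both are below $\hat V^k$ neither moves; and $F_i<\hat V^k$ with $V^l\ge\hat V^k$ is ruled out by $F_i\ge V^l$ already established. The remaining case $F_i\ge\hat V^k$ with $V^l<\hat V^k$ is the main obstacle, requiring $F_i-d^k\ge V^l$. Setting $a:=\min\{V^{l'}\mid V^{l'}\ge\hat V^k\}$ and $b:=\max\{V^{l'}\mid V^{l'}<\hat V^k\}$, when $i$ is eligible to some $l'$ with $V^{l'}\ge\hat V^k$ the invariant after the basic update yields $F_i\ge V^{l'}\ge a$, hence $F_i-d^k\ge a-(a-b-\delta)=b+\delta>V^l$. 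The harder sub-case, where every server eligible to $i$ has $V^{l'}<\hat V^k$, is where my plan would use the bound $V^k\le\hat V^k+L^{max}/\varphi_{i^*}$ produced by the basic update (so that $a\le\hat V^k+L^{max}/\varphi_{i^*}$) together with the calibration $\delta\ge L^{max}/\varphi_{i^*}$ from the next subsection, giving $F_i\ge\hat V^k\ge a-\delta$ and thereby $F_i-d^k\ge V^l$.
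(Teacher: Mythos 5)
Your plan takes the same core route as the paper's proof --- the inequality is enforced by \eqref{eq4} when a user becomes backlogged and by \eqref{eq8_2} when work levels are recomputed --- but it is considerably more careful: the paper's argument is three sentences long, cites only \eqref{eq4} and \eqref{eq8_2}, and silently ignores both the $ActivateServers$ subroutine and the gap-regulation step \eqref{eq9_0}--\eqref{eq9_4}, whereas you treat each event explicitly. Your handling of arrivals, of $ActivateServers$, and of the basic update \eqref{eq8_1}--\eqref{eq8_2} is sound, and you have correctly isolated the regulation step as the only place where the invariant is genuinely at risk (a backlogged $i$ with $F_i\ge\hat V^k$ eligible to a server $l$ with $V^l<\hat V^k$, so that $F_i$ drops by $d^k$ while $V^l$ stays put).

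The gap is in your last sub-case. The bound $V^k\le\hat V^k+L^{max}/\varphi_{i^*}$ is ``produced by the basic update'' only when $i^*$ remains backlogged after the dequeue: then the new $V^k=\min\{F_j\mid j\in B^k\}\le F_{i^*}=\hat V^k+L_p/\varphi_{i^*}$. If the dequeued packet empties $i^*$'s queue, $i^*$ leaves $B^k$ and the new $V^k$ is the minimum tag of the \emph{remaining} users in $B^k$, which is not bounded in terms of $\hat V^k$ by anything you have established (tags within $B^k$ can be spread arbitrarily, e.g.\ by users whose tags were lifted by \eqref{eq4} on other, faster servers). In that event $a=\min\{V^{l'}\mid V^{l'}\ge\hat V^k\}$ and hence $d^k=a-b-\delta$ can exceed $F_i-V^l$ for a user $i$ with $\hat V^k\le F_i<a$ all of whose eligible servers lie below $\hat V^k$, and the invariant breaks. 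Closing this requires either a separate argument for the departing-user case or an auxiliary invariant controlling the spread of $\{F_j\mid j\in B^k\}$; neither is supplied. Two further remarks: your argument makes the lemma conditional on $\delta\ge L^{max}/\varphi_{i^*}$, which is compatible with the paper's eventual choice $\delta=(K+1)\lambda_0$ but is neither assumed in the statement nor used in the paper's proof; and, to be fair, the hole you are trying to fill exists silently in the paper's own proof as well --- you have surfaced it and closed most of it, not introduced it.
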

\begin{proof}
According to (\ref{eq4}), for an idle user which becomes backlogged at $t$, $F_i(t^+)$ is updated to $\max\{V_i(t),F_i(t)\}$, which is greater than or equal to $V_i(t)$. For user $i$ that is already backlogged at time $t$, equations (\ref{eq4}) and \eqref{eq8_2} implicate that $F_i(t)$ is greater than or equal to $V^k(t)$, for any server $k$ for which $\pi_{i,k}=1$. According to (\ref{eq3_2}), it follows that $F_i(t)\ge V_i(t)$.
\end{proof}

\begin{definition}
The \emph{allocated work} to user $i$ during interval $[t_1,t_2)$, $W_i(t_1,t_2)$ is defined as the total required work of user $i$'s packets \emph{chosen} during interval $[t_1,t_2)$ for getting service.
\end{definition}

As defined in Section III.A, the length of a packet in bits is considered as its required work. It should be mentioned that the service of packets chosen during $[t_1,t_2)$ could possibly be completed after $t_2$. However, the total required work of these packets is considered in $W_i(t_1,t_2)$. On the other hand, there could be some packets of user $i$ which are chosen before $t_1$ and their service complete after $t_1$. These residual works are not considered in $W_i(t_1,t_2)$.

\begin{definition}
The set of users backlogged at some instant(s) during interval $[t_1,t_2)$ is denoted by $B(t_1,t_2)$:
\begin{equation}\label{eq10}
  B(t_1,t_2)\triangleq\{i\mid i\in B(t),\text{ for some }t\in[t_1,t_2)\}.
\end{equation}
The set of users continuously backlogged during interval $[t_1,t_2)$ is denoted by $\tilde{B}(t_1,t_2)$:
\begin{equation}\label{eq11}
  \tilde{B}(t_1,t_2)\triangleq\{i\mid i\in B(t),\text{ for all }t\in[t_1,t_2)\}
\end{equation}
\end{definition}
The following lemma holds for any user that is continuously backlogged during $[t_1,t_2)$.

\begin{lem}
For user $i\in\tilde{B}(t_1,t_2)$, the normalized work during interval $[t_1,t_2)$ equals to:
\begin{equation}\label{eq12}
  \frac{W_i(t_1,t_2)}{\varphi_i}=F_i(t_1,t_2)+D_i(t_1,t_2)
\end{equation}
\end{lem}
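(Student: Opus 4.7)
The plan is to track every update to $F_i$ and $D_i$ over the interval $[t_1,t_2)$ event by event, classify each event by which line of the algorithm triggered it, and then sum the contributions to $F_i(t_1,t_2)+D_i(t_1,t_2)$. The algorithm modifies $F_i$ in exactly three places: line \eqref{eq4} of the En-queuing Module, line \eqref{eq6} of the De-queuing Module, and line \eqref{eq9_1} of $UpdateV$; and it modifies $D_i$ in exactly one place, line \eqref{eq9_2} of $UpdateV$. Showing the identity reduces to bookkeeping over these four code paths.

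The first step is to rule out any contribution from line \eqref{eq4}. Because $i\in\tilde{B}(t_1,t_2)$, Definition \ref{def_1} forces $Queue_i$ to be non-empty at $t^-$ for every $t\in[t_1,t_2)$. Hence whenever a packet of user $i$ arrives during the interval, the guard $Empty(Queue_i)$ in the En-queuing Module evaluates to false, and the assignment \eqref{eq4} is never executed for user $i$ in $[t_1,t_2)$. The second step is to observe that lines \eqref{eq9_1} and \eqref{eq9_2} are always coupled: whenever the regulation branch fires with $d^k>0$ and user $i$ satisfies $F_i\geq\hat{V}^k$, equation \eqref{eq9_1} subtracts $d^k$ from $F_i$ and equation \eqref{eq9_2} adds the same $d^k$ to $D_i$; if the threshold condition fails for user $i$, neither variable changes. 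In either case each regulation event contributes $0$ to $F_i+D_i$.

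The final step is to collect the remaining contributions, which come solely from line \eqref{eq6}. Each time the De-queuing Module selects a packet $p$ of user $i$ at some instant $t\in[t_1,t_2)$, equation \eqref{eq6} increases $F_i$ by $L_p/\varphi_i$ and leaves $D_i$ untouched. Summing over all such events, and using the left-continuity convention together with $f(t_1,t_2)=f(t_2)-f(t_1)$, gives
\begin{equation*}
F_i(t_1,t_2)+D_i(t_1,t_2)=\sum_{p}\frac{L_p}{\varphi_i}=\frac{W_i(t_1,t_2)}{\varphi_i},
\end{equation*}
where the sum runs over all packets of user $i$ chosen for service during $[t_1,t_2)$, which is exactly the definition of $W_i(t_1,t_2)$.

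The only delicate point is the exclusion of line \eqref{eq4} during the interval; everything else is pure accounting. I anticipate that the main obstacle in a formal write-up is handling the endpoint $t_1$ carefully, namely checking via left-continuity that a dequeuing at the instant $t_1$ itself is counted in $F_i(t_2)-F_i(t_1)$ and that no call to \eqref{eq4} can occur at $t_1$ either, so the $Empty(Queue_i)$ argument is airtight at both boundaries.
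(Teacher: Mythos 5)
Your proof is correct. The paper itself omits the argument entirely (``The proof is so straightforward\ldots it is not stated''), and your event-by-event accounting --- no execution of \eqref{eq4} for a continuously backlogged user, cancellation of the coupled updates \eqref{eq9_1}--\eqref{eq9_2} in $F_i+D_i$, and summation of the increments \eqref{eq6} into $W_i(t_1,t_2)/\varphi_i$ --- is exactly the intended justification, with the endpoint/left-continuity remark being a welcome extra bit of care.
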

\begin{IEEEproof}
The proof is so straightforward. So, for the sake of brevity it is not stated.
\end{IEEEproof}

\subsection{Analyzing $CM^4FQ$ Algorithm in Case of a Single Cluster}
In this section we analyze $CM^4FQ$ algorithm in case of a single cluster. To do so, first we present exact definition of FOC in case of packetized traffic streams. %when users are given service on a packet by packet basis.
According to Fact \ref{cor1}, FOC in a fluid flow system has an important property. Specifically, in a fluid flow system, FOC at time $t_0$ is expressed as a function of the set of backlogged users at $t_0$, $\{C_m(B_0)\}_{m=1}^M$, where $B_0=B(t_0)$. Upon this property, FOC definition could be generalized for queuing systems with packetized traffic streams.

\begin{definition}
Assume that in the queuing system of Fig. \ref{fig1} users are given service on a packet by packet basis. Given that the set of backlogged users at time $t_0$ equals to $B_0=B(t_0)$, FOC at time $t_0$ is defined to be the same as FOC in a fluid flow system with the set of backlogged users $B_0$, $\{C_m(B_0)\}_{m=1}^M$.
\end{definition}

\begin{definition}
Assume that the set of backlogged users at time $t_0$ is $B_0=B(t_0)$. Consider an arbitrary cluster $C=(I,S)$ of the FOC which corresponds to $B_0$, $\{C_m(B_0)\}_{m=1}^M$. \emph{Minimum work level} function ($V^C(t_0)$), and \emph{maximum work level} function ($\hat{V}^C(t_0)$) corresponding to cluster $C$ are defined as:
\begin{eqnarray}
% \nonumber to remove numbering (before each equation)
  V^C(t_0) &\triangleq& \min\{V^k(t_0)\mid k\in S\}\label{eq13} \\
  \hat{V}^C(t_0) &\triangleq& \max\{V^k(t_0)\mid k\in S\}\label{eq14}
\end{eqnarray}
\end{definition}

An important question is that how far can $\hat{V}^C(t)$ be from $V^C(t)$ for an arbitrary cluster $C$. To find an answer to this question, we will consider a special situation. Specifically, consider an arbitrary interval $[{t}_0,t_1)$ during which the set of backlogged users does not change, i.e., $\tilde{B}({t}_0,t_1)=B({t}_0,t_1)=B$. Consider an arbitrary cluster $C=(I,S)$ of the FOC corresponding to the set of backlogged users $B$. Assume that an initial condition, $V^k({t}_0)=V^{C}({t}_0)$ holds for all servers $k\in S$. Furthermore, assume that in the execution of the algorithm during interval $[t_0,t_1)$ cluster $C$ is \emph{isolated} from the rest of network, i.e., during this interval no packet of users $i\notin I$ is chosen for service by servers $k\in S$, and no packet of users $i\in I$ is chosen for service by servers $k\notin S$. The following results are derived under such assumptions.

\begin{theorem}\label{th2}
Consider an arbitrary interval $[{t}_0,t_1)$, for which $\tilde{B}({t}_0,t_1)=B({t}_0,t_1)=B$. Consider an arbitrary cluster $C=(I,S)$ of the FOC which corresponds to the set of backlogged users $B$. Assume that in the execution of the algorithm during interval $[t_0,t_1)$ cluster $C$ is {isolated} from the rest of network. Let $V^k({t}_0)=V^{C}({t}_0)$, for all servers $k\in S$.

Assume that $\delta$ is chosen arbitrarily large that $D^k({\tau}_0,\tau_1)=D^{l}({\tau}_0,\tau_1)$ for any servers $k,l\in S$ and for any subinterval $[{\tau}_0,\tau_1)$ of the interval $[t_0,t_1)$. Given that server $k\in S$ becomes free at $t\in(t_0,t_1)$, it can be shown that:
\begin{eqnarray}\label{eq17}
  {w}^{k}(t_0,t)\leq r^{C}(t-t_0)+K^{C}\lambda^{C}
\end{eqnarray}
where, ${w}^{k}(t_0,t)$ is defined as:
\begin{eqnarray}
  {w}^{k}(t_0,t)\triangleq {V}^{k}(t_0,t)+{D}^{k}(t_0,t),
\end{eqnarray}
$K^{C}$ is the number of servers in cluster $C$, and $\lambda^{C}$ is defined as:
\begin{eqnarray}
% \nonumber to remove numbering (before each equation)
  \lambda^{C} &\triangleq& \frac{L^{max}}{\min_{i\in I}\varphi_i}\label{eq18}.
\end{eqnarray}

Furthermore, for server $k\in S$ and for any $t\in(t_0,t_1)$, it can be shown that:
\begin{eqnarray}\label{eq23}
 w^{k}(t_0,t)\ge r^{C}(t-t_0)-K^{C}\lambda^{C}.
\end{eqnarray}
\end{theorem}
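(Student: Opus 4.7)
The plan is to prove both bounds by relating $w^k(t_0,t)$ to the normalized work received by an individual user in cluster $C$, and then invoking work conservation on the isolated cluster.

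First, for the upper bound (\ref{eq17}), I would use the observation that when server $k$ becomes free at $t$, the UpdateV subroutine implies $V^k(t) = F_{i^*}(t)$ for the user $i^*\in B^k(t)\cap I$ with the smallest service tag. Since the set of backlogged users does not change on $[t_0,t_1)$, user $i^*$ belongs to $\tilde{B}(t_0,t_1)$, and the lemma preceding the theorem gives $F_{i^*}(t_0,t) + D_{i^*}(t_0,t) = W_{i^*}(t_0,t)/\varphi_{i^*}$. Combined with the initial condition $V^k(t_0)=V^C(t_0)$ and $F_{i^*}(t_0)\geq V^C(t_0)$ (Lemma \ref{lem2_1} applied at $t_0$), this converts $w^k(t_0,t)$ into an expression of the form $W_{i^*}(t_0,t)/\varphi_{i^*} + \bigl(D^k(t_0,t)-D_{i^*}(t_0,t)\bigr) + \bigl(F_{i^*}(t_0)-V^C(t_0)\bigr)$, where the last parenthesis is non-negative but must be reconciled with the first. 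The hypothesis that all $D^l$ coincide across $S$ is what allows the $D^k$ and $D_{i^*}$ terms to cancel, since the updates \eqref{eq9_2}--\eqref{eq9_4} move the service tags of the affected users and the work levels of the affected servers in lockstep.

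Next, I would establish a balance property for users in $I$: the normalized work $W_i(t_0,t)/\varphi_i$ for distinct $i\in I$ differ by at most $\lambda^C$. This is driven by the min-$F$ selection rule \eqref{eq5}: whenever a server in $S$ picks a user from $I$, it picks the argmin, and the chosen user's $F$ increases by at most $\lambda^C = L^{max}/\min_{i\in I}\varphi_i$. Iterating across all packet selections during $[t_0,t)$ and using Corollary \ref{cor1.1} and Lemma \ref{lem1} to argue that the FOC restricted to $C$ provides enough cross-eligibility for balance to propagate across $I$, I would deduce that all $F_i(t)+D_i(t)$ for $i\in I$ are within $\lambda^C$ of each other, and in particular the minimum is within $\lambda^C$ of the weighted average. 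Then I would invoke work conservation: since every user in $I$ is continuously backlogged and the cluster is isolated, every server $k\in S$ is continuously busy throughout $[t_0,t)$ serving only users in $I$, so the total chosen work $\sum_{i\in I}W_i(t_0,t)$ differs from $\sum_{k\in S}\rho^k(t-t_0)$ by at most $K^C L^{max}$, the aggregate of packets potentially in service at time $t$. Using the FOC identity $\sum_{k\in S}\rho^k = r^C \sum_{i\in I}\varphi_i$ from Definition \ref{def_FOC}, the weighted average of $W_i(t_0,t)/\varphi_i$ is at most $r^C(t-t_0)+K^C L^{max}/\sum_{i\in I}\varphi_i$. Combining with the balance slack and the bound $K^C L^{max}/\sum_{i\in I}\varphi_i \le K^C\lambda^C$ yields \eqref{eq17}. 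The lower bound \eqref{eq23} follows by the symmetric argument applied to the maximum of $F_i+D_i$ across $i\in I$, together with the fact that $V^k$ is non-decreasing in $t$ between free-times so the bound at a generic $t$ reduces to the bound at the most recent free-time.

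The hardest step I expect is the balance claim: showing that $F_i+D_i$ stay within $\lambda^C$ across all $i\in I$ when users need not be uniformly eligible for every server in $S$, and simultaneously reconciling the per-user bonus $D_i$ with the per-server bonus $D^k$ under the stated hypothesis $D^k(\tau_0,\tau_1)=D^l(\tau_0,\tau_1)$. A single server only directly balances users eligible to it, so the FOC structure of $C$ must be used to propagate balance between servers via shared users, and one must check that every firing of \eqref{eq9_0}--\eqref{eq9_4} affects users in $I$ and servers in $S$ as a single group so that $D_i(t_0,t)$ equals the common value of $D^k(t_0,t)$ for $k\in S$.
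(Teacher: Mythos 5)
Your overall strategy (a balance property among users' normalized works, plus work conservation on the isolated cluster, plus the FOC identity $\sum_{k\in S}\rho^k=r^C\sum_{i\in I}\varphi_i$) would settle the fully-connected case, but the step you yourself flag as hardest --- the claim that $W_i(t_0,t)/\varphi_i$, equivalently $F_i(t)+D_i(t)$, stay within $\lambda^C$ of one another for all $i\in I$ --- is false for a general cluster, and its corrected version is essentially the theorem you are trying to prove. A cluster of an FOC need not be fully connected: the argmin rule \eqref{eq5} enforces balance between two users directly only when they share a server, and otherwise balance must propagate through chains of shared users and servers, each link contributing an additional $\lambda^C$ of slack. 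Already with two servers and the chain $a\!-\!s_1\!-\!b\!-\!s_2\!-\!c$ (user $b$ eligible for both servers, $a$ and $c$ dedicated) one gets $|F_a-F_c|$ as large as $2\lambda^C$, and in general the spread grows linearly in $K^C$. This is precisely why Theorem \ref{th2'} bounds the spread of work levels by $(K^C+1)\lambda^C$ rather than $\lambda^C$ --- and Theorem \ref{th2'} is itself deduced from Theorem \ref{th2}. So the balance lemma, stated with a constant that makes it true, cannot be invoked before the theorem is proved; as written the argument is circular, and with the constant $\lambda^C$ it is simply wrong.

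The paper's proof takes a different route that supplies exactly the missing propagation argument. It orders the servers of $C$ by work level at time $t$, isolates the top server $l$, and introduces the last instant $\hat t$ at which a user eligible for server $l$ was served by a lower-indexed server. On $(\hat t,t)$ server $l$ alone serves the sub-population $\hat I(l)$, so the single-server Lemma \ref{lem8} applies there, and the FOC rate inequalities (\ref{eq52})--(\ref{eq53}) for the sub-clusters $\hat I(m)$, $\bar I(m)$ convert $\rho^{l}/\sum_{i\in\hat I(l)}\varphi_i$ into $r^C$. The prefix $[t_0,\hat t)$ is then handled by induction on the number of servers, merging servers $k_0$ and $l$ (which have equal work levels and are both free at $\hat t$) into one; each inductive step costs one $\lambda^C$, which is where the factor $K^C$ in \eqref{eq17} comes from. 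If you want to salvage your route you would have to prove your balance claim with an explicit $O(K^C\lambda^C)$ constant by an induction of this same flavor, at which point you would have reproduced the paper's argument. As it stands, the proposal has a genuine gap at its central step.
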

For the proof, refer to the appendix.
%
%\begin{theorem}\label{th3}
%Consider an arbitrary interval $[{t}_0,t_1)$, for which $\tilde{B}({t}_0,t_1)=B({t}_0,t_1)=B$. Consider an arbitrary cluster $C=(I,S)$ of the FOC which corresponds to the set of backlogged users $B$. Suppose that in the execution of the algorithm during interval $[t_0,t_1)$ cluster $C$ is {isolated} from the rest of network. Let $V^k({t}_0)=V^{C}({t}_0)$, for all servers $k\in S$.
%
%Suppose that $\delta$ is chosen arbitrarily large that $D^k({\tau}_0,\tau_1)=D^{l}({\tau}_0,\tau_1)$ for any servers $k,l\in S$ and for any subinterval $[{\tau}_0,\tau_1)$ of the interval $[t_0,t_1)$. For any server $k\in S$ and for any $t\in(t_0,t_1)$, it can be shown that:
%\begin{eqnarray}\label{eq23}
% w^{k}(t_0,t)\ge r^{C}(t-t_0)-K^{C}\lambda^{C}.
%\end{eqnarray}
%\end{theorem}
%
%\begin{IEEEproof}
%The proof is almost the same as Theorem \ref{th2}. So for the sake of brevity it is not stated.
%\end{IEEEproof}

\begin{theorem}\label{th2'}
Consider the same set of conditions as in Theorem \ref{th2}. Given that server $k\in S$ becomes free at $t\in(t_0,t_1)$, it follows that:
\begin{equation}\label{eq22}
  {V}^{k}(t^+)-V^{C}(t)\leq (K^C+1)\lambda^C.
\end{equation}
\end{theorem}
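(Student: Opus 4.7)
The plan is to decompose $V^{k}(t^{+})-V^{C}(t)$ into two contributions,
\[
V^{k}(t^{+})-V^{C}(t)=\bigl(V^{k}(t^{+})-V^{k}(t)\bigr)+\bigl(V^{k}(t)-V^{C}(t)\bigr),
\]
and bound each separately: the first by directly analyzing the dequeue step at time $t$, and the second by invoking Theorem \ref{th2} with the common $D$ offset supplied by the hypotheses.

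For the first piece, I would read the behaviour off the pseudo-code. When server $k$ becomes free at $t$, the De-queuing Module selects $i^{*}=\arg\min_{i\in B^{k}(t)}F_{i}(t)$, so $V^{k}(t)=F_{i^{*}}(t)$. The isolation hypothesis forces $i^{*}\in I$, and then \eqref{eq6} increments $F_{i^{*}}$ by $L_{p}/\varphi_{i^{*}}\le L^{max}/\min_{i\in I}\varphi_{i}=\lambda^{C}$. Since $V^{k}(t^{+})=\min\{F_{j}(t^{+})\mid j\in B^{k}(t^{+})\}\le F_{i^{*}}(t^{+})$, this yields $V^{k}(t^{+})-V^{k}(t)\le\lambda^{C}$. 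This is precisely the ``$+1$'' in the bound $(K^{C}+1)\lambda^{C}$.

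For the second piece, let $l^{*}\in S$ attain the minimum, $V^{C}(t)=V^{l^{*}}(t)$. I would apply the upper bound of Theorem \ref{th2} to server $k$ (legal because $k$ becomes free at $t$) and the lower bound of Theorem \ref{th2} to server $l^{*}$ (valid for any $t$ in the interval). Under the standing initial condition $V^{k}(t_{0})=V^{l^{*}}(t_{0})=V^{C}(t_{0})$ and the standing hypothesis $D^{k}(t_{0},t)=D^{l^{*}}(t_{0},t)$, the drift $r^{C}(t-t_{0})$, the common initial work level $V^{C}(t_{0})$, and the common $D$ offset all cancel in the difference, leaving a clean bound on $V^{k}(t)-V^{l^{*}}(t)$ by a multiple of $K^{C}\lambda^{C}$. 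Summing with the first piece gives the claimed inequality.

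The main obstacle I anticipate is getting the second-step constant down to exactly $K^{C}\lambda^{C}$. Taking the naive difference of the ``$+K^{C}\lambda^{C}$'' upper envelope and the ``$-K^{C}\lambda^{C}$'' lower envelope produces $2K^{C}\lambda^{C}$, which would overshoot. To match the advertised constant I expect one has to revisit the proof of Theorem \ref{th2} and exploit that, at the specific instant server $k$ becomes free, one of the two envelopes can be tightened by a factor of two — for example, by tracking server $l^{*}$ at its own last free moment inside $[t_{0},t)$ (where the upper envelope of Theorem \ref{th2} also applies) and propagating it forward to $t$ using only the tag updates driven by cluster activity, rather than calling on the generic lower envelope.
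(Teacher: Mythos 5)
Your decomposition $V^{k}(t^{+})-V^{C}(t)=\bigl(V^{k}(t^{+})-V^{k}(t)\bigr)+\bigl(V^{k}(t)-V^{C}(t)\bigr)$ matches the paper's closing step, and your bound of $\lambda^{C}$ on the jump term is exactly what the paper uses (it attributes it to the single-server analysis of Lemma \ref{lem8}, but your direct reading of \eqref{eq5}--\eqref{eq6} gives the same thing under the standing assumption that users stay backlogged). The problem is the second piece: you correctly observe that subtracting the lower envelope \eqref{eq23} for $l^{*}$ from the upper envelope \eqref{eq17} for $k$ gives $2K^{C}\lambda^{C}$, which overshoots, and then you stop at a conjecture about how to recover the factor of two. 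That recovery is the actual content of the theorem, so as written the proof has a genuine gap at its central step.

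For comparison, here is how the paper closes it. It does not difference the two finished envelopes of Theorem \ref{th2}; instead it reruns the induction of Theorem \ref{th2} on the number of servers $l=K^{C}$, but with the induction hypothesis stated \emph{relative to the lowest server} rather than relative to $r^{C}(t-t_{0})$: namely $w^{l}(t_{0},t)\le w^{1}(t_{0},t)+\varsigma(\hat{t}_{0},t)+(l-1)\lambda^{C}$, where $\varsigma$ is an auxiliary function confined to $[0,\lambda^{C}]$. At each induction step the recursion \eqref{eq59_1} contributes a term $r^{C}(t-\hat{t})$, which is immediately traded for $w^{1}(\hat{t},t)+\varsigma(\hat{t},t)$ using a single-server \emph{lower} bound (Lemma \ref{lem7}) on server $1$ over the sub-interval since the last cross-sub-cluster service event; because the $\lambda^{C}$ error of that lower bound is written as an increment $\varsigma(\hat{t},t)$ of a bounded function, these errors telescope across the induction and contribute at most one $\lambda^{C}$ in total instead of one per step. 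That is precisely the factor-of-two saving you anticipated, and it also requires the structural fact that the strictly minimal server has a dedicated user in $\bar{I}(2)$ so that Lemma \ref{lem7} applies to it over $(\hat{t}_{0},t)$ with rate at least $r^{C}$ via \eqref{eq52} --- a point your sketch does not touch. So your outline is aimed in the right direction, but to be a proof it needs this telescoping (or an equivalent device), not just the assertion that one envelope ``can be tightened.''
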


For the proof, refer to the appendix.
Intuitively, Theorem \ref{th2'} implies that the maximum difference among work level of servers in cluster $C$ is limited to $(K^C+1)\lambda^C$. The following lemma is built upon this theorem.

\begin{lem}\label{lem2}
For $D^k(\tau_0,\tau_1)=D^{l}(\tau_0,\tau_1)$, $\forall k,l\in S$, to be established under conditions of Theorem \ref{th2}, it suffices to choose $\delta\ge (K^C+1)\lambda^{C}$.
\end{lem}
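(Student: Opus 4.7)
The plan is a bootstrap induction in time that exploits the circular relationship between Theorem \ref{th2'} and the hypothesis of Theorem \ref{th2}: the theorem's spread bound $V^k(t^+)-V^C(t)\le (K^C+1)\lambda^C$ for $k\in S$ is valid \emph{provided} the $D$-equality hypothesis has held up to time $t$, and if $\delta\ge (K^C+1)\lambda^C$ this very bound forces the gap-regulation branch of UpdateV to act (if at all) in a way compatible with preserving the $D$-equality within $S$ at the next event, closing the loop.

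For the base case at $t=t_0$, the initial condition $V^k(t_0)=V^C(t_0)$ for all $k\in S$ together with $D^k(t_0,t_0)=0$ makes the hypothesis trivial. I would then induct over the discrete sequence $t_0<\tau_1<\tau_2<\cdots$ of UpdateV invocation times within $[t_0,t_1)$. Assuming $D^k(t_0,\tau_n)=D^l(t_0,\tau_n)$ for all $k,l\in S$, Theorem \ref{th2} is applicable on $[t_0,\tau_n]$, and Theorem \ref{th2'} yields $\max_{l\in S}V^l(t)-\min_{l\in S}V^l(t)\le (K^C+1)\lambda^C\le \delta$ for every $t\in (t_0,\tau_n]$.

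At the next UpdateV event at $\tau_{n+1}$ the isolation hypothesis forces any call that touches a server in $S$ to be triggered by some user $i\in I$ chosen by some server $k\in S$. From \eqref{eq9_0} the regulator computes $d^k=\min\{V^l\mid V^l\ge \hat{V}^k\}-\max\{V^l\mid V^l<\hat{V}^k\}-\delta$, and \eqref{eq9_4} adds $d^k$ to $D^l$ precisely for those $l$ with $V^l\ge \hat{V}^k$. Violating $D$-equality within $S$ would require both $d^k>0$ and $\{l\in S\mid V^l\ge \hat{V}^k\}$ being a proper non-empty subset of $S$. The main obstacle is ruling out this configuration. I would combine the inductive spread bound with Corollary \ref{cor1.1}: any server $l\notin S$ that is eligible for $i$ necessarily belongs to a cluster with strictly smaller cluster service rate, and hence, given the way cluster work levels evolve under isolation, its $V^l$ must lie strictly below the width-$\delta$ interval that contains all $V^l$ for $l\in S$. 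Consequently, the threshold $\hat{V}^k\in[V^C,V^C+\delta]$ cleanly separates the $S$-side (above $\hat{V}^k$) from the eligible outside servers (below $\hat{V}^k$). When both the $\min$ and $\max$ in $d^k$ are attained within the $S$-side, their difference is at most $\delta$ and $d^k\le 0$; otherwise every $V^l$ for $l\in S$ is $\ge \hat{V}^k$, so \eqref{eq9_4} updates all of $S$ by the same $d^k$. In either branch the equality $D^k(t_0,\tau_{n+1})=D^l(t_0,\tau_{n+1})$ for $k,l\in S$ is preserved, and restricting to subintervals gives $D^k(\tau_0,\tau_1)=D^l(\tau_0,\tau_1)$ as claimed.
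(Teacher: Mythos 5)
Your proof is correct and its core is the same as the paper's: invoke the spread bound \eqref{eq22} of Theorem \ref{th2'} and then split the gap-regulation step of \emph{UpdateV} into the two cases ``some server of $S$ lies below $\hat{V}^k$, hence $d^k\le 0$'' versus ``all of $S$ lies at or above $\hat{V}^k$, hence every server of $S$ receives the identical bonus $d^k$.'' Where you genuinely depart from the paper is in making the circularity explicit: Theorems \ref{th2} and \ref{th2'} take the $D$-equality as a hypothesis, so quoting \eqref{eq22} to \emph{establish} that equality is, strictly speaking, circular. The paper's proof applies \eqref{eq22} directly and leaves the resolution implicit; your bootstrap induction over the sequence of \emph{UpdateV} events (equality holds up to $\tau_n$, hence the spread bound holds on $(t_0,\tau_n]$, hence the event at $\tau_{n+1}$ preserves equality) is exactly the missing scaffolding, and it is a real improvement in rigor at essentially no extra cost. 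One caveat: your appeal to Corollary \ref{cor1.1} to place the work levels of eligible servers outside $S$ strictly below the width-$\delta$ band of $S$ is neither justified by the hypotheses of Theorem \ref{th2} (which constrain nothing about initial conditions outside the cluster) nor needed. The dichotomy already closes without it: if some $l_0\in S$ has $V^{l_0}<\hat{V}^k$, then $\max\{V^l\mid V^l<\hat{V}^k\}\ge V^{l_0}\ge V^C(t)$ and $\min\{V^l\mid V^l\ge\hat{V}^k\}\le V^k(t^+)\le V^C(t)+(K^C+1)\lambda^C$, so $d^k\le(K^C+1)\lambda^C-\delta\le 0$ no matter where the outside servers sit; and in the other branch every server of $S$ satisfies the condition $V^l\ge\hat{V}^k$ of \eqref{eq9_4} and is decremented/credited uniformly, again independently of outside servers. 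I would drop that sentence rather than try to prove it.
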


\begin{IEEEproof}
Consider any time $t\in(t_0,t_1)$ at which some user (server) in cluster $C$ receives bonus. Specifically,
Consider an arbitrary time $t\in(t_0,t_1)$ at which some server $k\in S$ becomes free. At this instant, some user $i^*$ will be chosen according to \eqref{eq5} and $UpdateV(i^*,k)$ subroutine will be executed.

According to \eqref{eq22}, unless server $k$ has the minimum work level at time $t$, there exists at least one server $l\in S$ such that $V^k(t^+)-V^l(t)\le(K^C+1)\lambda^C$. Hence, $d^k$ in \eqref{eq9_0} will be less than or equal to 0, provided that $V^k(t)>V^C(t)$. In this case no user (server) in cluster $C$ receives bonus. $d^k$ is positive only when $V^k(t)=V^C(t)$. In this case, all users (servers) in cluster $C$ receive the same amount of bonus (because $F_i(t)$ and $V^l(t)$ is greater than or equal to $V^C(t)$ for all users and all servers in cluster $C$). Hence, for any subinterval $[\tau_0,\tau_1)\subseteq[t_0,t_1)$ it follows that $D^k(\tau_0,\tau_1)=D^{l}(\tau_0,\tau_1)$, $\forall k,l\in S$.
\end{IEEEproof}

\begin{corollary}\label{cor1.2}
Consider an interval $[{t}_0,t_1)$ for which $\tilde{B}({t}_0,t_1)=B({t}_0,t_1)=B$. Consider a special case that the FOC corresponding to $B$ include one cluster $C=(I,S)$ which consists of all backlogged users and all $K$ servers.
Let $V^k({t}_0)=V^{C}({t}_0)$, for all servers. In order for any user $i\in I$ to get the same amount of bonus during interval $[t_0,t_1)$, it suffices to choose $\delta\ge (K+1)\lambda_0$, where $\lambda_0$ is defined as:
\begin{equation}\label{eq27}
  \lambda_0\triangleq\frac{L^{max}}{\min_{\forall i}\varphi_i}.
\end{equation}
\end{corollary}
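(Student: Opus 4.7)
The plan is to derive Corollary \ref{cor1.2} as a direct specialization of Lemma \ref{lem2} to the single-cluster case, plus a short promotion from server-level bonus equality to user-level bonus equality. First I would verify that the ``isolated cluster'' hypothesis of Theorem \ref{th2} is automatic here: since $S$ is the full set of $K$ servers and $I$ is the full set of backlogged users, every dequeue event during $[t_0,t_1)$ takes place between some $i\in I$ and some $k\in S$, so nothing ever crosses a cluster boundary. The frozen-backlog hypothesis and the initial condition $V^k(t_0)=V^C(t_0)$ are assumed in the statement, so all prerequisites of Theorem \ref{th2} and Lemma \ref{lem2} are in force.

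Next I would reconcile the two thresholds on $\delta$. The cluster-specific constant of Lemma \ref{lem2} is $\lambda^C = L^{\max}/\min_{i\in I}\varphi_i$, while the corollary is phrased with $\lambda_0 = L^{\max}/\min_{\forall i}\varphi_i$. Since $I$ is contained in the full user set, $\min_{i\in I}\varphi_i \ge \min_{\forall i}\varphi_i$, hence $\lambda^C \le \lambda_0$; combined with $K^C = K$, this yields $(K^C+1)\lambda^C \le (K+1)\lambda_0$. Therefore the hypothesis $\delta\ge (K+1)\lambda_0$ implies the hypothesis of Lemma \ref{lem2}, which immediately delivers $D^k(\tau_0,\tau_1)=D^l(\tau_0,\tau_1)$ for every pair $k,l\in S$ and every subinterval $[\tau_0,\tau_1)\subseteq[t_0,t_1)$.

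The last step promotes this server-level equality to user-level equality, and this is where I expect the only real subtlety. The argument inside the proof of Lemma \ref{lem2} shows, via Theorem \ref{th2'}, that $d^k$ in \eqref{eq9_0} is strictly positive only at instants $t$ for which the freshly freed server $k$ sits at the minimum work level, i.e.\ $\hat{V}^k = V^k(t)=V^C(t)$. At any such instant, Lemma \ref{lem2_1} gives $F_j(t)\ge V_j(t)$ for every backlogged $j\in I$; since $V_j$ is a maximum over servers eligible to $j$ and at least one such server lies in $S$, we have $V_j(t)\ge V^C(t)=\hat{V}^k$. Consequently the condition $F_j\ge\hat{V}^k$ in the bonus loop is satisfied by every $j\in I$ simultaneously, and each $D_j$ is incremented by the same $d^k$.

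Integrating over all bonus events in $[t_0,t_1)$, I obtain $D_i(t_0,t_1)=D_j(t_0,t_1)$ for all $i,j\in I$, which is the claim. The principal care needed is to keep the inequality $V_j\ge V^C$ straight (maximum over eligible servers is at least the cluster minimum), and to recognise that the server-side Lemma \ref{lem2} argument already handles the ``no bonus away from the minimum work level'' part, so the user-side extension is purely bookkeeping.
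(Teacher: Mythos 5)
Your proposal is correct and follows essentially the same route as the paper: Corollary \ref{cor1.2} is obtained by specializing Lemma \ref{lem2} to the single cluster (where isolation is automatic and $(K^C+1)\lambda^C\le(K+1)\lambda_0$ since $K^C=K$ and $\min_{i\in I}\varphi_i\ge\min_{\forall i}\varphi_i$), and the promotion from server-level to user-level bonus equality that you spell out via $F_j\ge V_j\ge V^C=\hat{V}^k$ is exactly the observation already contained in the paper's proof of Lemma \ref{lem2}. Your write-up is merely a more explicit version of the same argument, so no further comparison is needed.
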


By choosing $\delta\ge (K+1)\lambda_0$, we can make sure that under conditions of Theorem \ref{th2} and for any possible configuration of the FOC and in any cluster $C=(I,S)$, all users receive the same amount of bonus.

\subsection{Choosing the Control Parameter $\delta$}
In this section, we discuss how to choose the parameter $\delta$. As it stems from Example \ref{exmp2}, it is desired to choose $\delta$ as small as possible. On the other hand, according to Corollary \ref{cor1.2} it is desired to choose $\delta\ge (K+1)\lambda_0$. Hence, we propose to choose $\delta=(K+1)\lambda_0$.

\emph{Proposition:} The control parameter, $\delta$ in $CM^4FQ$ algorithm could be chosen as:
\begin{equation}\label{eq26}
  \delta=(K+1)\lambda_0
\end{equation}
where $\lambda_0$ is defined in \eqref{eq27}.
In the rest of this paper, $\delta$ is chosen according to (\ref{eq26}). The following Theorem gives more insights on the reason of this choice.

\begin{theorem}\label{lem6}
Consider an arbitrary interval $[t_0,t_1)$ for which ${B}(t_0,t_1)={B}$. Assume that cluster $C_0=(I_0,S_0)$ in the FOC corresponding to $B$, $C_0\in\{C_m({B})\}_{m=1}^M$ has the minimum service rate. In addition, assume that any user $j\in I_0$ is continuously backlogged during interval $[t_0,t_1)$.

Assume there exists a time instant $\hat{t}_0\in [t_0,t_1)$ at which $V^{C_m}(\hat{t}_0^+)-\hat{V}^{C_0}(\hat{t}_0^+)\geq\delta$, for all clusters $C_m, m>0$. Applying $CM^4FQ$ algorithm for scheduling packets on servers, it can be shown that during $(\hat{t}_0,t_1)$ no packet of users $j\in I_m, m>0$ is chosen for service by servers $k\in S_0$.
\end{theorem}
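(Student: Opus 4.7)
The plan is a proof by contradiction combined with an inductive tracking of the work-level gap between the minimum-rate cluster $C_0$ and each higher cluster $C_m$. Suppose the conclusion fails and let $t^*\in(\hat{t}_0,t_1)$ be the earliest time at which some user $j\in I_m$ (with $m>0$) is chosen by some server $k\in S_0$. By the selection rule \eqref{eq5}, $F_j(t^*)=V^k(t^*)\leq\hat{V}^{C_0}(t^*)$. Conversely, since $j\in I_m$, FOC property~(c) yields a server $k'\in S_m$ with $\pi_{j,k'}=1$, so Lemma~\ref{lem2_1} gives $F_j(t^*)\geq V_j(t^*)\geq V^{C_m}(t^*)$. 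A contradiction will follow once I establish the invariant
\[
V^{C_m}(t)-\hat{V}^{C_0}(t)\;\geq\;\delta\qquad\text{for all } t\in(\hat{t}_0,\min(t^*,t_1)] \text{ and all } m>0.
\]

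The base case at $\hat{t}_0^+$ is the theorem's hypothesis. For the inductive step, I consider two event types. First, a dequeue at a server $k\in S_0$: by minimality of $t^*$, the chosen user $i^*$ lies in $I_0$, and Corollary~\ref{cor1.1} applied to the minimum-rate $C_0$ forces $\pi_{i^*,l}=0$ for all $l\notin S_0$. Hence the re-evaluation \eqref{eq8_1}--\eqref{eq8_2} affects only $V^l$ with $l\in S_0$; every $V^l$ with $l\in S_m$ stays untouched in that step. The inductive invariant guarantees that every $V^l$ with $l\in S_m$ lies strictly above $\hat{V}^k\leq\hat{V}^{C_0}(t^-)$, so such $V^l$ sit in the ``above" set of \eqref{eq9_0} and are decremented by the same $d^k$ as the maximum-achieving server in $S_0$ (which, by the monotonicity of the step-2 update, also has $V^l\geq\hat{V}^k$). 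Thus the bonus shifts both ends of the gap by the same amount, preserving the invariant. When $d^k\leq 0$ the bonus does not fire, and the slack built into $\delta=(K+1)\lambda_0$ absorbs the bounded step-2 growth $\leq\lambda_0$; to see that repeated such events cannot accumulate, I appeal to Theorem~\ref{th2'} on the intra-cluster spread of $C_0$, which keeps $\hat{V}^{C_0}-V^{C_0}\leq(K^{C_0}+1)\lambda^{C_0}\leq\delta$.

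Second, a dequeue at a server $k\in S_{m'}$ with $m'>0$: Corollary~\ref{cor1.1} again rules out $i^*\in I_0$, so $F_{i^*}(t^-)\geq V^{C_{m''}}(t^-)\geq \hat{V}^{C_0}(t^-)+\delta$ for the cluster $C_{m''}$ containing $i^*$. Consequently $F_{i^*}$ cannot equal the minimum $F$-value in any $B^l$ with $l\in S_0$ (the existing $I_0$-user witnessing $V^l$ has tag at most $\hat{V}^{C_0}(t^-)$), so the step-2 update leaves each such $V^l$ fixed and $\hat{V}^{C_0}$ is frozen. The bonus threshold $\hat{V}^k\geq V^{C_{m'}}>\hat{V}^{C_0}$ lies strictly above $\hat{V}^{C_0}$, so servers in $S_0$ are not in the ``above" set and are not shifted; the only effect on the gap is a possible decrement of $V^{C_m}$ by $d^k$, which Theorem~\ref{th2'} applied to $C_{m'}$ caps by $(K^{C_{m'}}+1)\lambda^{C_{m'}}\leq\delta$, keeping the invariant intact.

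The principal obstacle is the accounting in the first case: the step-2 re-evaluation can lift individual $V^l$ for $l\in S_0$ by up to $\lambda_0$ and does not necessarily trigger a compensating bonus, so I must carefully combine the within-cluster spread bound from Theorem~\ref{th2'} with the cross-cluster argument to show that repeated $S_0$-events cannot erode the $\delta$ margin. The crucial structural input is Corollary~\ref{cor1.1}, which decouples the two clusters for eligibility, together with the monotone behaviour of the bonus step, which shifts $V^{C_m}$ and the matching element of $S_0$ (when both lie ``above" the threshold) by exactly the same amount, so the difference $V^{C_m}-\hat{V}^{C_0}$ is insensitive to it.
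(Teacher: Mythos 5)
There is a genuine gap, and it lies exactly where you flag your ``principal obstacle.'' Your plan is to maintain the invariant $V^{C_m}(t)-\hat{V}^{C_0}(t)\geq\delta$ by an event-by-event induction, absorbing each per-event perturbation (bounded by $\lambda_0$) into the slack of $\delta$ and invoking Theorem~\ref{th2'} to prevent accumulation. But Theorem~\ref{th2'} only controls the \emph{intra}-cluster spread $\hat{V}^{C_0}-V^{C_0}$; it says nothing about how $\hat{V}^{C_0}$ drifts relative to $V^{C_m}$. Over the interval $(\hat{t}_0,t^*)$ there are arbitrarily many dequeue events at servers in $S_0$, and $\hat{V}^{C_0}$ grows without bound, roughly like $r^{C_0}(t-\hat{t}_0)$. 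The only reason the gap survives is that $V^{C_m}$ grows at least as fast, because $r^{C_m}\geq r^{C_0}$ --- and that is a \emph{rate} comparison which no amount of per-event bookkeeping can deliver. The paper supplies it through Lemma~\ref{lem4} (a lower bound $V^{C'_M}(\hat{t}_0^+,t)+D^{C'_M}(\hat{t}_0^+,t)\geq r^{C_0}(t-\hat{t}_0)-K^{C'_M}\lambda^{C'_M}$ on the aggregated higher clusters) and Lemma~\ref{lem5} (a matching upper bound on ${V}^{k}(t)-\hat{V}^{C_0}(\hat{t}_0^+)+D^{k}(\hat{t}_0^+,t)$ for $k\in S_0$), both derived from the single-cluster analysis of Theorem~\ref{th2}. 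Subtracting these shows the initial gap of $\delta$ is eroded by at most $K^{C_0}\lambda^{C_0}+K^{C'_M}\lambda^{C'_M}<\delta=(K+1)\lambda_0$, which is where the choice of $\delta$ actually enters. Your proposal never compares growth rates, so the inductive step cannot close.

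A secondary problem is that your invariant is stronger than what is true or needed. The paper only establishes $V^{C_m}(t)-V^{k}(t)>0$ at the instants a server $k\in S_0$ becomes free; the gap is allowed to dip below $\delta$ (by up to the $K\lambda$ error terms above), so insisting on $\geq\delta$ at every event would make the inductive step fail even with correct bookkeeping. The outer frame of your argument --- contradiction at the first bad time $t^*$, using \eqref{eq5} to get $F_j(t^*)=V^k(t^*)\leq\hat{V}^{C_0}(t^*)$ and Lemma~\ref{lem2_1} plus FOC property (c) to get $F_j(t^*)\geq V^{C_m}(t^*)$ --- is sound and is essentially the paper's reduction; the comparison $D^{C'_M}(\hat{t}_0^+,t)\geq D^{k}(\hat{t}_0^+,t)$ that the paper extracts from the positivity assumption plays the role you assign to the ``bonus shifts both ends equally'' observation. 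To repair the proof, replace the event-by-event induction with the interval-long bounds of Lemmas~\ref{lem4} and~\ref{lem5} applied to the aggregated cluster $C'_M$ and to $C_0$ respectively.
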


Theorem \ref{lem6} considers an arbitrary interval for which the steady-state condition does not necessarily hold. To prove this Theorem, we need the following Lemmas which generalize the results of Theorem \ref{th2} for non steady state intervals. The proof of Theorem \ref{lem6} is provided in the appendix.

\begin{lem}\label{lem4}
Consider an arbitrary interval $[t_0,t_1)$ for which $B(t_0,t_1)=B$. Assume that cluster $C_M=(I_M,S_M)$ in the FOC corresponding to $B$, $C_M\in\{C_m(B)\}_{m=1}^M$ has the maximum service rate. Let $V^{C_M}(t)<\infty$ at any instant $t\in[t_0,t_1)$. Regarding interval $[t_0,t)$, $t_0<t\le t_1$, it can be shown that:
\begin{equation}\label{eq28}
  V^{C_M}(t_0,t)+D^{C_M}(t_0,t)\geq r^{C_M}(t-t_0)-K^{C_M}\lambda^{C_M}
\end{equation}
where $D^{C}(t_0,t)$ is defined for any cluster $C=(I,S)$ as:
\begin{equation}\label{}
  D^{C}(t_0,t)\triangleq\min_{k\in S}D^k(t_0,t).
\end{equation}
\end{lem}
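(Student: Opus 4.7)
The plan is to extend the lower-bound argument of Theorem \ref{th2} (inequality \eqref{eq23}) beyond the steady-state, isolated setting, by exploiting that $C_M$ has the maximum cluster rate in the FOC. First, I would establish that $C_M$ is effectively isolated throughout $[t_0, t_1)$. By Corollary \ref{cor1.1}, if any user $j \in I_l$ with $l \neq M$ and positive rate were eligible for some server $k \in S_M$, we would need $r^{C_l} > r^{C_M}$, contradicting the maximality of $r^{C_M}$. Consequently, only users in $I_M$ can ever appear in $B^k(t)$ for $k \in S_M$. Combined with the hypothesis $V^{C_M}(t) < \infty$, each server $k \in S_M$ always has an eligible backlogged user and is therefore continuously busy during $[t_0, t_1)$. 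Accounting for at most one in-progress packet per server, the total length of packets selected for service by servers in $S_M$ during $[t_0, t)$ is then at least $\rho^{C_M}(t - t_0) - K^{C_M} L^{max}$, where $\rho^{C_M} \triangleq \sum_{k \in S_M} \rho^k$, and all this work accrues to users in $I_M$.

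Next, I would convert this aggregate work bound into a statement about the minimum work level. The FOC structure enforces the fluid identity $\rho^{C_M} = r^{C_M}\sum_{i \in I_M}\varphi_i$ (since servers in $S_M$ collectively deliver their entire rate to users in $I_M$ at common normalized rate $r^{C_M}$), so dividing by $\sum_{i\in I_M}\varphi_i$ yields that the weighted-average normalized work received by users in $I_M$ during $[t_0, t)$ is at least $r^{C_M}(t - t_0) - \lambda^{C_M}$. Tracking the algorithm's bookkeeping, every packet selection of user $i$ by server $k \in S_M$ increments both $F_i + D_i$ (by $L_p/\varphi_i$) and leaves it consistent with $V^k + D^k$; bonus events \eqref{eq9_1}--\eqref{eq9_4} leave $F_i + D_i$ and $V^k + D^k$ individually invariant; and the re-activation rule \eqref{eq4} can only push $F_i + D_i$ upward. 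Together with the inequality $F_i(t) \ge V^k(t)$ from Lemma \ref{lem2_1} for $i \in B^k(t)$, these bookkeeping invariants allow me to convert the aggregate bound, server by server and then minimized over $k \in S_M$, into $V^{C_M}(t_0, t) + D^{C_M}(t_0, t) \geq r^{C_M}(t - t_0) - K^{C_M}\lambda^{C_M}$, where the factor $K^{C_M}$ in the slack absorbs the worst-case spread between the $K^{C_M}$ servers in $S_M$.

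The main obstacle will be controlling the gap between the \emph{average} normalized work received by $I_M$ and the \emph{minimum} of $V^k + D^k$ over $k \in S_M$, without appealing to the steady-state or isolation hypotheses used in Theorem \ref{th2}. Two subtleties drive the $K^{C_M}\lambda^{C_M}$ slack and must be handled explicitly: first, users in $I_M$ may become idle and then re-activate during the interval, producing upward jumps in $F_i$ via \eqref{eq4} that must be reconciled with the work they actually receive; and second, the bonus mechanism \eqref{eq9_0}--\eqref{eq9_4} selectively compresses only the high-$V^k$ servers within $S_M$ by equal amounts of $d^k$, so one must verify that minimizing $V^k + D^k$ over $S_M$ still tracks the cluster's true aggregate progress. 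Executing this bookkeeping rigorously---paralleling the appendix proof of Theorem \ref{th2} but now inside a cluster that is isolated in the max-rate sense yet need not be in steady state---is the technical heart of the argument.
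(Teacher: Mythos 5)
There is a genuine gap in your first paragraph. You assert that the hypothesis $V^{C_M}(t)<\infty$ forces every server $k\in S_M$ to have an eligible backlogged user at all times, and you then rest the whole argument on the work-conservation bound that the total length of packets chosen by servers of $S_M$ during $[t_0,t)$ is at least $\rho^{C_M}(t-t_0)-K^{C_M}L^{max}$. But by \eqref{eq13}, $V^{C_M}(t)=\min_{k\in S_M}V^k(t)$, so its finiteness only guarantees that \emph{one} server of $S_M$ is backlogged; the lemma assumes only $B(t_0,t_1)=B$, not $\tilde{B}(t_0,t_1)=B$, so users of $I_M$ may go idle and individual servers of $S_M$ may starve during the interval. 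In that regime your aggregate-throughput lower bound is simply false, and the reason the conclusion \eqref{eq28} nevertheless survives is not work conservation but the tag dynamics: an idle user drops out of the minimum defining $V^k$, and on reactivation \eqref{eq4} lifts its tag to at least the current work levels, so idleness can only push $V^{C_M}$ upward. Your proposal never exploits this mechanism (you only note in passing that \eqref{eq4} pushes $F_i+D_i$ upward), and your second paragraph explicitly defers ``the technical heart'' of converting an aggregate work bound into a bound on $\min_{k\in S_M}\bigl(V^k+D^k\bigr)$; as written, the proof is incomplete precisely where the non-steady-state generality has to be earned.

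The paper takes a shorter route worth comparing against: it does not redo the accounting inside Theorem \ref{th2} but reduces to it by a worst-case argument. Namely, the configuration minimizing $V^{C_M}(t_0,t)+D^{C_M}(t_0,t)$ is the one in which every user of $I_M$ is continuously backlogged on $[t_0,t)$ with $F_j(t_0)=V^{C_M}(t_0)$ and no packet of $I_M$ is served outside $S_M$ (isolation in the other direction being automatic from Corollary \ref{cor1.1} and the maximality of $r^{C_M}$, exactly as you observed). Under that configuration all hypotheses of Theorem \ref{th2} hold, \eqref{eq23} applies to \emph{every} server of $S_M$, and taking the minimum over $k\in S_M$ yields \eqref{eq28}. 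If you prefer your direct route, you must (i) replace the ``all servers busy'' step by an argument that handles idle users via \eqref{eq4}, and (ii) actually carry out the average-to-minimum conversion you postponed; otherwise, adopt the worst-case reduction.
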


\begin{IEEEproof}
To derive a lower bound on $V^{C_M}(t_0,t)+D^{C_M}(t_0,t)$, consider the worst case that any user $j\in I_M$ is continuously backlogged during $[t_0,t)$, and for any user $j\in I_M$, $F_j(t_0)$ equals to $V^{C_M}(t_0)$. Furthermore, assume that during interval $[t_0,t_1)$ no packet of users $j\in I_M$ is given service by servers $k\notin S_M$. When considering such worst case conditions (which result in the minimum of $V^{C_M}(t_0,t)+D^{C_M}(t_0,t)$), cluster $C_M$ could be treated as an isolated cluster for which all conditions of Theorem \ref{th2} hold in conjunction with the interval $[t_0,t)$. Hence, (\ref{eq23}) holds for \emph{all servers} $k\in S_M$. This consequently results in \eqref{eq28}.
\end{IEEEproof}

\begin{lem}\label{lem5}
Consider an arbitrary interval $[t_0,t_1)$ for which $\tilde{B}(t_0,t_1)=\tilde{B}$. Assume that cluster $C_0$ in the FOC corresponding to $B$, $C_0\in\{C_m(\tilde{B})\}_{m=1}^M$ has the minimum service rate. Given that server $k\in S_0$ becomes free at time $t\in(t_0,t_1)$, it can be shown that:
\begin{eqnarray}\label{eq29}
  {V}^{k}(t)-\hat{V}^{C_0}(t_0)+D^{k}(t_0,t)\qquad\qquad\quad\:\\\nonumber
  \leq r^{C_0}(t-t_0)+K^{C_0}\lambda^{C_0}
\end{eqnarray}
\end{lem}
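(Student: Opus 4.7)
The approach parallels the proof of Lemma~\ref{lem4}: reduce to a worst-case configuration in which Theorem~\ref{th2} applies, then read off the desired inequality from \eqref{eq17}. The difference is that here we need an \emph{upper} bound (rather than a lower bound) and the relevant cluster is $C_0$, the one with the \emph{minimum} fair service rate.

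First, observe that by Corollary~\ref{cor1.1} no user $j\in I_0$ is eligible for any server outside $S_0$, since $C_0$ has the minimum service rate among all clusters. Hence, during $[t_0,t_1)$ users in $I_0$ are served only by servers in $S_0$; one direction of isolation is automatic.

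Next, I would construct the worst case that maximizes $V^k(t)+D^k(t_0,t)-\hat V^{C_0}(t_0)$ by (i)~replacing the initial service tag of each user $j\in I_0$ by $F_j(t_0)=\hat V^{C_0}(t_0)$ and, correspondingly, the initial work level of each server $k\in S_0$ by $V^k(t_0)=\hat V^{C_0}(t_0)$; (ii)~treating every user $j\in I_0$ as continuously backlogged during $[t_0,t)$; and (iii)~preventing any user $j\notin I_0$ from being served by a server $k\in S_0$. Each of these modifications only increases the aggregate amount of work that servers in $S_0$ do on behalf of users in $I_0$, and therefore only increases $V^k(t)+D^k(t_0,t)-\hat V^{C_0}(t_0)$. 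Under this configuration, cluster $C_0$ is isolated with uniform initial work level $\hat V^{C_0}(t_0)$, so the full hypotheses of Theorem~\ref{th2} hold on $[t_0,t)$ with $C=C_0$. Applying eq.~\eqref{eq17} yields $V^k(t_0,t)+D^k(t_0,t)\leq r^{C_0}(t-t_0)+K^{C_0}\lambda^{C_0}$, and because $V^k(t_0)=\hat V^{C_0}(t_0)$ in the worst case, this is precisely \eqref{eq29}.

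The main obstacle is the monotonicity claim used to declare the above construction ``worst.'' Modifications~(ii) and (iii) are intuitively monotone --- additional backlogged demand and protected server capacity can only speed up the progress of $V^k$ --- but modification~(i), which equalizes all initial service tags upward, requires a more careful sample-path coupling argument. The key is that the selection rule and the $UpdateV$ subroutine depend on $\{F_i\}$ and $\{V^k\}$ only through differences (the min-selection in \eqref{eq5} and the regulation gap $d^k$ in \eqref{eq9_0}), so shifting all of the tags in $I_0\cup S_0$ upward by the same amount produces an exactly shifted trajectory. Combined with the bounded-gap guarantee of Theorem~\ref{th2'} under the choice $\delta=(K+1)\lambda_0$, this ensures that the cross-cluster bonus mechanism stays locally confined to $C_0$ and does not invalidate the reduction to the isolated setting of Theorem~\ref{th2}.
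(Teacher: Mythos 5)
Your proposal matches the paper's own proof: both reduce to the worst-case configuration in which all servers of $S_0$ start at work level $\hat V^{C_0}(t_0)$, users in $I_0$ are continuously backlogged, and $C_0$ is isolated, and then read off \eqref{eq29} from \eqref{eq17} of Theorem~\ref{th2}. Your additional remarks on why the construction is extremal (the shift-invariance of the selection rule and the one automatic direction of isolation via Corollary~\ref{cor1.1}) only make explicit what the paper leaves implicit.
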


\begin{IEEEproof}
To derive an upper bound on ${V}^{k}(t)-\hat{V}^{C_0}(t_0)+D^{k}(t_0,t)$, consider the best situation that during interval $[t_0,t_1)$ only users $j\in I_0$ are given service by servers $l\in S_0$. Furthermore, assume that $V^l(t_0)=\hat{V}^{C_0}(t_0)$ for all servers $l\in S_0$. When considering the best situation (which results in the maximum of ${V}^{k}(t)-\hat{V}^{C_0}(t_0)+D^{k}(t_0,t)$), cluster $C_0$ could be treated as an isolated cluster for which all conditions of Theorem \ref{th2} hold in conjunction with the interval $[t_0,t)$. Therefor, (\ref{eq17}) holds for server $k$ which becomes free at $t$. Consequently, (\ref{eq29}) is deduced from (\ref{eq17}).
\end{IEEEproof}

Upon the results developed in this section, the throughput performance of $CM^4FQ$ algorithm will be analyzed in the next section.

\section{Performance Analysis}
In the context of single server FQ, it is known that fairness can not be fully achieved through packet by packet scheduling \cite{SCFQ}. Therefor, some measures of fairness have been developed which are used as benchmarks to study fair performance of FQ algorithms.

Among fairness measures, the most evident is \emph{steady state throughput guarantee}. Specifically, given that the set of backlogged users does not change during interval $[t_0,t_1)$, the work offered to any backlogged user $i$, $W_i(t_0,t_1)$, should have a bounded difference with respect to the user's fair service share. \cite{SFQ}

Regarding $CM^4$ fairness definition, we propose a measure of fairness that is referred as \emph{worst case throughput guarantee}. Subsequently, it will be shown that $CM^4FQ$ algorithm achieves $CM^4$ fairness in the sense of worst case throughput guarantee.
\begin{definition}
Consider an arbitrary interval $[t_0,t_1)$ for which $B(t_0,t_1)=B$ and $\tilde{B}(t_0,t_1)=\tilde{B}$. It is said that an FQ algorithm achieves $CM^4$ fairness in the sense of \emph{worst case throughput guarantee}, when for any user $i$ that is continuously backlogged during $[t_0,t_1)$:
\begin{equation}\label{eq35}
  \frac{r_i(B).[t_1-t_0]}{\varphi_i}-\Delta\leq\frac{W_i(t_0,t_1)}{\varphi_i}\leq\frac{r_i(\tilde{B}).[t_1-t_0]}{\varphi_i}+\tilde{\Delta}
\end{equation}

where, $r_i(B)$ is the fair service rate of user $i$ in the FOC corresponding to $B$, $\{C_m(B)\}$, and $r_i(\tilde{B})$ is the fair service rate of user $i$ in the FOC corresponding to $\tilde{B}$, $\{C_m(\tilde{B})\}$.
\end{definition}

Intuitively, user $i\in \tilde{B}(t_0,t_1)$ receives the minimum amount of service when all users $j\in B(t_0,t_1)$ are almost backlogged during $[t_0,t_1)$. On the other hand, user $i$ receives the maximum amount of service when users $j\in B(t_0,t_1)-\tilde{B}(t_0,t_1)$ are almost idle during $[t_0,t_1)$. Hence, the minimum service rate achievable by user $i$ is $r_i(B)$, which corresponds to the case that all users $j\in B(t_0,t_1)$ are continuously backlogged during $[t_0,t_1)$. On the other hand, the maximum service rate achievable by user $i$ is $r_i(\tilde{B})$, which corresponds to the case that all users $j\in B(t_0,t_1)-\tilde{B}(t_0,t_1)$ are continuously idle during $[t_0,t_1)$.

\begin{theorem}\label{th4}
Consider an interval $[t_0,t_1)$ for which $B(t_0,t_1)=B$. In the FOC corresponding to $B$, let clusters be indexed in an increasing order of their service rates, $\{C_1,...,C_M\}$.
Consider user $i$ that is continuously backlogged during $[t_0,t_1)$ and belongs to some cluster $C_m$.
It can be shown that:
\begin{equation}\label{eq36}
  \frac{W_i(t_0,t_1)}{\varphi_i}\geq \frac{r_i(B).[t_1-t_0]}{\varphi_i}-\Delta^{init}_{i}(t_0)-\lambda_0\sum_{l\ge m}K^{C_l}
\end{equation}
where, $\Delta_{i}^{init}(t_0)$ is defined as:
\begin{equation}\label{eq37}
\Delta_i^{init}(t_0)\triangleq F_i(t_0)-\min\{V^{C_l}(t_0)\mid l\ge m\}
\end{equation}
\end{theorem}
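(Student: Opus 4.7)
The plan is to reduce the lower bound on $W_i(t_0,t_1)/\varphi_i$ to a lower bound on the growth of the cluster work level $V^{C_m}$, together with bookkeeping of the bonus $D_i$. The key tools will be Lemma~\ref{lem2_1} (which pins $F_i(t_1)$ above $V^{C_m}(t_1)$, since $i\in I_m$ is eligible for every server in $S_m$), Lemma~\ref{lem4} (which lower-bounds $V^C(t_0,t_1)+D^C(t_0,t_1)$ by $r^C(t_1-t_0)-K^C\lambda^C$), and Equation~\eqref{eq12} (which converts $F_i+D_i$ into $W_i/\varphi_i$ for a continuously backlogged user). I will proceed by downward induction on the cluster index, from $l=M$ down to $l=m$, showing the analogous bound for every continuously backlogged user in $I_l$.

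For the base case $l=M$ with $i\in I_M$ continuously backlogged, I would first invoke Lemma~\ref{lem2_1} to get $F_i(t_1)\ge V_i(t_1)\ge V^{C_M}(t_1)$, and then apply Lemma~\ref{lem4} to obtain $V^{C_M}(t_0,t_1)+D^{C_M}(t_0,t_1)\ge r^{C_M}(t_1-t_0)-K^{C_M}\lambda_0$ (using $\lambda^{C_M}\le\lambda_0$). The choice $\delta=(K+1)\lambda_0$, via Corollary~\ref{cor1.2}, guarantees that every user and server within a single cluster receives the same bonus, so $D_i(t_0,t_1)=D^{C_M}(t_0,t_1)$. Substituting into \eqref{eq12} and noting that $\Delta^{init}_i(t_0)=F_i(t_0)-V^{C_M}(t_0)$ when $m=M$, routine algebra recovers the claimed bound.

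For the inductive step I assume the bound holds for all clusters above $l$ and consider $i\in I_l$ continuously backlogged. I would work with the supercluster $\mathcal{S}^l=\bigcup_{l'\ge l}S_{l'}$ and $\mathcal{I}^l=\bigcup_{l'\ge l}I_{l'}$. By Corollary~\ref{cor1.1}, servers in $\mathcal{S}^l$ are only eligible to serve users in $\mathcal{I}^l$, so the supercluster is closed under service. Since $i$ keeps every server in $S_l$ continuously busy, the total work performed by $\mathcal{S}^l$ during $[t_0,t_1)$ is at least $\rho(\mathcal{S}^l)(t_1-t_0)$ minus boundary terms of order $|\mathcal{S}^l|L^{max}$. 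The inductive hypothesis, complemented by a symmetric upper bound in the spirit of Lemma~\ref{lem5} extended over the whole interval, would pin the work absorbed by users in each higher cluster $I_{l'}$ ($l'>l$) to within $\lambda_0 K^{C_{l'}}$ of its fair-rate quota $r^{C_{l'}}\varphi(I_{l'})(t_1-t_0)$. Subtracting these from the total work delivered by $\mathcal{S}^l$ leaves at least $r^{C_l}\varphi(I_l)(t_1-t_0)-\lambda_0\sum_{l'\ge l}K^{C_{l'}}$ available for users in $I_l$, and the min-$F$ selection rule~\eqref{eq5} then distributes this quota evenly so that $i$ receives its share.

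The main obstacle will be the inductive step. Lemma~\ref{lem4} only provides a lower bound for the top cluster and Lemma~\ref{lem5} only an upper bound for a server in the bottom cluster at the instant it becomes free, so extending both to intermediate clusters over the full interval $[t_0,t_1)$ requires an auxiliary worst-case argument in the style of Theorem~\ref{th2}. A further subtlety is justifying the uniform-bonus equality $D_i(t_0,t_1)=D^{C_l}(t_0,t_1)$ across the induction; this reduces to showing that whenever $d^k>0$ triggers an update in $UpdateV$, the set of $F$- and $V$-values above the threshold $\hat{V}^k$ coincides with a union of whole clusters, which in turn follows from Theorem~\ref{th2'} together with the choice $\delta=(K+1)\lambda_0$ that keeps intra-cluster work levels within $(K^C+1)\lambda^C$ of each other.
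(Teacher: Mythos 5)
Your base case is essentially the paper's entire proof, but applied only to the top cluster; the paper needs no induction at all. It defines the aggregate $C'_m\triangleq\{C_m,\dots,C_M\}$ with $V^{C'_m}(t)\triangleq\min_{l\ge m}V^{C_l}(t)$, notes via Lemma~\ref{lem2_1} that $F_i(t_1)\ge V^{C'_m}(t_1)$ (every server eligible for $i$ lies in $\bigcup_{l\ge m}S_l$ by Corollary~\ref{cor1.1}), uses $D_i(t_0,t_1)\ge D^{C'_m}(t_0,t_1)$ together with \eqref{eq12}, and then applies Lemma~\ref{lem4} \emph{once} to the aggregate $C'_m$, treated as a single cluster of service rate $r^{C_m}$, to get $V^{C'_m}(t_0,t_1)+D^{C'_m}(t_0,t_1)\ge r^{C_m}(t_1-t_0)-\lambda_0\sum_{l\ge m}K^{C_l}$. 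That is exactly your base-case computation run on the union of the top clusters, and it delivers \eqref{eq36} directly.

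Your inductive step, by contrast, has gaps that I do not think you can close as stated. First, the work-conservation premise that the supercluster $\mathcal{S}^l$ delivers $\rho(\mathcal{S}^l)(t_1-t_0)$ minus boundary terms fails: the theorem only assumes $B(t_0,t_1)=B$, so users in higher clusters may be backlogged only intermittently, and a server in $S_{l'}$, $l'>l$, can sit idle for long stretches (user $i$ being continuously backlogged keeps busy only the servers it is eligible for, not all of $\mathcal{S}^l$). Second, subtracting the higher clusters' consumption requires an \emph{upper} bound on the work they absorb over the whole interval; your inductive hypothesis supplies only lower bounds, and the needed upper bound (the theorem following \ref{th4}) is stated relative to $\tilde{B}$, carries its own $\tilde{\Delta}^{init}$ error terms, and would have to hold for users that are not continuously backlogged. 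Third, converting an aggregate quota for $I_l$ into a per-user guarantee via ``the min-$F$ rule distributes evenly'' is precisely the nontrivial content of the fairness analysis and cannot be asserted; the paper sidesteps it because $F_i$ is tied pointwise to $V^{C'_m}$ rather than to an aggregate work tally. Your observation about uniform bonuses within a cluster is in the right spirit, but the inequality the paper actually needs is only $D_i(t_0,t_1)\ge\min_{k\in S'_m}D^k(t_0,t_1)$, which is weaker and easier.
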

For the proof, refer to the appendix.

Theorem \ref{th4} describes performance of $CM^4FQ$ algorithm in a dynamic situation. Specifically, for an arbitrary interval $[t_0,t_1)$ during which the set of backlogged users possibly changes several times, (\ref{eq36}) characterizes the minimum throughput which could be guaranteed for a backlogged user $i$. The next theorem gives the maximum throughput achievable by a backlogged user.
%According to (\ref{eq36}), the minimum work offered to user $i$ during $[t_0,t_1)$ depends on the initial condition at $t_0$.

\begin{theorem}
Consider an interval $[t_0,t_1)$ for which $\tilde{B}(t_0,t_1)=\tilde{B}$. In the FOC corresponding to $\tilde{B}$, let clusters be indexed in an increasing order of their service rates, $\{C_1,...,C_{\tilde{M}}\}$. Consider user $i$ which belongs to some cluster $C_{\tilde{m}}$. It can be shown that:
\begin{equation}\label{eq38}
  \frac{W_i(t_0,t_1)}{\varphi_i}\leq \frac{r_i(\tilde{B}).[t_1-t_0]}{\varphi_i}+\tilde{\Delta}_i^{init}(t_0)+\lambda_0\sum_{l\le \tilde{m}}K^{C_l}
\end{equation}
where, $\tilde{\Delta}_i^{init}(t_0)$ is defined as:
\begin{equation}\label{eq39}
\tilde{\Delta}^{init}_i(t_0)\triangleq\max\{\hat{V}^{C_l}(t_0)\mid l\le \tilde{m}\}-F_i(t_0)+L^{max}/\varphi_i
\end{equation}
\end{theorem}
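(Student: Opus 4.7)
The plan is to dualize the argument used for Theorem \ref{th4}. Since user $i$ is continuously backlogged on $[t_0,t_1)$, \eqref{eq12} together with the invariance of $F_j+D_j$ under the bonus updates \eqref{eq9_1}--\eqref{eq9_2} gives $W_i(t_0,t_1)/\varphi_i=[F_i(t_1)+D_i(t_1)]-[F_i(t_0)+D_i(t_0)]$, reducing the claim to an upper bound on $F_i(t_1)+D_i(t_1)$.

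Let $t^*\in[t_0,t_1)$ be the latest instant at which $i$ is selected for service, on some server $k^*$. If no such $t^*$ exists then $W_i(t_0,t_1)=0$ and the inequality is immediate; assume $t^*$ exists. Since $i\in I_{\tilde m}$ is eligible for $k^*$, Corollary \ref{cor1.1} applied to the $\tilde B$-FOC forces $k^*\in S_{l^*}$ with $l^*\le\tilde m$. The selection rule \eqref{eq5} gives $F_i(t^*)=V^{k^*}(t^*)$ and the service step raises $F_i$ by at most $L^{max}/\varphi_i$, while on $(t^*,t_1)$ bonuses preserve $F_i+D_i$ and no service of $i$ occurs, yielding $F_i(t_1)+D_i(t_1)\le V^{k^*}(t^*)+L^{max}/\varphi_i+D_i(t^*)$.

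The core step is to establish an \emph{upper counterpart} of Lemma \ref{lem5} covering every cluster $C_{l^*}$ with $l^*\le\tilde m$, namely
\[
V^{k^*}(t) + D_i(t_0,t) \;\le\; \max_{l\le\tilde m}\hat V^{C_l}(t_0) + r^{C_{\tilde m}}(t-t_0) + \lambda_0\sum_{l\le\tilde m}K^{C_l}
\]
for all $t\in[t_0,t_1)$. I would obtain this by regarding $\bigcup_{l\le\tilde m}C_l$ as an isolated sub-region of the $\tilde B$-FOC: Theorem \ref{lem6} guarantees that, after a transient whose effect is absorbed into the additive $\lambda_0\sum_{l\le\tilde m}K^{C_l}$ term, no user in a cluster $C_{l'}$ with $l'>\tilde m$ is chosen on a server in this region, so Theorem \ref{th2} can be applied cluster by cluster with growth rate at most $r^{C_{l^*}}\le r^{C_{\tilde m}}$. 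The passage between $D^{k^*}$ and $D_i$ uses Corollary \ref{cor1.2} with $\delta=(K+1)\lambda_0$, which forces uniform bonus application inside each FOC cluster and hence $D_i(t_0,t)=D^{k^*}(t_0,t)$ whenever $i$ and $k^*$ co-occupy a cluster throughout $[t_0,t]$; any cross-cluster epoch is again absorbed into the additive term.

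Substituting this bound at $t=t^*$ into the previous display and using $\tilde\Delta_i^{init}(t_0)=\max_{l\le\tilde m}\hat V^{C_l}(t_0)-F_i(t_0)+L^{max}/\varphi_i$ together with $t^*-t_0\le t_1-t_0$ yields \eqref{eq38}. The main obstacle is the upper-counterpart of Lemma \ref{lem5} for non-minimum clusters: before the Theorem \ref{lem6} gap is established, higher-cluster users may be served transiently by servers in $S_{l^*}$ with $l^*<\tilde m$, and the delicate bookkeeping needed to keep the cumulative effect of such transients bounded by $\lambda_0\sum_{l\le\tilde m}K^{C_l}$, rather than letting it scale with $t_1-t_0$, is the technical crux of the proof.
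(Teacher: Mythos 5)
Your proposal is essentially the paper's own argument: the paper omits this proof, saying only that it ``follows the same line of argument as the proof of Theorem~\ref{th4}'', and your dualization --- aggregate the clusters $C_1,\dots,C_{\tilde m}$, tie $F_i(t_1)+D_i(t_1)$ to the work level of the last server that selects user $i$ (which Corollary~\ref{cor1.1} confines to $\bigcup_{l\le\tilde m}S_l$), and apply the upper-bound counterpart of Lemma~\ref{lem4}, namely the argument of Lemma~\ref{lem5}, to the aggregated region with rate $r^{C_{\tilde m}}$ --- is exactly that line of argument. The transient you flag as the technical crux is in fact benign in this direction: users of $C_1,\dots,C_{\tilde m}$ can never be served outside that region, while higher-cluster users served inside it only consume capacity and hence only slow the growth of the work levels, so the extremal-case reasoning of Lemma~\ref{lem5} already absorbs it.
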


\begin{IEEEproof}
The proof follows the same line of argument as the proof of Theorem \ref{th4}. So, for the sake of brevity it is not stated.
\end{IEEEproof}

\begin{corollary}\emph{(Worst Case Throughput Guarantee)}\label{cor2}
For any user $i$ that is backlogged at $t_0$, it can be verified that:
\begin{eqnarray}
% \nonumber to remove numbering (before each equation)
\Delta^{init}_i(t_0) &\leq& (K-1)\delta+L^{max}/\varphi_i \label{eq40} \\
\tilde{\Delta}^{init}_i(t_0) &\leq& (K-1)\delta+L^{max}/\varphi_i \label{eq41}
\end{eqnarray}
As a result, (\ref{eq35}) is established for user $i$ that is continuously backlogged during $[t_0,t_1)$, when opting $\Delta=\tilde{\Delta}=K\delta$.
\end{corollary}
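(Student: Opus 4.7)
The plan is to reduce Corollary \ref{cor2} to the two initial-condition bounds (\ref{eq40}) and (\ref{eq41}), and then assemble the main inequality (\ref{eq35}) by substituting them into the lower-bound statement (\ref{eq36}) of Theorem \ref{th4} and into its upper-bound analog (\ref{eq38}). The closing algebra is routine: with $\sum_{l\ge m}K^{C_l}\le K$, $L^{max}/\varphi_i\le\lambda_0$, and the choice $\delta=(K+1)\lambda_0$ from (\ref{eq26}), one obtains
\begin{equation*}
\Delta^{init}_i(t_0)+\lambda_0\sum_{l\ge m}K^{C_l}\le(K-1)\delta+\frac{L^{max}}{\varphi_i}+K\lambda_0\le(K-1)\delta+(K+1)\lambda_0=K\delta,
\end{equation*}
so (\ref{eq36}) delivers the lower bound in (\ref{eq35}); the upper-bound direction is symmetric.

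The substantive work is thus (\ref{eq40}) and (\ref{eq41}). My approach is to isolate two auxiliary invariants maintained by $CM^4FQ$. The first is a \emph{spread invariant} on work levels: at every time $t$, $\max_k V^k(t)-\min_k V^k(t)\le(K-1)\delta$. The gap-regulation block (\ref{eq9_0})--(\ref{eq9_4}) in $UpdateV$ pins the spacing between any two consecutive finite values in the sorted work-level sequence to at most $\delta$, while $ActivateServers$ in (\ref{eq3_0})--(\ref{eq3_1}) reinjects a returning idle server exactly $\delta$ above the current finite maximum; since there are at most $K$ distinct finite values and each consecutive gap is $\le\delta$, the total range is $\le(K-1)\delta$, proved by induction over events. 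The second invariant is a \emph{service-tag tightness} bound: for any backlogged user $i$, $F_i(t)\le V_i(t)+L^{max}/\varphi_i$, where $V_i(t)=\max\{V^k(t)\mid\pi_{i,k}=1\}$. Right after a service of $i$ by server $k$ at $\tau$, (\ref{eq6}) sets $F_i(\tau^+)=V^k(\tau)+L_p/\varphi_i$ while (\ref{eq8_2}) forces $V^k(\tau^+)\in[V^k(\tau),F_i(\tau^+)]$ whenever $i$ remains in $B^k$, so the gap is at most $L_p/\varphi_i\le L^{max}/\varphi_i$; and the bonus block (\ref{eq9_1})--(\ref{eq9_3}) moves $F_i$ and every $V^l\ge\hat V^k$ downward in lock-step, so a short case analysis shows it never widens the $F_i-V_i$ gap.

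With both invariants, for $i\in I_m$ the spread invariant gives $V_i(t_0)\le\max_k V^k(t_0)\le\min_k V^k(t_0)+(K-1)\delta\le\min\{V^{C_l}(t_0)\mid l\ge m\}+(K-1)\delta$ (the last step because the minimum over a subset of servers dominates the overall minimum), and the tightness invariant then yields $F_i(t_0)\le V_i(t_0)+L^{max}/\varphi_i$, establishing (\ref{eq40}). The proof of (\ref{eq41}) is parallel but uses only Lemma \ref{lem2_1} on the opposite side: $F_i(t_0)\ge V_i(t_0)\ge\min_k V^k(t_0)\ge\max_k V^k(t_0)-(K-1)\delta\ge\max\{\hat V^{C_l}(t_0)\mid l\le\tilde m\}-(K-1)\delta$, and the $L^{max}/\varphi_i$ term already built into the definition of $\tilde\Delta^{init}_i$ completes the bound. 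The main obstacle I expect is the tightness invariant in the case where user $i$ re-enters the backlog via (\ref{eq4}) after a long idle period carrying a large previously-accumulated $F_i$: one has to argue that the bonus reductions acting on $F_i$ during idleness, combined with the $V_i$ lifted by $ActivateServers$ at re-entry, together keep $F_i-V_i$ within $L^{max}/\varphi_i$ so that the $\max\{F_i,V_i\}$ rule does not leave the invariant violated.
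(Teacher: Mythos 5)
The paper itself offers no proof of this corollary (it is stated as ``it can be verified''), so there is nothing to compare against; judged on its own terms, your reduction and closing arithmetic are exactly right: $(K-1)\delta+L^{max}/\varphi_i+K\lambda_0\le(K-1)\delta+(K+1)\lambda_0=K\delta$ under \eqref{eq26}, so \eqref{eq40}--\eqref{eq41} plugged into \eqref{eq36} and \eqref{eq38} do yield \eqref{eq35} with $\Delta=\tilde\Delta=K\delta$. Your treatment of \eqref{eq41} is also essentially complete: it needs only Lemma \ref{lem2_1} together with the spread bound $\max_kV^k-\min_kV^k\le(K-1)\delta$, and the chain $F_i(t_0)\ge V_i(t_0)\ge\min_kV^k(t_0)\ge\max\{\hat V^{C_l}(t_0)\mid l\le\tilde m\}-(K-1)\delta$ is correct.

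The genuine gap is in \eqref{eq40}, and you have named it yourself without closing it: the tightness invariant $F_i(t)\le V_i(t)+L^{max}/\varphi_i$ is only verified for the dequeue and bonus events, while the re-entry event \eqref{eq4}, where $F_i$ is set to $\max\{F_i,V_i\}$ with a stale $F_i$ carried over from a previous busy period, is left as ``the main obstacle I expect.'' That case is where all the difficulty lives, and until it is handled the proof of \eqref{eq40} is not a proof. Two remarks on how to close it. First, you do not need the per-user form: the chain you wrote only uses $F_i(t_0)\le\max_kV^k(t_0)+L^{max}/\varphi_i$ with the maximum taken over \emph{all} servers, and this weaker ``global'' form is much friendlier to re-entry, because $ActivateServers$ in \eqref{eq3_0} lifts the reactivated servers to the current global maximum plus $\delta$, and because a frozen $F_i$ can only fall behind the (essentially nondecreasing, modulo lock-step bonus shifts) global maximum during idleness, never overtake it. Second, be aware that your two invariants are not independent: the induction for the spread invariant needs tightness as an input (when the argmin user of server $k$ empties its queue, the recomputation \eqref{eq8_2} can lift $V^k$ by far more than $\lambda_0$, and the regulation \eqref{eq9_0} does not fire because $\hat V^k$ was the unique minimum; the only reason the new top gap stays below $\delta$ is that every remaining $F_j$ is within $L^{max}/\varphi_j$ of the old global maximum). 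So the two invariants must be proved by a single joint induction over events, not sequentially as your write-up suggests.
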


\begin{corollary} \emph{(Steady state throughput guarantee)}
\label{cor3}
Consider an interval $[t_0,t_1)$ for which $\tilde{B}(t_0,t_1)=B(t_0,t_1)=B$. For user $i$ that is continuously backlogged during $[t_0,t_1)$, it follows that:
\begin{eqnarray}\label{eq42}
% \nonumber to remove numbering (before each equation)
  \nonumber \left|\frac{W_i(t_0,t_1)}{\varphi_i}-\frac{r_i(B)[t_1-t_0]}{\varphi_i}\right|\leq\qquad\qquad\\
  K\lambda_0+\max\{\Delta^{init}_i(t_0),\tilde{\Delta}^{init}_i(t_0)\}
\end{eqnarray}
\end{corollary}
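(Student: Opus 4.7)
The plan is to derive Corollary \ref{cor3} as an immediate specialization of the two preceding throughput theorems, so no fresh machinery should be required; the task reduces to reconciling the two one-sided bounds under the extra hypothesis $\tilde{B}(t_0,t_1)=B(t_0,t_1)=B$.

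First I would invoke Theorem \ref{th4} to obtain the lower bound
\begin{equation*}
\frac{W_i(t_0,t_1)}{\varphi_i}\;\geq\;\frac{r_i(B)\,[t_1-t_0]}{\varphi_i}-\Delta^{init}_i(t_0)-\lambda_0\sum_{l\ge m}K^{C_l},
\end{equation*}
where $m$ is the index of the cluster $C_m$ in the FOC corresponding to $B$ that contains user $i$. Then I would invoke the preceding (unnumbered) upper-bound theorem with $\tilde{B}$ in place of $B$; because the hypothesis $\tilde{B}=B$ forces the two FOCs to coincide, the cluster indexing and the fair rate $r_i(\tilde{B})=r_i(B)$ agree, and in particular $\tilde{m}=m$. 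This yields
\begin{equation*}
\frac{W_i(t_0,t_1)}{\varphi_i}\;\leq\;\frac{r_i(B)\,[t_1-t_0]}{\varphi_i}+\tilde{\Delta}^{init}_i(t_0)+\lambda_0\sum_{l\le m}K^{C_l}.
\end{equation*}

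Next I would bound the two cluster sums uniformly. Since the clusters $\{C_l\}_{l=1}^M$ partition the set of servers $\mathcal{K}$, both partial sums $\sum_{l\ge m}K^{C_l}$ and $\sum_{l\le m}K^{C_l}$ are bounded above by $K$, the total number of servers. Substituting these bounds into the two inequalities and rewriting as a two-sided estimate for $W_i(t_0,t_1)/\varphi_i - r_i(B)[t_1-t_0]/\varphi_i$ gives deviations of at most $\Delta^{init}_i(t_0)+K\lambda_0$ on one side and $\tilde{\Delta}^{init}_i(t_0)+K\lambda_0$ on the other. Taking the maximum of the two initial deficits then yields the claimed bound
\begin{equation*}
\left|\frac{W_i(t_0,t_1)}{\varphi_i}-\frac{r_i(B)[t_1-t_0]}{\varphi_i}\right|\leq K\lambda_0+\max\{\Delta^{init}_i(t_0),\tilde{\Delta}^{init}_i(t_0)\}.
\end{equation*}

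There is no genuine obstacle here: the corollary is essentially a bookkeeping step on top of Theorem \ref{th4} and its companion. The only point that requires any care is observing that the hypothesis $\tilde{B}=B$ collapses the two FOCs into a single one, so that $r_i(\tilde{B})=r_i(B)$ and the indices $m,\tilde{m}$ coincide; once that is noted, the bound on the partial server counts by $K$ and the definition of $\max$ finish the argument.
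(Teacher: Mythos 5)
Your proposal is correct and follows the same route the paper intends: the corollary is stated without proof precisely because it is the immediate combination of Theorem~\ref{th4} and the unnumbered upper-bound theorem under the hypothesis $\tilde{B}=B$, with the partial cluster sums $\sum_{l\ge m}K^{C_l}$ and $\sum_{l\le \tilde{m}}K^{C_l}$ each bounded by $K$ since the clusters partition the server set. Your observation that $\tilde{B}=B$ forces the two FOCs, the fair rates, and the cluster indices $m,\tilde{m}$ to coincide is exactly the bookkeeping step needed.
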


Corollary \ref{cor3} describes the performance of $CM^4FQ$ algorithm in a steady state situation. Specifically, given that the set of backlogged users does not change during $[t_0,t_1)$, for any user backlogged user $i$, $W_i(t_0,t_1)$ has a bounded difference with respect to its fair service share.

\section{Numerical Results}
Performance of $CM^4FQ$ algorithm is evaluated in this section through a numerical example. Specifically, consider the queuing system shown in Fig \ref{fig4}, where the weight of each user is equal to one. It is assumed that traffic streams of users are i.i.d packet streams with packets length uniformly distributed between $[800,1000]$ bits. Substituting $L^{max}=1000$ in (\ref{eq26}), $\delta$ is chosen equal to $\delta=5000$.\par
\begin{figure}
\centering
\includegraphics[width=2.5in]{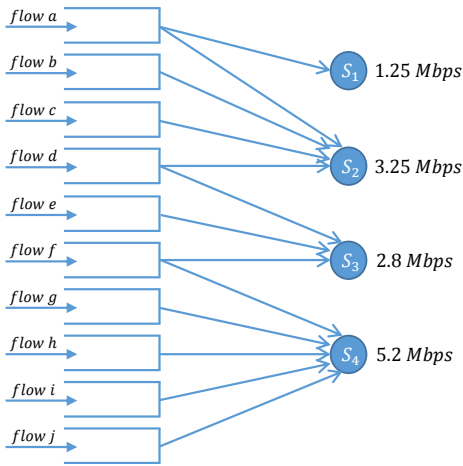}
\caption{The queuing system that is simulated for numerical evaluation of $CM^4FQ$ algorithm.}
\label{fig4}
\end{figure}

As the first study, $CM^4FQ$ algorithm is executed on the network of Fig. \ref{fig4}, while all users are backlogged during interval $[0,20]$ seconds. Assuming all users to be backlogged, in the corresponding FOC all users and all servers form one cluster with the service rate of $1.25$ Mbps. As a result, $\max_{k,l}\{V^k(t)-V^l(t)\}$ is expected to have an upper bound of $5000$ bits as stated in (\ref{eq22}). The maximum difference between work level of servers is shown in Fig. \ref{fig5}. It could be observed that the upper bound of $5000$ bits is met at any time.

\begin{figure}
\centering
\includegraphics[width=3.5in]{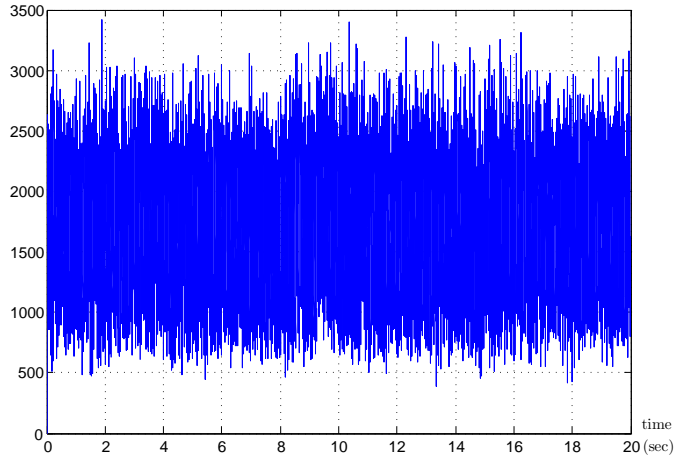}
\caption{The maximum difference (in bits) between work level of servers when all users and all servers form one cluster during interval [0,20) seconds.}
\label{fig5}
\end{figure}

Next, consider another situation where all users except user $c$ are continuously backlogged during interval $[0,20]$ seconds. Assume that user $c$ becomes backlogged at $t=10^+$ and remains continuously backlogged during interval $(10,20]$ seconds.

In the FOC corresponding to interval $[0,10]$ seconds, users $a$, $b$ and $d$ along with servers $1$ and $2$ form one cluster with service rate of $1.5$ Mbps, users $e$ and $f$ along with server $3$ form another cluster with service rate of $1.4$ Mbps, and finally users $g$, $h$, $i$ and $j$ along with server $4$ form another cluster with service rate of $1.3$ Mbps. Regarding interval $(10,20]$ seconds that all users backlogged, the FOC consists of only one cluster including all users and all servers. In this case, all users obtain a service rate of $1.25$ Mbps.

The work given to users $a$, $e$ and $g$ is shown in Fig. \ref{fig6}. It can be observed that for steady state intervals, during which the set of backlogged users does not change, the work given to any backlogged user has bounded difference with respect to its fair service share.

\begin{figure}
\centering
\includegraphics[width=3.5in]{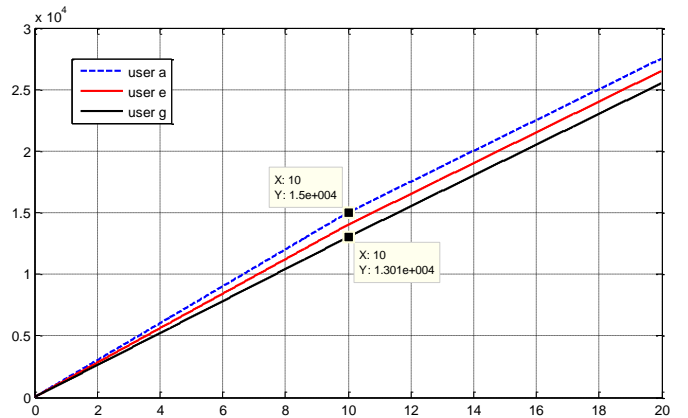}
\caption{The work given to users $a$, $e$ and $g$ during interval $[0,20)$ seconds. All users except $c$ are continuously backlogged. user $c$ becomes backlogged at $t=10^+$ second and remains backlogged afterward.}
\label{fig6}
\end{figure}

\section{Comparison with Related Work}
The system model studied in this paper has been previously studied in \cite{midrr}. Furthermore, an algorithm named as miDRR is proposed in \cite{midrr} which is built upon Deficit Round Robin (DRR) algorithm. Specifically, miDRR algorithm entails each server implementing DRR algorithm independently with a slight modification. Each server $j$ maintains one binary service flag $SF_{ij}$ for each flow $i$ that it is eligible to serve. The flag is for other servers to indicate to server $j$ that flow $i$ has been recently serviced. Maintaining this binary service flag requires two tasks \cite{midrr}.
\begin{enumerate}
  \item When server $k$ serves flow $i$, it sets service flags $SF_{ij}, \forall j\neq k$
  \item When server $j$ considers flow $i$ for service, it resets service flag $SF_{i,j}$.
\end{enumerate}

In this way, each server executes DRR algorithm independently, except that in each round only users with non-zero service flag are chosen to get service \cite{midrr}. Through a counter example, we show that this algorithm fails to achieve fairness when the length of packets are different.
\begin{example}\label{exmp6}
Consider a simple queuing system as shown in Fig. \ref{fig8a}. Assume that the weight of both users equal to 1. Let quantum size of users in DRR algorithm be equal to $Q_a=Q_b=Q$. It is assumed that flow $a$ is comprised of
fixed-sized packets of length $Q$ and flow $b$ is comprised of packets with lengths alternating between $0.75Q$, $Q$ and $0.25Q$. Assuming that server $2$ starts working immediately after server 1 at $t=0^+$, the order of scheduling packets on servers by miDRR algorithm is depicted in Figure \ref{fig8b}. It could be observed that users $a$ and $b$ get an average service rate of $4/3$ and $2/3$ Mbps respectively, which is substantially different from their $1$ Mbps fair service rate.
\end{example}

\begin{figure}
\centering
\includegraphics[width=2in]{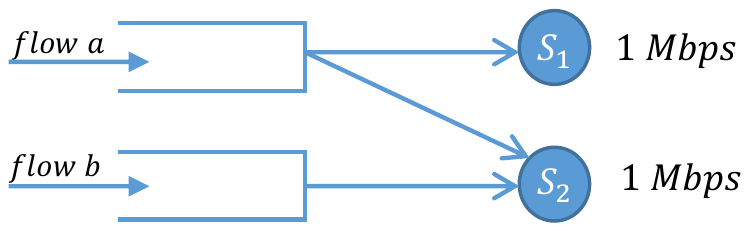}
\caption{The queuing system of example \ref{exmp6}}\label{fig8a}
\includegraphics[width=2.5in]{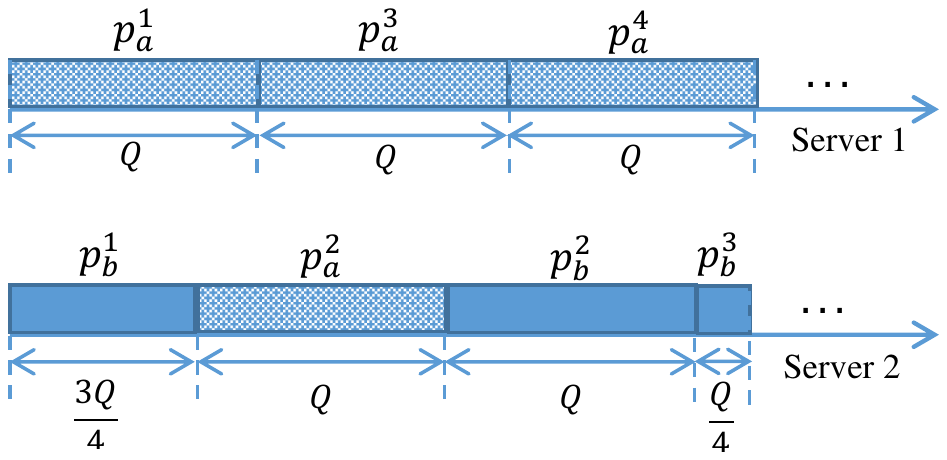}
\caption{The order of scheduling packets on servers by miDRR algorithm. This pattern will be repeated every three rounds.}\label{fig8b}
\end{figure}

\section{Conclusion}
In a constrained multi-user multi-server queuing system, max-min fair rate allocation results in multi level fair rates. Since different users have different fair service rates in such system, designing an FQ algorithm for scheduling packets on servers turns to a challenging problem. We developed $CM^4FQ$ algorithm to achieve multi level max-min fairness through a simple tagging scheme.
Specifically, $CM^FQ$ algorithm assigns a service tag to each user $i$ which represents the amount of work counted for user $i$ till time $t$. Hence, a free server will choose to serve an eligible user with the minimum service tag.
While servers' service rate is not taken into account in the algorithm, fair performance of the algorithm is shown in the sense of worst-case throughput achievable by a backlogged user. Addressing a general system model and being simply implementable, $CM^4FQ$ is applicable in a variety of practical queuing systems, especially in mobile cloud computing architecture.

\appendix
\subsection{Proof of Lemma 1}
According to $CM^4$ fairness definition, for given set of backlogged users, $B(t)=B$, there exists at least one rate allocation matrix $R(t)$ satisfying $CM^4$ fairness. Furthermore, while $R(t)$ is not necessarily unique, the normalized service rate allocated to each user will be unique \cite{DN} (otherwise it contradicts with definition of $CM^4$ fairness). We want to show that there exists a unique FOC corresponding to $B$.

The proof is by construction, i.e., we construct a FOC and we show that it is unique. To do so, consider a rate allocation matrix $R(t)$ satisfying $CM^4$ fairness. Partition the set of users into $M$ disjoint subsets, $I_m(t), 1\le m\le M$, such that for every subset $I_m(t)$ all users $i\in I_m(t)$ have the same normalized service rate (so that condition 1 in Def. \ref{def_FOC} is satisfied). Furthermore, if there exist two or more subsets with the same normalized service rate, merge them into one subset (hence, condition 2 in Def. \ref{def_FOC} is satisfied). Since the normalized service rate of each user is unique, this procedure results in a unique partitioning on the set of users.

Without loss of generality, assume that subsets of users are indexed in an increasing order of their service rate. Corresponding to each subset $I_m(t), 1\le m\le M$, create a subset of servers $S_m(t)$. Let $S_m(t)$ be set of servers giving service to users $i\in I_m(t)$. If the first subset is of zero service rate (consisting of non backlogged users), $S_1(t)$ will be the set of servers not giving service to any users.
In this way, $M$ subsets of servers are constructed which cover all servers.

Now we show that $\{S_m(t)\}_{m=1}^M$ constitutes a partitioning on the set of severs. Regarding any arbitrary $R(t)$ satisfying $CM^4$ fairness, it follows that servers giving service to users $i\in I_m(t)$ can not give service to users $j\in I_{m'}(t)$, provided that $m'>m$. Hence, servers assigned to $S_m(t)$ will not be assigned to $S_{m'}(t)$, $m'>m\ge 1$. Therefor, any tow distinct subsets $S_m(t)$ and $S_{m'}(t)$ are disjoint. Furthermore, There is no other way for assigning servers to subsets of users, while meeting the third condition in Def. \ref{def_FOC}. Hence, a unique clustering is constructed based on $\{I_m(t)\}_{m=1}^M$ and $\{S_m(t)\}_{m=1}^M$ which satisfies all conditions in Def. \ref{def_FOC}. $\blacksquare$

\subsection{Single Sever Analysis}
In this section throughput performance of $CM^4FQ$ algorithm will be analyzed when limiting number of servers to one server. This simple analysis will be useful in analyzing the general case.
\begin{lem}\label{lem7}
Consider server $S$ with constant service rate $\rho^S$. Suppose that $CM^4FQ$ algorithm is being used to give service to the set $I=\{i\}_{i=1}^N$ of $N$ users on server $S$. Defining \emph{Work Level} function of server $S$ as $V(t)=\min_i\{F_i(t)\mid i\in B(t)\}$, it can be shown that:
\begin{eqnarray}
% \nonumber to remove numbering (before each equation)
  V(t_0,t_1) &\geq& \frac{\sum_{i\in I}W_i(t_0,t_1)}{\sum_{i\in I}\varphi_i}-\frac{(N-1).L^{max}}{\sum_{i\in I}\varphi_i} \label{eq43}\\
  &\geq& \frac{\rho^S(t_1-t_0)}{\sum_{i\in I}\varphi_i}-\frac{N.L^{max}}{\sum_{i\in I}\varphi_i}\label{eq44}
\end{eqnarray}
\end{lem}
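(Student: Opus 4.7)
The plan is to anchor both bounds on a per-user invariant combined with a potential-function argument, and then to deduce (\ref{eq44}) from (\ref{eq43}) by accounting for the at most one boundary-straddling packet on the server. Throughout I assume $B(t)\neq\emptyset$ on $[t_0,t_1)$, so that $V$ is finite and monotone non-decreasing (the regulation block of $UpdateV$ never fires for a single server, since $d^k\le 0$ there).

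I would first establish, by induction over the selection and enqueue events of the algorithm, the pointwise invariant
\[
F_i(t) - V(t) \;\le\; L^{max}/\varphi_i \qquad (i \in B(t)),
\]
which at the same time yields $[F_i(t)-V(t)]^+ \le L^{max}/\varphi_i$ for every user, backlogged or not. At a selection, $F_{u^*}$ is lifted from $V$ by at most $L^{max}/\varphi_{u^*}$, while for every other backlogged user only $V$ rises, so the excess shrinks. At an enqueue event, (\ref{eq4}) pushes the returning user's tag up to $V$ and leaves $V$ unchanged. For users currently idle, $F_i$ was frozen at its value on departure (which satisfied the invariant) and $V$ has only grown since.

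I would then introduce, writing $\Phi := \sum_{i=1}^{N}\varphi_i$ and $W(t_0,t) := \sum_{i} W_i(t_0,t)$, the potential
\[
\tilde\Xi(t) \;=\; W(t_0,t) \;-\; \Phi\bigl(V(t)-V(t_0)\bigr) \;-\; \sum_{i=1}^{N} \varphi_i\,[F_i(t)-V(t)]^+,
\]
and show it is non-increasing on $[t_0,t_1)$ by an event-by-event check. At a selection the increment $L_k$ added to $W$ is absorbed by the $\varphi_{u_k}$-weighted jump in $F_{u_k}$, and the residual effect of the jump $\Delta_k := V(s_k^+)-V(s_k)$ cancels against the matching drops in the positive-part contributions of the remaining backlogged users, plus non-positive drops from idle users whose fixed tags now sit closer to (or below) $V$. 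Enqueue events leave $V$ and every positive-part term unchanged. From $\tilde\Xi(t_1)\le\tilde\Xi(t_0)\le 0$, the invariant, and the fact that the user realizing $V(t_1)$ contributes zero to the third sum (so the third sum is at most $(N-1)L^{max}$), one obtains (\ref{eq43}). Inequality (\ref{eq44}) then follows from the observation $W(t_0,t_1)\ge \rho^S(t_1-t_0)-L^{max}$, since within a single busy period the server's idle time is zero and at most one partial packet of length at most $L^{max}$ straddles the right endpoint $t_1$.

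The main obstacle I anticipate is the sub-case of a selection in which $u_k$ empties its queue and leaves $B$. Here the jump $\Delta_k$ in $V$ is no longer capped by $L_k/\varphi_{u_k}$, since the next-smallest tag in $B\setminus\{u_k\}$ can sit anywhere between $V(s_k)$ and that value; moreover $u_k$'s positive-part contribution persists after $u_k$ becomes idle, and the set of indices contributing to the third sum changes at the very same instant. Confirming $\Delta\tilde\Xi\le 0$ in this case requires a careful split on whether $\Delta_k \lesseqqgtr L_k/\varphi_{u_k}$, together with explicit bookkeeping of which weights migrate from the backlogged to the non-backlogged part of the third sum; this is where the technical effort will concentrate.
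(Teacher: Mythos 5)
Your proof is correct, and it rests on exactly the same core invariant as the paper's: for every backlogged user, $V(t)\le F_i(t)\le V(t)+L^{max}/\varphi_i$ (a user is selected only when its tag equals $V$, and then its tag rises by at most $L^{max}/\varphi_i$), together with the fact that some user attains $V(t_1)$ exactly and that $\sum_i W_i(t_0,t_1)\ge\rho^S(t_1-t_0)-L^{max}$ from the residual work of the straddling packet. Where you diverge is in the machinery wrapped around that invariant: you track a potential $\tilde\Xi$ event by event and must verify its monotonicity at every selection and enqueue, including the delicate departing-user case you flag at the end. The paper avoids all of that by evaluating the sandwich only at the two endpoints: since $F_{i^*}(t_1)=V(t_1)$ for the minimizer and $F_i(t_0)\ge V(t_0)$, $F_i(t_1)\le V(t_1)+L^{max}/\varphi_i$ for the rest, one gets $V(t_0,t_1)\ge F_{i^*}(t_0,t_1)$ and $V(t_0,t_1)\ge F_i(t_0,t_1)-L^{max}/\varphi_i$, and weighting by $\varphi_i$ and summing (with $\varphi_iF_i(t_0,t_1)=W_i(t_0,t_1)$ in the continuously-backlogged worst case) gives \eqref{eq43} directly; \eqref{eq44} then follows as in your last step. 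The endpoint argument is shorter and makes your anticipated obstacle (a user emptying its queue mid-interval) vanish, because the paper simply reduces to the worst case in which all users stay backlogged; your potential-function version buys a fully explicit treatment of users that go idle and return, at the cost of the case analysis you correctly identify as the technical burden. One caveat on your version: the claim that a departing user's positive-part term and the reassignment of weights between the backlogged and idle parts of the third sum still yield $\Delta\tilde\Xi\le 0$ does go through (the jump $\Delta_k$ either stays below $L_k/\varphi_{u_k}$, in which case the released weight exactly absorbs it, or exceeds it, in which case the $-\Phi\Delta_k$ term dominates $L_k$), so the gap you anticipate is closable; but none of this work is needed for the bound as stated.
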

\begin{IEEEproof}
To derive a lower bound on $V(t_0,t_1)$, consider the worst case where all users $i\in I$ are continuously backlogged during $[t_0,t_1)$. Hence, according to definition of $V(t)$, $F_i(t)\geq V(t)$ for all users $i\in I$ and for all $t\in[t_0,t_1)$. On the other hand, given that user $i$ is chosen for service at $\tau$, $F_i(\tau)= V(\tau)$. Hence, $F_i(t)$ never goes beyond $V(t)+L^{max}/\varphi_i$, i.e., $F_i(t)\le V(t)+L^{max}/\varphi_i$.

According to definition of $V(t)$, at any time $t$ there exists at least one user $i$ for which $F_{i}(t)=V(t)$. Assume that at time $t_1$, $F_{i^*}(t_1)=V(t_1)$. This along with $F_{i^*}(t_0)\geq V(t_0)$ yield:
\begin{equation}\label{eq45}
  V(t_0,t_1)\geq F_{i^*}(t_0,t_1)
\end{equation}
For other users $i\neq i^*$, $F_i(t_0)\ge V(t_0)$ and $F_i(t_1)\le V(t_1)+L^{max}/\varphi_i$. Hence:
\begin{equation}\label{eq46}
  V(t_0,t_1)\geq F_{i}(t_0,t_1)-\frac{L^{max}}{\varphi_i}
\end{equation}

multiplying both sides of (\ref{eq45}) and (\ref{eq46}) by $\varphi_{i^*}$ and $\varphi_i$ respectively, then substituting $\varphi_iF_i(t_0,t_1)$ by $W_i(t_0,t_1)$ and summing obtained inequalities over all users yield (\ref{eq43}). To show \eqref{eq44}, let $\varepsilon(t)$ denote the \emph{residual work} of the packet that is currently in service at time $t$. If the server is empty at $t$, $\varepsilon(t)$ is defined to be 0. Hence, $\sum_iW_i({t}_0,t_1)=\rho^S(t_1-{t}_0)+\varepsilon(t_1)-\varepsilon(t_0)$, which is greater than or equal to $\rho^S(t_1-{t}_0)-L^{max}$. Substituting such lower bound in (\ref{eq43}) yields (\ref{eq44}).
\end{IEEEproof}

\begin{lem}\label{lem8}
Consider a single server $S$ with constant service rate $\rho^S$. Suppose that $CM^4FQ$ algorithm is being used to give service to the set $I=\{i\}_{i=1}^N$ of $N$ users on server $S$. Let $\tilde{I}$ denote a nonempty subset of users that are continuously backlogged during $[t_0,t_1)$. It can be shown that:
\begin{eqnarray}
% \nonumber to remove numbering (before each equation)
  V(t_0,t_1) &\leq& \frac{\sum_{i\in \tilde{I}}W_i(t_0,t_1)}{\sum_{i\in \tilde{I}}\varphi_i}+\frac{(\|\tilde{I}\|-1).L^{max}}{\sum_{i\in \tilde{I}}\varphi_i} \label{eq47}\\
  &\leq& \frac{\rho^S(t_1-t_0)}{\sum_{i\in \tilde{I}}\varphi_i}+\frac{\|\tilde{I}\|L^{max}}{\sum_{i\in \tilde{I}}\varphi_i} \label{eq48}
\end{eqnarray}
\end{lem}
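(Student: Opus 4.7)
The plan is to prove the statement as the dual of Lemma \ref{lem7}, reversing the direction of the inequalities in an analogous argument to obtain an upper bound on $V(t_0,t_1)$ rather than a lower bound. The key invariant I will use is that for any user $i\in B(t)$ we have $V(t)\le F_i(t)\le V(t)+L^{max}/\varphi_i$. The lower bound is immediate from $V(t)=\min_{j\in B(t)} F_j(t)$, and the upper bound follows because a chosen user has its tag satisfy $F_i=V$ just before the update and then jump by at most $L^{max}/\varphi_i$, while $V$ itself is non-decreasing (the enqueue rule $F_i\leftarrow\max\{F_i,V\}$ prevents returning users from pulling $V$ down, and dequeue only increases the chosen user's tag). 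Because each $i\in\tilde{I}$ is continuously backlogged throughout $[t_0,t_1)$, this invariant applies at both $t_0$ and $t_1$.

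To obtain (\ref{eq47}) I would follow the worst-case template of Lemma \ref{lem7} and restrict attention to the configuration $B(t)=\tilde{I}$ for all $t\in[t_0,t_1)$. Intuitively this is the worst case for an upper bound on $V(t_0,t_1)$ because any extra backlogged user outside $\tilde{I}$ would absorb a share of the server capacity and slow the growth of $V$. Under this assumption, the user $i^{*}\in\tilde{I}$ attaining $V(t_0)=\min_{i\in\tilde{I}}F_i(t_0)$ satisfies $F_{i^{*}}(t_0)=V(t_0)$ exactly, so combining this with $V(t_1)\le F_{i^{*}}(t_1)$ gives $V(t_0,t_1)\le F_{i^{*}}(t_0,t_1)$ with no slack. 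For every other $i\in\tilde{I}\setminus\{i^{*}\}$ the invariant only yields $V(t_0)\ge F_i(t_0)-L^{max}/\varphi_i$, hence $V(t_0,t_1)\le F_i(t_0,t_1)+L^{max}/\varphi_i$. Multiplying each bound by $\varphi_i$, using $\varphi_i F_i(t_0,t_1)=W_i(t_0,t_1)$ for continuously backlogged users in the single-server case, and summing over $\tilde{I}$ produces an additive error of exactly $(\|\tilde{I}\|-1)L^{max}$, giving (\ref{eq47}).

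For (\ref{eq48}) I would convert $\sum_{i\in\tilde{I}}W_i(t_0,t_1)$ into a server-rate expression as in the companion proof. Letting $\varepsilon(t)$ denote the residual work of the packet currently in service (with $\varepsilon(t)=0$ when the server is idle), in the worst case every packet chosen during $[t_0,t_1)$ belongs to a user in $\tilde{I}$, so $\sum_{i\in\tilde{I}}W_i(t_0,t_1)=\rho^S(t_1-t_0)+\varepsilon(t_1)-\varepsilon(t_0)\le\rho^S(t_1-t_0)+L^{max}$. Substituting this bound into (\ref{eq47}) and combining the extra $L^{max}$ with the $(\|\tilde{I}\|-1)L^{max}$ term yields (\ref{eq48}).

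The main obstacle is giving a clean justification of the worst-case reduction, since the claim that enlarging $B(t)$ can only shrink $V(t_0,t_1)$ is not entirely self-evident when outside users might momentarily hold tags close to $V$. However, the invariant $F_j(t)\ge V(t)$ for \emph{every} backlogged user rules out the pathological case in which a newly backlogged user drops $V$, and the same informal argument used in Lemma \ref{lem7} carries over verbatim. Once this reduction is accepted, the remaining steps are straightforward bookkeeping parallel to the lower-bound proof.
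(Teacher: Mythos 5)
Your proof is correct and takes essentially the same route as the paper's: reduce to the case where only the users of $\tilde{I}$ are backlogged, anchor the bound at the user $i^*$ attaining the minimum tag at $t_0$, use the sandwich $V(t)\le F_i(t)\le V(t)+L^{max}/\varphi_i$ for the remaining users, multiply by weights and sum, and then pass to the server rate via the residual-work identity. The only difference is that you explicitly flag and partially justify the worst-case reduction and the tag invariant, both of which the paper simply asserts.
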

\begin{IEEEproof}
The approach of the proof is almost the same as Lemma \ref{lem7}, so it is stated more concisely. To derive an upper bound on $V(t_0,t_1)$, consider the best situation that no user except users $i\in \tilde{I}$ is backlogged during $[t_0,t_1)$. Assume that for user $i^*\in\tilde{I}$, $F_{i^*}(t_0)=V(t_0)$. This along with $F_{i^*}(t_1)\geq V(t_1)$ yield:
\begin{equation}\label{eq49}
  V(t_0,t_1)\leq F_{i^*}(t_0,t_1)
\end{equation}
For any user $i\in \tilde{I}, i\neq i^*$, $F_i(t_1)\ge V(t_1)$ and $F_i(t_0)\le V(t_0)+L^{max}/\varphi_i$. Hence:
\begin{equation}\label{eq50}
  V(t_0,t_1)\leq F_{i}(t_0,t_1)+\frac{L^{max}}{\varphi_i}
\end{equation}
multiplying both sides of (\ref{eq49}) and (\ref{eq50}) by $\varphi_{i^*}$ and $\varphi_i$ respectively, then substituting $\varphi_iF_i(t_0,t_1)$ by $W_i(t_0,t_1)$ and summing obtained inequalities over all users $i\in \tilde{I}$ yield (\ref{eq47}). By defining the \emph{residual work} function, it follows that $\sum_iW_i({t}_0,t_1)=\rho^S(t_1-{t}_0)+\varepsilon(t_1)-\varepsilon(t_0)$, which is less than or equal to $\rho^S(t_1-{t}_0)+L^{max}$. Substituting such upper bound in (\ref{eq47}), yields (\ref{eq48}).
\end{IEEEproof}

\subsection{Proof of Theorem 2}
Here we prove the inequality in (\ref{eq17}). The inequality in \eqref{eq23} could be shown in the same way.
To show (\ref{eq17}), first we need to present some notation.
Assume that cluster $C=(I,S)$ consists of $K^{C}=l$ servers, $l\ge1$. Without loss of generality assume that servers $k\in S$ be indexed in an increasing order of their work level at time $t$; i.e., $V^{1}(t)\le V^{2}(t)\le ...\le V^{m}(t)\le ...\le V^{l}(t)$. Let $\hat{C}(m)\triangleq(\hat{I}(m),\hat{S}(m))$ be defined as a sub-cluster of cluster $C$, where $\hat{S}(m)\triangleq \{k\mid m\le k\le l\}$, and $\hat{I}(m)$ is defined as:

\begin{equation}\label{eq51}
  \hat{I}(m)\triangleq\{i\mid i\in I\text{ and }\pi_{i,k}=1\text{ for some }k\in \hat{S}(m)\}.
\end{equation}

Hence, the set $\bar{I}(m)\triangleq I-\hat{I}(m)$ denotes a subset of users that are eligible to get service only from the subset $\bar{S}(m)\triangleq S-\hat{S}(m)$ of servers in cluster $C$. Without loss of generality, suppose that $\bar{I}(m)\ne\emptyset$ for all $m\ge1$. According to definition of FOC, any user $i\in I$ could attain the same normalized service rate equal to $r^{C}$. Hence, for the subset $\bar{I}(m)$ of users that are eligible to get service only from the subset $\bar{S}(m)\triangleq S-\hat{S}(m)$ of servers in cluster $C$, it follows that:
\begin{equation}\label{eq52}
  \frac{\sum_{k\in\bar{S}(m)}\rho^{k}}{\sum_{i\in \bar{I}(m)}\varphi_i}\geq r^{C}.
\end{equation}
On the other hand, for users $i\in \hat{I}(m)$ it follows that:
\begin{equation}\label{eq53}
  \frac{\sum_{k\in\hat{S}(m)}\rho^{k}}{\sum_{i\in\hat{I}(m)}\varphi_i}\leq r^{C}.
\end{equation}

To show \eqref{eq17} consider the case that server $k$ has the greatest work level at time $t$, i.e., $k=l$. Let $\hat{t}$ be the last instant of time before $t$ and after $t_0$, $t_0<\hat{t}<t$, at which a packet of some user $i_0\in\hat{I}(l)$ is chosen by some server $k_0\in \bar{S}(l)$. If there is not such instant of time, let $\hat{t}=t_0$. According to this definition, $V^{l}(\hat{t})\le V^{k_0}(\hat{t})$. Given that $V^k(t_0)=V^C(t_0)$ for all servers $k\in S$, it follows that:
\begin{eqnarray}\nonumber
  V^{l}(t_0,t) & = & V^{l}(t_0,\hat{t})+V^{l}(\hat{t},t)\\
                 &\le& V^{k_0}(t_0,\hat{t})+V^{l}(\hat{t},t)\label{eq54}
\end{eqnarray}

For $V^{l}(t_0,t)$ to achieve its upper bound in (\ref{eq54}), $V^{l}(\hat{t})$ should be equal to $V^{k_0}(\hat{t})$. Suppose this to be the case. Given that $D^k({\tau}_0,\tau_1)=D^{l}({\tau}_0,\tau_1)$ for all servers $k,l\in S$ and for any subinterval $[{\tau}_0,\tau_1)$, it follows that:

\begin{eqnarray}
  w^{l}(t_0,t) = w^{k_0}(t_0,\hat{t})+w^{l}(\hat{t},t)\label{eq54_1}
\end{eqnarray}

To show (\ref{eq17}), we derive individual upper bounds on $w^{k_0}(t_0,\hat{t})$ and $w^{l}(\hat{t},t)$.
According to definition of $\hat{t}$, during interval $(\hat{t},t)$ any packets of users $i\in\hat{I}(l)$ is chosen for service only by server $l$. Hence, the results of single server analysis in Lemma \ref{lem8} imply that:
\begin{equation}\label{eq55}
V^{l}(\hat{t},t)\leq\frac{\sum_{i\in\hat{I}(l)}\varphi_iF_i(\hat{t},t)}{\Sigma_{i\in \hat{I}(l)}\varphi_i}+\frac{(\| \hat{I}(l)\|-1)L^{max}}{\sum_{i\in \hat{I}(l)}\varphi_i}.
\end{equation}

Given that $D^k({\tau}_0,\tau_1)=D^{l}({\tau}_0,\tau_1)$ for all servers $k,l\in S$ and for any subinterval $[{\tau}_0,\tau_1)$, it follows that $D_i(\hat{t},t)=D^l(\hat{t},t)$ for all users $i\in{I}$. Hence, substituting $F_i(\hat{t},t)$ from (\ref{eq12}) into (\ref{eq55}) and some manipulations results in:

\begin{eqnarray}\label{eq56}
w^{l}(\hat{t},t)\leq\frac{\sum_{i\in\hat{I}(l)}W_i(\hat{t},t)}{\Sigma_{i\in \hat{I}(l)}\varphi_i}+\frac{(\| \hat{I}(l)\|-1)L^{max}}{\sum_{i\in \hat{I}(l)}\varphi_i}.
\end{eqnarray}

Let $\varepsilon^{l}(\hat{t})$ denote the residual work of the packet in service by server $l$ at time $\hat{t}$.
According to the Theorem assumptions, let server $l$ become free at $t$ ($\varepsilon^{l}({t})=0$). This along with the fact that a packet of user $i_0\in\hat{I}(l)$ is chosen at $\hat{t}$ by server $k_0$, results in: $\sum_{i\in\hat{I}(l)}W_i(\hat{t},t)=\rho^{l}(t-\hat{t})+L^{max}-\varepsilon^{l}(\hat{t})$. To consider the worst-case, $\varepsilon^{l}(\hat{t})$ is assumed to be 0. Hence, $\sum_{i\in\hat{I}(l)}W_i(\hat{t},t)\le\rho^{l}(t-\hat{t})+L^{max}$. Substituting this upper bound in (\ref{eq56}) results in:

\begin{eqnarray}\label{eq57}
   w^{l}(\hat{t},t)\leq\frac{\rho^{l}(t-\hat{t})}{\Sigma_{i\in\hat{I}(l)}\varphi_i}+\frac{(\|\hat{I}(l)\|)L^{max}}{\sum_{i\in\hat{I}(l)}\varphi_i}
\end{eqnarray}

From (\ref{eq53}) it follows that $\rho^{l}/\Sigma_{i\in\hat{I}(l)}\varphi_i \le r^C$. On the other hand, $\sum_{i\in\hat{I}(l)}\varphi_i/\|\hat{I}(l)\|\ge\min_{i\in I}\varphi_i$. By definition of $\lambda^C$ (as in (\ref{eq18})) it follows that:
\begin{eqnarray}\label{eq58}
% \nonumber to remove numbering (before each equation)
  \nonumber w^{l}(\hat{t},t)\leq r^{C}(t-\hat{t})+\lambda^C
\end{eqnarray}

Applying this inequality to (\ref{eq54_1}) results in:
\begin{eqnarray}
  w^{l}(t_0,t) \le w^{k_0}(t_0,\hat{t})+r^{C}(t-\hat{t})+\lambda^C\label{eq59}
\end{eqnarray}

An upper bound on $w^{k_0}(t_0,\hat{t})$ could be found easily by applying induction on the number of servers in cluster $C$, $K^C=l$. Specifically, for the case that $l=1$, there is not any server other than server $l$ from which users $i\in\hat{I}(l)$ can get service. Hence, by definition $\hat{t}=t_0$. Substituting $\hat{t}$ with $t_0$ in (\ref{eq58}) results in (\ref{eq17}) for the case of $l=1$.

Now suppose that (\ref{eq17}) holds for the case that a cluster consists of $l-1$ servers. We prove (\ref{eq17}) for cluster $C$ with $l$ servers. According to assumptions, work level of all servers are initially the same, $V^k(t_0)=V^C(t_0)$ for all servers $k\in S$. On the other hand, severs $k_0$ and $l$ have the same work level at time $\hat{t}$ and both of them becomes free at this instant ($\varepsilon^{k_0}(\hat{t})=\varepsilon^{l}(\hat{t})=0$). Therefor, servers $k_0$ and $l$ could be treated as one server when considering them during interval $[t_0,\hat{t})$. Hence, regarding this interval, cluster $C$ could be viewed as a cluster with $l-1$ servers. So, the following upper bound is concluded for $w^{k_0}(t_0,\hat{t})$, when applying (\ref{eq17}) for cluster $C$ during interval $[t_0,\hat{t})$.

\begin{eqnarray}
  w^{k_0}(t_0,\hat{t}) \le r^{C}(\hat{t}-t_0)+(l-1)\lambda^C\label{eq60}
\end{eqnarray}

Combining (\ref{eq59}) and (\ref{eq60}) results in:
\begin{eqnarray}
  w^{l}(t_0,{t}) \le r^{C}({t}-t_0)+l.\lambda^C\label{eq61}
\end{eqnarray}
$\IEEEQED$

\subsection{Proof of Theorem \ref{th2'}}
The approach of the proof is almost the same as Theorem \ref{th2}. Specifically, assume that cluster $C=(I,S)$ consists of $K^{C}=l$ servers, $l\ge1$. Without loss of generality assume that servers $k\in S$ be indexed in an increasing order of their work level at time $t$. To show \eqref{eq22} consider the case that server $k$ has the greatest work level at time $t$, i.e., $k=l$.

Following definitions presented in the beginning of the proof of Theorem \ref{th2}, let $\hat{t}$ be the last instant of time before $t$ and after $t_0$, $t_0<\hat{t}<t$, at which a packet of some user $i_0\in\hat{I}(l)$ is chosen by some server $k_0\in\bar{S}(l)$. If there is not such instant of time, let $\hat{t}=t_0$. Furthermore, let $\hat{t}_0$ be the last instant of time before $t$ and after $t_0$, $t_0<\hat{t}_0<t$, at which a packet of some user $j_0\in\hat{I}(2)$ is chosen for service by server $1$. If there is not such instant of time, let $\hat{t}_0=t_0$. Without loss of generality assume that $\hat{t}_0\le\hat{t}$. Given that server $l$ becomes free at time $t$, we can show that:
\begin{equation}\label{eq62_0}
  V^l(t_0,t)\le V^1(t_0,t)+\varsigma(\hat{t}_0,t)+(l-1)\lambda^C
\end{equation}

where $\varsigma(x)$ is a function that is $0\le\varsigma(x)\le\lambda^C$, at any time $x$, and $\varsigma(\hat{t}_0,t)=\varsigma(t)-\varsigma(\hat{t}_0)$. Given that $D^l(t_0,t)=D^1(t_0,t)$, \eqref{eq62_0} holds when:
\begin{equation}\label{eq62}
  w^l(t_0,t)\le w^1(t_0,t)+\varsigma(\hat{t}_0,t)+(l-1)\lambda^C
\end{equation}

According to \eqref{eq59} which was derived in Theorem \ref{th2}:
\begin{eqnarray}
  w^{l}(t_0,t) \le w^{k_0}(t_0,\hat{t})+r^{C}(t-\hat{t})+\lambda^C\label{eq59_1}
\end{eqnarray}

Using (\ref{eq59_1}), the upper bound in (\ref{eq62}) could be shown by applying induction on the number of servers in cluster $C$, $K^C=l$. For the case of $l=2$, there exists only two servers. Hence, $\hat{t}=\hat{t}_0$. Therefor, (\ref{eq59_1}) could be rewritten as:
\begin{eqnarray}
  w^{l}(t_0,t) \le w^{1}(t_0,\hat{t}_0)+r^{C}(t-\hat{t}_0)+\lambda^C\label{eq59_2}
\end{eqnarray}

Without loss of generality assume that server $1$ strictly attains the minimum amount of $V^k(t)$ among all servers $k\in S$ at time $t$. Such assumption implicates that sever $1$ has at least one dedicated user, i.e., there exists at least one user such as $j_1\in\bar{I}(2)$ that is only eligible to get service on server $1$. According to definition of $\hat{t}_0$, no packet of users $j\in \hat{I}(2)$ is chosen for service by server 1 during interval $(\hat{t}_0,t)$. Hence, during this interval server 1 will only choose to serve packets of users $j\in\bar{I}(2)$. Therefor, according to results of single server analysis in Lemma \ref{lem7}, it follows that:
\begin{equation}\label{}
w^{1}(\hat{t}_0,t)\geq\frac{\sum_{i\in{\bar{I}(2)}}W_i(\hat{t}_0,t)}{\Sigma_{i\in\bar{I}(2)}\varphi_i}-\frac{(\|\bar{I}(2)\|-1)L^{max}}{\sum_{i\in\bar{I}(2)}\varphi_i}
\end{equation}

Through the same line of arguments as in Lemma \ref{lem7}, it follows that: $\sum_{i\in{\bar{I}(2)}}W_i(\hat{t}_0,t)\geq\rho^{1}(t-\hat{t}_0)-L^{max}$. Substituting such lower bound in the above inequality, and also substituting the lower bound of $\rho^1$ from (\ref{eq52}) results in:
\begin{eqnarray}\label{eq63}
% \nonumber to remove numbering (before each equation)
   w^{1}(\hat{t}_0,t)\geq{r^{C}(t-\hat{t}_0)}-\lambda^C
\end{eqnarray}

By definition of $\varsigma(x)$, (\ref{eq63}) could be rewritten as:
%Assuming $\varsigma(x)$ as a function that is $0\le\varsigma(x)\le\lambda^C$ at any time $x$,
\begin{eqnarray}\label{eq64}
% \nonumber to remove numbering (before each equation)
   w^{1}(\hat{t}_0,t)\geq{r^{C}(t-\hat{t}_0)}-\varsigma(\hat{t}_0,t)
\end{eqnarray}

Substituting the upper bound of $r^{C}(t-\hat{t}_0)$ from (\ref{eq64}) into (\ref{eq59_2}) results in (\ref{eq62}) for the case of $l=2$. Now, assume that (\ref{eq62}) holds for any cluster with $l-1$ servers. For cluster $C$ with $l$ servers (\ref{eq59_1}) has been derived in Theorem \ref{th2}. In deriving (\ref{eq59_1}), it has been assumed that severs $k_0$ and $l$ have the same work level at time $\hat{t}$ and both of them becomes free at this instant ($\varepsilon^{k_0}(\hat{t})=\varepsilon^{l}(\hat{t})=0$). On the other hand, work level of all servers are assumed to be initially the same, $V^k(t_0)=V^C(t_0)$ for all servers $k\in S$.

Therefor, servers $k_0$ and $l$ could be treated as one server when considering them during interval $[t_0,\hat{t})$. Hence, regarding this interval, cluster $C$ could be treated as a cluster with $l-1$ servers. Thus, the following upper bound is concluded for $w^{k_0}(t_0,\hat{t})$, when applying (\ref{eq62}) for cluster $C$ during interval $[t_0,\hat{t})$:

\begin{equation}\label{eq65}
  w^{k_0}(t_0,\hat{t})\le w^1(t_0,\hat{t})+\varsigma(\hat{t}_0,\hat{t})+(l-2)\lambda^C
\end{equation}

Applying the upper bound of $w^{k_0}(t_0,\hat{t})$ from \eqref{eq65} into (\ref{eq59_1}) results in:
\begin{eqnarray}\nonumber
  w^{l}(t_0,t) &\le& w^1(t_0,\hat{t})+(l-1)\lambda^C\\
               & + & \varsigma(\hat{t}_0,\hat{t})+r^{C}(t-\hat{t})\label{eq59_3}
\end{eqnarray}

Since $\hat{t}\ge\hat{t}_0$, the same inequality as in (\ref{eq64}) holds for interval $[\hat{t},t)$
\begin{eqnarray}\label{eq65_1}
% \nonumber to remove numbering (before each equation)
    w^{1}(\hat{t},t)\geq{r^{C}(t-\hat{t})}-\varsigma(\hat{t},t)
\end{eqnarray}

Substituting the upper bound of $r^{C}(t-\hat{t})$ from (\ref{eq65_1}) into (\ref{eq59_3}) results in \eqref{eq62} for cluster $C$ with $K^C=l$ servers.
Given that $V^l(t_0)=V^1(t_0)$ and $\varsigma({t}_0,t)\le\lambda^C$ and from \ref{eq62_0} it follows that:
\begin{equation}\label{eq65_3}
  V^l(t)-V^1(t)\le l.\lambda^C
\end{equation}

According to results of single server analysis in Lemma \ref{lem8}, the maximum jump in work level of server $l$ at time $t$ is upper bounded by $\lambda^C$, i.e., $V^l(t^+)-V^l(t)\le\lambda^C$. Combining this with \eqref{eq65_3} results in (\ref{eq22}). $\IEEEQED$

\subsection{Proof of Theorem \ref{lem6}}
To prove the statement in Theorem \ref{lem6}, it suffices to show that for any $t\in(\hat{t}_0,t_1)$ at which one of servers $k\in S_0$ becomes free, $V^{C_m}(t)-{V}^{k}(t)>0$ for all clusters $C_m$, $m>0$. To do so, consider the worst case that for every cluster $C_m, m>0$, $r^{C_m}$ equals to $r^{C_m}\doteq r^{C_0}$ in limit ($r^{C_m}>r^{C_0}$). Hence, clusters $C_m, m>0$ could be viewed as an aggregated cluster $C'_M=(I'_M,S'_M)$.

According to the theorem assumptions $V^{C'_M}(\hat{t}_0^+)\ge \hat{V}^{C_0}(\hat{t}_0^+)+\delta$. As a result:
\begin{eqnarray}\label{eq33}
   V^{C'_M}(t)-{V}^k(t)\qquad\qquad\qquad\qquad\qquad\qquad\qquad\\
  \nonumber \geq V^{C'_M}(\hat{t}_0^+,t)-[V^k(t)-\hat{V}^{C_0}(\hat{t}_0^+)]+\delta.\!\!\!\!
\end{eqnarray}

Without loss of generality assume that $V^{C'_M}(t)<\infty$ for all $t\in(\hat{t}_0,t_1)$. Hence, all conditions of Lemma \ref{lem4} hold for cluster $C'_M$ in conjunction with interval $[\hat{t}_0^+,t)$. Hence:
\begin{eqnarray}\label{eq31}
  \nonumber V^{C'_M}(\hat{t}_0^+,t)+D^{C'_M}(\hat{t}_0^+,t)\qquad\qquad\quad\:\\
  \ge r^{C_0}(t-\hat{t}_0)-K^{C'_M}\lambda^{C'_M}.
\end{eqnarray}

Given that all users $j\in I_0$ are continuously backlogged during interval $[t_0,t_1)$, all conditions of Lemma \ref{lem5} hold for cluster $C_0$ in conjunction with interval $[\hat{t}_0^+,t)$. Given that server $k$ becomes free at time $t$, it follows that:
\begin{eqnarray}\label{eq32}
  {V}^{k}(t)-\hat{V}^{C_0}(\hat{t}_0^+)+D^{k}(\hat{t}_0^+,t)\qquad\qquad\quad\:\\\nonumber
  \leq r^{C_0}(t-\hat{t}_0^+)+K^{C_0}\lambda^{C_0}
\end{eqnarray}

The target is to show that $V^{C'_M}(t)-{V}^{k}(t)>0$. Without loss of generality assume that $V^{C'_M}(\tau)-{V}^{k}(\tau)\ge0$, for all $\tau\in(\hat{t}_0,t)$. From this premise it follows that $D^{C'_M}(\hat{t}_0^+,t)\ge {D}^{k}(\hat{t}_0^+,t)$. This inequality along with (\ref{eq26}), (\ref{eq31}) and (\ref{eq32}) result in the following inequalities:
\begin{eqnarray}\label{eq34}
% \nonumber to remove numbering (before each equation)
\nonumber V^{C'_M}(\hat{t}_0^+,t)-[V^k(t)-\hat{V}^{C_0}(\hat{t}_0^+)]\quad\qquad\quad\\
    \geq -K^{C_0}\lambda^{C_0}-K^{C'_M}\lambda^{C'_M} > -\delta\quad
\end{eqnarray}

Combining (\ref{eq33}) and (\ref{eq34}), it follows that $V^{C_m}(t)-{V}^{k}(t)>0$ for server $k\in S_0$ which becomes free at $t$ and for all clusters $C_m, m>0$. $\IEEEQED$

\subsection{Proof of Theorem \ref{th4}}
According to Theorem \ref{lem6}, if $\min_{l>m}V^{C_l}(t_0)-\hat{V}^{C_m}(t_0)\ge\delta$, no packet of users $j\in I_l$, $l>m$, will be chosen by servers $k\in S_m$. However, given an arbitrary initial conditions at $t_0$, it will be possible for some packets of users $j\in I_l, l>m$ to be chosen by servers $k\in S_m$ during interval $[t_0,t_1)$.

Let $C'_m$ be defined as $C'_m\triangleq\{C_m,C_{m+1},...,C_M\}$.
Accordingly, let $V^{C'_m}(t)$ be defined as $V^{C'_m}(t)\triangleq\min_{l\ge m}V^{C_l}(t)$. Given that user $i$ is backlogged at $t_1$ and according to Lemma \ref{lem2_1}, $F_i(t_1)$ is greater than or equal to $V^{C'_m}(t_1)$. Defining $\Delta_i^{init}(t_0)$ as $\Delta_i^{init}(t_0)\triangleq F_i(t_0)-V^{C'_m}(t_0)$, it follows that:
\begin{equation}\label{eq66}
  F_i(t_0,t_1)\geq V^{C'_m}(t_0,t_1)-\Delta^{init}_i(t_0).
\end{equation}

Since user $i$ is continuously backlogged during $[t_0,t_1)$, it follows that $D_i(t_0,t_1)\ge D^{C'_m}(t_0,t_1)$, where $D^{C'_m}(t_0,t_1)\triangleq\min_{k\in S'_m}D^k(t_0,t_1)$. Hence, substituting $F_i(t_0,t_1)$ from (\ref{eq12}) into (\ref{eq66}) results in:
\begin{eqnarray}
  \nonumber\frac{W_i(t_0,t_1)}{\varphi_i}&\geq& V^{C'_m}(t_0,t_1)+D_{i}(t_0,t_1)-\Delta^{init}_i(t_0)\qquad\\
                                      &\geq& V^{C'_m}(t_0,t_1)+D^{C'_m}(t_0,t_1)-\Delta^{init}_i(t_0)\label{eq67}
\end{eqnarray}

As long as the minimum amount of $V^{C'_m}(t_0,t_1)+D^{C'_m}(t_0,t_1)$ is concerned, clusters $\{C_m,C_{m+1},...,C_M\}$ could be viewed as an aggregated cluster $C'_m$ with the service rate of $r^{C_m}$. Hence, all conditions of Lemma \ref{lem4} hold for cluster $C'_m$ in conjunction with the interval $[t_0,t_1)$. Therefor:
\begin{eqnarray}\nonumber
  V^{C'_m}(t_0,t_1)+D^{C'_m}(t_0,t_1)&\ge& r^{C_m}(t_1-t_0)-K^{C'_m}\lambda^{C'_m}\quad\\
                  &\ge& r^{C_m}(t_1-t_0)-K^{C'_m}\lambda_0,\label{eq68}
\end{eqnarray}
where $K^{C'_m}=\sum_{l\ge m} K^{C_l}$. Combining (\ref{eq67}) and (\ref{eq68}) results in the lower bound in (\ref{eq36}). $\IEEEQED$

\section*{Acknowledgment}
\addcontentsline{toc}{section}{Acknowledgment}
This work was started in Feb. 2015, while the first author was in Sharif University of Technology, Tehran, Iran. We wish to thank professor S. Jamaloddin Golestani for his useful comments.

\bibliographystyle{IEEEtran}
\bibliography{IEEEfull,CM4FQ}

\end{document}